\documentclass{llncs}

\usepackage{url}
\usepackage{amsmath,amssymb}
\usepackage{mathtools}
\usepackage{mathpartir}
\usepackage{stmaryrd}
\usepackage{listings}
\usepackage{xcolor}
\usepackage{graphicx}
\usepackage{booktabs}
\usepackage{cite}
\usepackage{multirow}
\usepackage{mathrsfs}
\usepackage{tikz}
\usepackage{enumitem}
\usepackage[colorlinks=true, linkcolor=blue!70!black, citecolor=blue!70!black, urlcolor=blue!70!black]{hyperref}
\usepackage{verified-badges}
\setlist{itemsep=0.5em}
\setlist[enumerate,1]{label=\textup{(\arabic*)}}
\usetikzlibrary{arrows.meta,positioning,shapes}

\lstdefinelanguage{MSC}{
  keywords={lifeline,workflow,action,act,msg,skip,send,recv,loop,if,else,var,kind,system,user,parse,decode,max,while,until,true,false,return},
  keywordstyle=\bfseries\color{blue!70!black},
  identifierstyle=\color{black},
  commentstyle=\itshape\color{gray},
  stringstyle=\color{red!70!black},
  morestring=[b]",
  morecomment=[l]{//},
  sensitive=true,
}

\AtBeginDocument{}
\newcommand{\llangle}{{\langle\hspace{-0.2em}\langle}}
\newcommand{\rrangle}{{\rangle\hspace{-0.2em}\rangle}}

\newcommand{\sem}[1]{\llbracket #1 \rrbracket}
\newcommand{\sempref}[1]{\llbracket #1 \rrbracket_{\mathsf{pref}}}
\newcommand{\ltr}[2]{\llangle #2 \rrangle_{#1}}
\newcommand{\lpref}[2]{\llangle #2 \rrangle_{#1,\mathsf{pref}}}

\newcommand{\proj}[2]{\pi_{#1}(#2)}

\newcommand{\df}{=}

\newcommand{\Work}{\mathsf{Work}}

\newcommand{\LS}[1]{\mathscr{L}(#1)}

\newcommand{\IfT}[2]{\mathsf{if}_{\top}(#1@#2)}
\newcommand{\IfF}[2]{\mathsf{if}_{\bot}(#1@#2)}
\newcommand{\WhT}[2]{\mathsf{while}_{\top}(#1@#2)}
\newcommand{\WhF}[2]{\mathsf{while}_{\bot}(#1@#2)}

\newcommand{\ActF}[3]{\mathtt{act}\ #1 : #2 = #3}

\newcommand{\enriched}[1]{\hat{#1}}

\begin{document}
\pagestyle{plain}

\title{Provable Coordination for LLM Agents via Message Sequence Charts}

\author{Benedikt~Bollig\inst{1}\orcidID{0000-0003-0985-6115} \and
Matthias~F{\"u}gger\inst{1}\orcidID{0000-0001-5765-0301}
\and
Thomas~Nowak\inst{1,2}\orcidID{0000-0003-1690-9342}
}
\authorrunning{B.~Bollig, M.~F{\"u}gger, T.~Nowak}
%
\institute{Université Paris-Saclay, CNRS, ENS Paris-Saclay, LMF, Gif-sur-Yvette, France
\and
Institut Universitaire de France, Paris, France}

\maketitle

\begin{abstract}
Multi-agent systems built on large language models (LLMs) are difficult to reason about. Coordination errors such as deadlocks or type-mismatched messages are often hard to detect through testing. We introduce a domain-specific language for specifying agent coordination based on message sequence charts (MSCs). The language separates message-passing structure from LLM calls, tool calls, and human control points, whose outcomes remain unpredictable. We define the syntax and semantics of the language and present a syntax-directed projection that generates deadlock-free local agent programs from global coordination specifications. We illustrate the approach with a diagnosis consensus protocol and show how coordination properties can be established independently of LLM nondeterminism. We also describe a runtime planning extension in which an LLM dynamically generates a coordination workflow for which the same structural guarantees apply. An open-source Python implementation of our framework is available as ZipperGen.

\keywords{Agent coordination \and Message Sequence Charts \and Formal semantics \and Large language models}
\end{abstract}

\section{Introduction}
\label{sec:intro}

Multi-agent systems built on large language models (LLMs) are increasingly common. They combine the capabilities of individual AI components with the structural advantages of distributed systems, such as concurrency, separation of concerns, and explicit communication patterns. A typical example is a group of LLM agents that iteratively exchange assessments until they reach agreement on a task, such as giving a medical diagnosis.

These systems are, however, hard to design and to reason about.
While classical distributed systems are already notoriously error prone, the stochastic nature of LLM outputs further complicates reasoning about such systems. When agents exchange messages and collaborate, problems such as deadlocks or lost messages are difficult to rule out through testing alone.
Frameworks like LangGraph~\cite{langgraph}, AutoGen~\cite{wu2023autogen}, and CrewAI~\cite{crewai} provide convenient infrastructure for building such systems, but, to our knowledge, do not provide such formal coordination guarantees. The broader engineering challenges of such systems, such as transparency, traceability, and error attribution, are well-documented~\cite{dibia2025multiagent}. Applying formal methods to this setting is therefore natural.

\paragraph{Contributions.} For this purpose, we propose a domain-specific language (DSL) for agent coordination based on message sequence charts (MSCs)~\cite{itu-msc}, an International Telecommunication Union (ITU) standard for communication protocols. MSCs combine a visual notation with a formal semantics. They represent both the local progress of individual agents (their lifelines) and the messages exchanged between them, making causal dependencies in a system execution easy to track.

Abstractly, an MSC describes a single execution. Higher-level formalisms compose simple MSCs into sets (or languages). They can be thought of as global workflows: specifications of what a distributed system should do, independently of how it is implemented. Much research has addressed the question of how such global specifications can be realized in a distributed fashion. For MSC-based formalisms, this belongs to a broader automata-theoretic study of realizability and verification~\cite{AlurEY05,GenestMSZ06,BolligFG25}, where global specifications are compared with distributed implementations in the form of local machines. A conceptually clean way to obtain such implementations is projection: deriving a local program for each participant by restricting the global specification to that participant's view. MPST offers a related projection discipline from global to local types~\cite{HondaYC16}, with recent work also developing automata-based projection techniques~\cite{LiSWZ23}.

Our approach works with structured programs rather than automata and targets LLM components directly. Workflow-level specifications suit LLM agents well: rather than following a finite transition graph, agents maintain variable-length context, evaluate local conditions, and branch based on their own state. They also make oversight points explicit: a workflow can state that a tool call or external action is causally after a human approval, and projection preserves this order in the distributed implementation. Moreover, the projected local programs are programs again: they can be embedded in larger workflows, exposed as APIs, or composed with other components without further translation. To our knowledge, the above frameworks do not address the synthesis of provably correct distributed agent programs.

The same holds when the coordination structure is not fixed in advance but generated at runtime by a planning LLM. Asking such a system to produce deadlock-free distributed code directly is unreasonable: it would have to ensure that every send by one agent is matched by a receive of another, in every branch and under every execution order. With our language, the planner needs only to produce a syntactically valid, well-typed global workflow.
Deadlock-freedom is then guaranteed by the projection. The language is therefore not only a programming interface for humans, but also a natural target for LLM-based planning.

We extend standard MSCs with two constructs: action blocks, which treat LLM calls, tool calls, and human inputs as opaque typed functions, and owned control flow, where each conditional and loop has an explicit decider.
The first extension separates what can be guaranteed statically, namely
coordination correctness and deadlock-freedom of the message-passing structure,
from what cannot: the correctness of LLM outputs, tool results, or overall task
success. The second makes
projection syntax-directed: it is a structural recursion on the global
workflow, and the correctness proof is a structural induction. No automaton
is constructed at any point.

\paragraph{Related work.}
Our work connects two lines of research.
The first is the formal side: realizability and implementability for MSCs, communicating
automata, and related global protocol models~\cite{AlurEY05,Genest05,GenestMSZ06,BolligFG25,Lohrey03,LiW25,LiSWZ25,Stutz23,GiustoLU25} address the question of whether a global specification admits a correct distributed implementation.
Rather than solving this problem post-hoc, we follow a by-construction approach:
the global language is designed so that projection is always syntax-directed and yields
well-structured local programs directly.
Closest to our approach are~\cite{Genest05,AbdallahGHJ13}. The work~\cite{Genest05}
automatically adds the messages needed to implement global specifications with local
asynchronous controllers. In~\cite{AbdallahGHJ13} non-local
HMSCs are made local by adding messages and processes so that standard implementation techniques
can be applied. Both work with automaton-based global specifications and are not
designed for LLM-based agents.

Multiparty session types (MPSTs)~\cite{HondaVK98,HondaYC16,LiSWZ23,DenielouY12,abs-1203-0780,FossatiHY14} also project a global description to local views, but require well-formedness conditions for projection to be defined, notably that uninvolved participants have consistent local behaviors across branches.
Scribble~\cite{YoshidaHNN13} is a prominent practical MPST-based protocol language, and its refinement-typed extension~\cite{HNY20} further adds data-level verification, an orthogonal direction we do not pursue.
In contrast, our language is designed so that projection is always defined for its constructs: owned control flow together with explicit control broadcasts eliminates the need for mergeability checks on uninvolved participants in standard MPST-style projection~\cite{DenielouY12,HNY20}.
Process calculi~\cite{csp,MilnerPW92a} work in the opposite direction: local processes are composed to obtain global behavior, rather than projected from a global specification.

Choreographic programming~\cite{CarboneM13,Montesi2023}
also starts from a global program and derives one local program per participant.
Its syntax is often based on Alice-and-Bob-style descriptions of communication.
ZipperGen follows this syntactic and global-to-local style, using MSC-style
workflows as its source language.
This line includes asynchronous global programming with deadlock-freedom by
construction, as well as functional approaches such as
Pirouette~\cite{HirschG22} and HasChor~\cite{ShenKK23}. HasChor is particularly
close: during projection, its conditional construct broadcasts the condition
value, so that participants know which branch to follow.
Pact~\cite{GopinathanFNTB26} is complementary recent work on choreographies for
agentic systems. Our focus is instead on structural projection guarantees for
LLM-agent coordination.

The second line concerns practical multi-agent frameworks such as LangGraph, AutoGen,
and CrewAI~\cite{langgraph,wu2023autogen,crewai}.
These treat coordination primarily as an engineering concern, with correctness assessed through testing, tracing, or runtime monitoring rather than static projection.
We separate the coordination layer from the LLM layer and reason about the former independently.
In particular, deadlock-freedom holds even for runtime-generated workflows, which to our knowledge the above frameworks do not guarantee.

\paragraph{Outline.}
Section~\ref{sec:motivation} presents introductory examples.
Sections~\ref{sec:syntax}--\ref{sec:distributed} define the syntax and semantics of global and local programs.
Section~\ref{sec:implementation} presents the projection, proves correctness and deadlock-freedom, and describes the ZipperGen implementation (\url{https://zippergen.io}).
Section~\ref{sec:examples} illustrates the framework.
Section~\ref{sec:conclusion} concludes.
Proofs are deferred to the appendix.
All definitions have been formalized and all results have been machine-checked in the Lean 4 proof assistant \cite{Moura2021}. Clicking a badge \leanformalized{} or \leanproof{} opens the corresponding Lean source file.

\paragraph*{Acknowledgments.}

This work was partly supported by the ``France 2030'' government investment plan
managed by ANR, under the reference ANR-23-PEIA-0006.

\section{Introductory Examples}
\label{sec:motivation}

We illustrate our language on a simple multi-agent workflow.

\begin{lstlisting}[caption={Review workflow (lifeline and most variable declarations omitted).},label={lst:review-before-action}]
workflow reviewed_execution(task: str @ Planner) -> str {
    var critique: str = "no review" @ Orchestrator
    act Planner : (plan, plan_needs_review) = make_plan(task)

    if plan_needs_review@Planner then {
        msg Planner(plan) -> Reviewer(plan)
        act Reviewer : critique = review_plan(plan)
        msg Reviewer(critique) -> Orchestrator(critique)
    } else {
        act Planner : review_skipped = record_no_review(plan)
    }

    msg Planner(plan) -> Executor(plan)
    act Executor : result = execute_plan(plan)
    msg Executor(result) -> Orchestrator(result)
    act Orchestrator : summary = finalize(critique, result)
    return summary @ Orchestrator
}
\end{lstlisting}

The program in Listing~\ref{lst:review-before-action} specifies a
review-and-execution workflow with four lifelines:
$\mathit{Planner}$, $\mathit{Reviewer}$, $\mathit{Executor}$, and
$\mathit{Orchestrator}$.
Planner receives a task, produces a plan, and decides whether it needs
review.
In the first scenario ($\mathit{plan\_needs\_review}$ is true), Planner
sends the plan to a human Reviewer, who critiques it and forwards the critique
to Orchestrator.
In the second scenario, the review is skipped.
In both cases, Planner forwards the plan to Executor, whose result is
sent to Orchestrator, which finalizes the outcome.

Figure~\ref{fig:msc-review} gives a graphical view of the workflow,
abstracting away variable declarations and types.
The dashed box separates the two branches of the conditional;
the figure thus represents two MSCs, one per scenario.
Action blocks such as
$\ActF{\mathit{Planner}}{(\mathit{plan},\mathit{plan\_needs\_review})}{\mathit{make\_plan}(\mathit{task})}$
model opaque computations. They may be LLM calls, tool calls, or human actions;
in this example, $\mathit{review\_plan}$ is a human review action.
The conditional
$\mathtt{if}\ \mathit{plan\_needs\_review}@\mathit{Planner}$ has an owned
decider: only $\mathit{Planner}$ chooses the branch, and
$\mathit{Reviewer}$ participates only in the true branch.
The last action depends on both $\mathit{critique}$ and
$\mathit{result}$. Thus,
computations remain opaque, but the coordination around them is made
explicit.
When review is required, review and execution may proceed concurrently;
$\mathit{Orchestrator}$ nevertheless receives the critique before finalizing the
result. When review is skipped, $\mathit{Planner}$ records this explicitly before
execution continues.
The \texttt{return} statement designates the output lifeline. It is not part of the MSC but specifies which lifeline's value is returned to the caller.
Note that the critique is sent to $\mathit{Orchestrator}$ rather than back to $\mathit{Planner}$: the review influences the final outcome but does not block execution.

\begin{figure}[t]
\centering
\begin{tikzpicture}[scale=0.9,thick,
  action/.style={circle, fill=black, minimum size=5pt, inner sep=0pt},
  lbl/.style={font=\scriptsize},
  msg/.style={-{Triangle[length=4pt,width=4pt]}, line width=0.8pt}]

\def\xP{0} \def\xR{2.6} \def\xO{5.2} \def\xE{7.8}

\node[draw, minimum width=1.4cm, minimum height=0.55cm, font=\scriptsize] at (\xP, 0) {Planner};
\node[draw, minimum width=1.4cm, minimum height=0.55cm, font=\scriptsize] at (\xR, 0) {Reviewer};
\node[draw, minimum width=1.4cm, minimum height=0.55cm, font=\scriptsize] at (\xO, 0) {Orchestr.};
\node[draw, minimum width=1.4cm, minimum height=0.55cm, font=\scriptsize] at (\xE, 0) {Executor};

\draw (\xP, -0.28) -- (\xP, -7.4);
\draw (\xR, -0.28) -- (\xR, -7.4);
\draw (\xO, -0.28) -- (\xO, -7.4);
\draw (\xE, -0.28) -- (\xE, -7.4);

\node[action, label={[lbl]right:\itshape make\_plan}] at (\xP, -0.6) {};

\draw[dashed, rounded corners=3pt] (-1.0, -1.3) rectangle (8.8, -4.6);
\node[lbl, anchor=south west] at (-1.0, -1.3) {[\textit{plan\_needs\_review}]};

\draw[msg] (\xP, -1.9) -- (\xR, -1.9) node[midway, lbl, above] {plan};
\node[action, label={[lbl]right:\itshape review\_plan}] at (\xR, -2.5) {};
\draw[msg] (\xR, -3.2) -- (\xO, -3.2) node[midway, lbl, above] {critique};

\draw[dashed, line width=0.6pt] (-1.0, -3.7) -- (8.8, -3.7);
\node[lbl, anchor=south west] at (-1.0, -3.7) {[\textit{else}]};

\node[action, label={[lbl]right:\itshape no review}] at (\xP, -4.15) {};

\draw[msg] (\xP, -5.2) -- (\xE, -5.2) node[midway, lbl, above] {plan};
\node[action, label={[lbl]left:\itshape execute\_plan}] at (\xE, -5.9) {};
\draw[msg] (\xE, -6.5) -- (\xO, -6.5) node[midway, lbl, above] {result};
\node[action, label={[lbl]right:\itshape finalize}] at (\xO, -7.1) {};

\end{tikzpicture}
\caption{MSC for \texttt{reviewed\_execution}. Filled dots are local actions (opaque computations), arrows are messages labelled by the value sent. Planner owns the conditional. The true branch involves Reviewer. In the false branch, Planner records that review was skipped. After the conditional, Planner forwards the plan to Executor. In the true branch, Reviewer and Executor may therefore proceed concurrently. Orchestrator finalizes the result.}
\label{fig:msc-review}
\end{figure}

The program above is written from a global viewpoint: it describes the full
coordination of all lifelines in a single text.
In practice, each agent runs its own local program and communicates with
the others exclusively by sending and receiving messages over first-in, first-out (FIFO) channels.
FIFO channels allow agents to
overlap their work. A sender can deposit a message and continue immediately, and
the receiver picks it up whenever its local execution reaches the corresponding
receive. This allows agents that do not depend on each other to proceed in parallel.

One obvious implementation strategy is simple polling: each agent repeatedly checks its
incoming channels and processes the first available message. But this is
not sufficient even for programs satisfying the local property of~\cite{GenestMSZ06}
or the local-choice property of~\cite{Ben-AbdallahL97,HelouetJ00}. Consider the true branch of \texttt{reviewed\_execution}.
Once $\mathit{Planner}$ has sent the plan to $\mathit{Reviewer}$, its
participation in the if-block is complete, and it immediately sends the plan to
$\mathit{Executor}$. As a result, $\mathit{Reviewer}$ and $\mathit{Executor}$
run concurrently and both send their outputs to $\mathit{Orchestrator}$.
Because the channels are asynchronous, $\mathit{Executor}$ may finish and
deliver $\mathit{result}$ to $\mathit{Orchestrator}$ before $\mathit{Reviewer}$
has even dequeued the plan.
This prefix execution is reachable under polling, yet it cannot be extended to
a valid completion: $\mathit{Orchestrator}$ dequeues $\mathit{result}$ before
$\mathit{critique}$ arrives. Our projection overcomes this by
having $\mathit{Planner}$ broadcast an explicit control message that tells
$\mathit{Orchestrator}$ which branch was taken.
$\mathit{Orchestrator}$'s local program then enforces the correct receive order
(cf.~Section~\ref{sec:implementation}).

In fact, such control messages are necessary in general.
Consider the program below, where $A$ repeatedly
tosses a coin and $B$ performs a local step in each iteration:
\[
 \mathtt{while}\ \mathit{heads}\texttt{@}A\ \mathtt{do}\ \{\,
    \mathtt{act}\ A: \mathit{heads} = \mathit{toss}()\ ;\
    \ActF{B}{\mathit{b}}{\mathit{step}()} \\
    \,\}\ \mathtt{exit}\ \{\,\varepsilon\,\}
\]
$A$ and $B$ must execute the same number of iterations, yet the iteration
count is determined solely by $A$'s coin tosses, which are private to $A$.
There is no user message that $B$ could observe to recover it.
Any correct distributed implementation must therefore send $B$ an explicit
control signal at each iteration, exactly as our projection does.%

Most work on MSC realizability, e.g., \cite{AlurEY05,Lohrey03,LiSWZ25,GenestMSZ06,LiW25,Stutz23}, formulates the question as a decision problem (given a global specification, does a deadlock-free distributed implementation exist?) or identifies sufficient syntactic criteria for it. In practice, one wants simply to write a program and have it run. The constraint is even less realistic when workflows are generated at runtime by a planning LLM, which cannot be expected to produce only specifications satisfying a non-trivial syntactic condition. Our language is designed with this in mind.

\section{Global Workflows}
\label{sec:syntax}

A global workflow describes the full coordination structure of a
multi-agent system in a single, centralized specification.
Rather than programming each agent separately, the developer writes one workflow
that captures which agents communicate, in what order, and who owns each
control decision.
This section defines the formal syntax and semantics of such global workflows.

\subsection{Syntax}

Throughout this section, fix a finite set $\mathscr{L}$ of lifelines and a set
$\mathscr{F}$ of action symbols. We use $A,B\in\mathscr{L}$ for lifelines,
$f\in\mathscr{F}$ for local computations such as LLM calls or deterministic
helper functions, $\vec{x},\vec{y}$ for payload tuples, and $c$ for Boolean
conditions over local variables of designated lifelines.
Communication proceeds over directed unbounded FIFO channels: for every ordered pair
$(A,B)$ of distinct lifelines there is one channel $A\to B$.

\begin{definition}[Abstract syntax \leanformalized{https://github.com/zippergen-io/zippergen-lean/blob/isola-camera-ready/isola/MSCAgents/Syntax.lean}]
\label{def:phase-syntax}
The set $\Work$ of
global workflows is generated by the following grammar:
\begin{align*}
P ~::=~ &\varepsilon
~\mid~
\mathtt{msg}\ A(\vec{x}) \to B(\vec{y})
~\mid~
\ActF{A}{\vec{y}}{f(\vec{x})}
~\mid~ P_1 ; P_2
~\mid~\\
&\mathtt{if}\ c@B\ \mathtt{then}\ P_\top\ \mathtt{else}\ P_\bot
~\mid~
\mathtt{while}\ c@B\ \mathtt{do}\ P_{\mathit{body}}\ \mathtt{exit}\ P_{\mathit{exit}}
\end{align*}
For message statements, we require distinct endpoints: in
$\mathtt{msg}\ A(\vec{x}) \to B(\vec{y})$ we assume
$A \neq B$ (no self channels).
Payload components in $\vec{x} = (x_1, \ldots, x_m)$ and $\vec{y}=(y_1, \ldots, y_n)$ may be variables
or concrete values, such as the truth values $\top$ (true) and $\bot$ (false).
In action statements $\ActF{A}{\vec{y}}{f(\vec{x})}$, the outputs $\vec{y}$ must be variables (assignment targets), while the inputs $\vec{x}$ may be variables or constants.
Each lifeline maintains its own local variable store.
Variable names are scoped per lifeline,
so distinct lifelines may use the same name independently.
\end{definition}

We emphasize that the grammar contains no return statement.
A return is not a coordination step: it does not send a message, perform a
computation, or affect any other lifeline.
It names the lifeline that produces the final value and the variable holding it.
We therefore treat it as a declaration outside the formal language and it plays no
role in the projection or the correctness proofs.
In the ZipperGen implementation, \texttt{return}~\texttt{var}~\texttt{@}~$A$
serves as this declaration and is checked syntactically.

The statements have the following intuitive meaning (Listing~\ref{lst:review-before-action}
illustrates all of them except the while construct):
\begin{itemize}
  \item $\mathtt{msg}\ A(\vec{x}) \to B(\vec{y})$: a communication step in which $A$ sends a payload tuple and $B$ receives it, possibly binding components to local variables.
  \item $\ActF{A}{\vec{y}}{f(\vec{x})}$: a local computation performed by lifeline $A$; in Listing~\ref{lst:review-before-action}, examples are $\mathit{make\_plan}$, $\mathit{review\_plan}$, and $\mathit{finalize}$.
  \item $\mathtt{if}\ c@B\ \mathtt{then}\ P_{\top}\ \mathtt{else}\ P_{\bot}$: a conditional whose branch is chosen exclusively by lifeline $B$. The annotation $@B$ records this ownership explicitly, even though the branch programs $P_{\top}$ and $P_{\bot}$ may themselves involve several lifelines.
  \item $\mathtt{while}\ c@B\ \mathtt{do}\ P_{\mathit{body}}\ \mathtt{exit}\ P_{\mathit{exit}}$: an owned loop in which $B$ decides whether another iteration of $P_{\mathit{body}}$ is executed or whether control exits through $P_{\mathit{exit}}$. Again, the body and exit programs may involve multiple lifelines.
\end{itemize}

\subsection{Semantics}
\label{sec:semantics}

The semantics assigns to each global workflow a set of MSCs, each describing one
possible global execution. Figure~\ref{fig:msc-review} gives the intended
intuition, but it should be read as depicting two MSCs, one for each branch of
the conditional shown by the dashed case distinction. In general, an execution
is represented by one vertical word per lifeline, together with FIFO-matching
edges between sends and receives on each channel. Local actions appear as
internal events on a single lifeline, while owned conditionals and loops
contribute explicit choice events recording which branch or loop outcome was
selected. We now make this precise by defining MSCs as tuples of
local words.

\begin{definition}[Local Alphabets \leanformalized{https://github.com/zippergen-io/zippergen-lean/blob/isola-camera-ready/isola/MSCAgents/Alphabets.lean}]
\label{def:alphabets}
For each lifeline $A\in\mathscr{L}$, the (infinite) local-word alphabet
$\Sigma_A$ contains:
\begin{itemize}
  \item send letters $\mathtt{send}\ A(\vec{x})\to B$ and receive letters $\mathtt{recv}\ A(\vec{y})\leftarrow B$,
        where $\vec{x}$ and $\vec{y}$ range over tuples of constants and variables,
  \item local action letters $\ActF{A}{\vec{y}}{f(\vec{x})}$, where $\vec{x}$ ranges over tuples of constants and variables and $\vec{y}$ over tuples of variables,
  \item choice letters $\IfT{c}{A},\IfF{c}{A},\WhT{c}{A},\WhF{c}{A}$ for condition~$c$.
\end{itemize}
\end{definition}

We let $\Sigma_A^*$ denote the set of finite words over $\Sigma_A$,
which includes the empty word $\varepsilon$. We use $u \cdot v$ or $uv$ for the concatenation of words $u$ and $v$.

Though an MSC will be defined as a tuple of local words, one must also specify which send
events are matched with which receive events across each channel. Since our
execution model uses FIFO channels, this matching is determined canonically by
pairing the first send on a channel with the first receive on that channel, the
second send with the second receive, and so on. The next definition extracts
this channel-wise FIFO matching from a tuple of local words.

\begin{definition}[Tuple FIFO Relations \leanformalized{https://github.com/zippergen-io/zippergen-lean/blob/isola-camera-ready/isola/MSCAgents/FIFORelation.lean}]
\label{def:tuple-fifo}
Fix a tuple of local words $M=(w_A)_{A\in\mathscr L}$ with
$w_A\in\Sigma_A^*$ for each $A\in\mathscr L$.
We write $|w_A|$ for the length of $w_A$ and $w_A[j]$ for its $j$th symbol
(starting at index 1). The event set of $M$ is
\[
E_M\df\{(A,i)\mid A\in\mathscr L,\ 1\le i\le |w_A|\}.
\]

For each channel $A\to B$, define the send-event sequence
$
\mathsf{snd}_{A\to B}(M)=(s_1,\ldots,s_m)
$
where $(s_i)_i$ are exactly the events $(A,j)\in E_M$ such that
$w_A[j]$ is of the form $\mathtt{send}\ A(\cdot)\to B$, listed in increasing local index $j$.
Define analogously the receive-event sequence
$
\mathsf{rcv}_{B\leftarrow A}(M)=(r_1,\ldots,r_n)
$
where $(r_i)_i$ are exactly the events $(B,j)\in E_M$ such that
$w_B[j]$ is of the form $\mathtt{recv}\ B(\cdot)\leftarrow A$, again listed in increasing local index.

The FIFO relation $\lhd_M\subseteq E_M\times E_M$ contains, for each channel
$A\to B$, the pairs $(s_i,r_i)$ for all $1\le i\le \min\{m,n\}$.
\end{definition}

\begin{definition}[Payload Matching \leanformalized{https://github.com/zippergen-io/zippergen-lean/blob/isola-camera-ready/isola/MSCAgents/PayloadMatching.lean}]
\label{def:payload-matching}
Let $\vec{x}$ be a sender payload tuple and let $\vec{y}$ be a
receiver payload tuple, whose components may be variables or constants.
We write $\mathsf{match}(\vec{x},\vec{y})$ if the following hold:
\begin{enumerate}
  \item $|\vec{x}|=|\vec{y}|$;
  \item for each index $i$:
  if $y_i$ is a constant, then $x_i$ is the same constant;
  if $y_i$ is a variable of type $\tau$, then $x_i$ is either a constant of
  type $\tau$ or a variable of type~$\tau$.
\end{enumerate}
Applied to a source-level $\mathtt{msg}\ A(\vec{x})\to B(\vec{y})$, this condition
is checked statically as part of well-typedness (see Definition~\ref{def:well-typed} below).
When $y_i$ is a constant, the distributed implementation can additionally verify
at runtime that the received value equals that constant, raising an error if not
(a lightweight consistency check requiring no extra specification).
The same conditions also describe runtime payload agreement for the control-tag
payloads introduced later by the projection.
\end{definition}

Payload matching isolates the only place where sender and receiver payloads are
allowed to differ syntactically: receivers may bind variables or require
distinguished constants. The raw workflow grammar itself does not enforce that
messages match in this sense, nor that actions and guards are locally
consistent. We therefore restrict attention to a well-typed fragment.

\begin{definition}[Well-Typed Global Workflows \leanformalized{https://github.com/zippergen-io/zippergen-lean/blob/isola-camera-ready/isola/MSCAgents/WellTyped.lean}]
\label{def:well-typed}
A global workflow $P$ is \emph{well typed} if all the following conditions
hold:
\begin{enumerate}
  \item every message subprogram
  $\mathtt{msg}\ A(\vec{x}) \to B(\vec{y})$
  satisfies $\mathsf{match}(\vec{x},\vec{y})$;
  \item every action subprogram
  $\ActF{A}{\vec{y}}{f(\vec{x})}$
  is locally type-correct, i.e., the invocation $f(\vec{x})$ is assumed
  to produce values compatible with the variables~$\vec{y}$;
  \item every control condition $c@B$ occurring in an $\mathtt{if}$ or
  $\mathtt{while}$ statement is well typed for lifeline $B$, i.e.,
  $c$ is a Boolean expression over variables available at $B$.
\end{enumerate}
Throughout the remainder of the paper, we consider only well-typed global workflows.
\end{definition}

In fact, only item~(1) is needed to ensure that the inductive MSC semantics below is
well defined. Items~(2)--(3) record additional static consistency
conditions expected of executable programs, but are not used explicitly in the
semantic constructions below.

We can now say which tuples of local words actually represent executions.
Intuitively, a valid MSC must have every receive matched by a send, matched
events must agree on their payloads, and the causal dependencies induced by
local order and communication must not form cycles.

\begin{definition}[MSC \leanformalized{https://github.com/zippergen-io/zippergen-lean/blob/isola-camera-ready/isola/MSCAgents/MSC.lean}]
\label{def:msc}
An \emph{MSC} is a tuple of local words
$M=(w_A)_{A\in\mathscr{L}}$ with $w_A\in\Sigma_A^*$ such that:
\begin{enumerate}
  \item every receive event in $w_A$ is matched in $\lhd_M$;
  \item 
  whenever a send event with payload tuple $\vec{x}$ is matched to a
  receive event with payload tuple $\vec{y}$, they satisfy
  $\mathsf{match}(\vec{x},\vec{y})$
  (in particular, receiver constants are matched by value equality);
  \item the transitive closure of the binary relation ${\lhd_M} \cup \{((A,i),(A,i+1))\mid A\in\mathscr{L},\,i\in\{1,\ldots,|w_A|-1\}\}$ is a strict partial order (i.e., acyclic).
\end{enumerate}
We call $M$ \emph{complete} if every send event is matched in $\lhd_M$.
\end{definition}

To compose executions sequentially, we use componentwise concatenation:
$M_1\circ M_2 \df (u_Av_A)_{A\in\mathscr{L}}$
for
$M_1=(u_A)_A,\ M_2=(v_A)_A$.

The concatenation of two MSCs is not necessarily an MSC.
However, we have the following closure facts:

\begin{lemma}[Concatenation with a Complete Prefix \leanproof{https://github.com/zippergen-io/zippergen-lean/blob/isola-camera-ready/isola/MSCAgents/MSC.lean\#L1159}]
\label{lem:concat-msc}
Let $M_1$ be a complete MSC and $M_2$ be an MSC. Then
$M_1\circ M_2$ is an MSC.
If moreover $M_2$ is complete, then $M_1\circ M_2$ is complete.
\end{lemma}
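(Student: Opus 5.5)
The plan is to describe the FIFO relation of $M=M_1\circ M_2$ explicitly in terms of $\lhd_{M_1}$ and $\lhd_{M_2}$, and then check the three conditions of Definition~\ref{def:msc} one by one. Write $M_1=(u_A)_A$ and $M_2=(v_A)_A$. Embed the events of $M_1$ into $E_M$ unchanged, and send an event $(A,i)$ of $M_2$ to $(A,|u_A|+i)$.

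Fix a channel $A\to B$. The key observation uses completeness of $M_1$ together with condition~(1) for $M_1$. Every send on $A\to B$ in $u_A$ is matched, and every receive in $u_B$ is matched. Since the matching pairs the $i$th send with the $i$th receive, the number $k$ of sends on $A\to B$ in $u_A$ equals the number of receives on that channel in $u_B$. Consequently:
\begin{itemize}
\item the send sequence $\mathsf{snd}_{A\to B}(M)$ is $\mathsf{snd}_{A\to B}(M_1)$ followed by the shifted $\mathsf{snd}_{A\to B}(M_2)$;
\item the receive sequence $\mathsf{rcv}_{B\leftarrow A}(M)$ is likewise $\mathsf{rcv}_{B\leftarrow A}(M_1)$ followed by the shifted $\mathsf{rcv}_{B\leftarrow A}(M_2)$, and both first parts have length $k$.
\end{itemize}
Index-wise pairing therefore gives $\lhd_M=\lhd_{M_1}\cup\mathrm{shift}(\lhd_{M_2})$. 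Establishing this decomposition is the main obstacle, since it is the only place completeness is used; everything else follows from it.

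Conditions~(1) and~(2) follow immediately from the decomposition. Each receive of $M$ comes either from $M_1$ or from $M_2$, and it is matched to the same partner it had there, with the same payloads. So matchedness and $\mathsf{match}(\vec x,\vec y)$ are inherited from $M_1$ and $M_2$. The same argument shows that if $M_2$ is also complete, every send of $M$ is matched, so $M$ is complete.

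For acyclicity, condition~(3), no edge of $\lhd_M$ or of the local successor relation leads from an $M_2$-event to an $M_1$-event. For message edges this holds by the decomposition. For local edges it holds because $u_A$ precedes $v_A$ on each lifeline. So any cycle would have to lie entirely within the $M_1$-events or entirely within the $M_2$-events. Restricted to the $M_1$-events, the relation is exactly the generating relation of $M_1$. Restricted to the $M_2$-events, it is the shifted generating relation of $M_2$; the only local edges lost are those from $u_A$ into $v_A$, and these cross between the parts. Both restrictions are acyclic by hypothesis, so the transitive closure is a strict partial order.
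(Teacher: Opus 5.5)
Your proof is correct and follows the same route as the paper's: receives stay matched within their own factor, payload compatibility is inherited, and acyclicity holds because no local or message edge leads from an $M_2$-event back to an $M_1$-event, so any cycle would lie inside one factor. Your explicit channel-count step spells out what the paper only asserts when it says suffix receives are matched inside $M_2$. That step uses completeness together with condition~(1) for $M_1$ to force equally many sends and receives on each channel, which gives $\lhd_{M_1\circ M_2}=\lhd_{M_1}\cup\mathrm{shift}(\lhd_{M_2})$.
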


\begin{lemma}[Stripping a Complete Prefix \leanproof{https://github.com/zippergen-io/zippergen-lean/blob/isola-camera-ready/isola/MSCAgents/MSC.lean\#L1624}]
\label{lem:strip-complete-prefix}
Conversely, let $C=(u_A)_A$ be a complete MSC and let $N=(v_A)_A$ be a tuple of
local words.
If $C\circ N=(u_Av_A)_A$ is an MSC, then $N$ is an MSC.
If $C\circ N$ is a complete MSC, then $N$ is a complete MSC.
\end{lemma}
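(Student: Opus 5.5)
The plan is to relate the FIFO structure of $C\circ N$ to those of $C$ and $N$, and then check the three MSC conditions (and completeness) for $N$ by transporting them from $C\circ N$. Throughout, write $k_A=|u_A|$. Events of $N$ are $(A,i)$, and they correspond to the events $(A,k_A+i)$ of $C\circ N$ under the shift $\sigma(A,i)=(A,k_A+i)$.

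\textbf{Channel sequences.} Fix a channel $A\to B$. Since $C$ is complete, every send of $C$ on $A\to B$ is matched in $\lhd_C$. Hence the numbers of sends and receives of $C$ on this channel coincide, say $p$: every receive is matched by condition (1), and every send is matched by completeness. In $C\circ N$ the send sequence on $A\to B$ is the $p$ sends of $u_A$ followed by the shifted sends of $v_A$, and likewise for receives. The $i$-th send of $N$ is therefore the $(p+i)$-th send of $C\circ N$, and the $i$-th receive of $N$ is the $(p+i)$-th receive of $C\circ N$. Consequently $(s,r)\in\lhd_N$ iff $(\sigma(s),\sigma(r))\in\lhd_{C\circ N}$. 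Moreover, no pair of $\lhd_{C\circ N}$ connects a $C$-event with a shifted $N$-event, because the first $p$ sends are matched exactly to the first $p$ receives. This index bookkeeping is the main obstacle: one must argue carefully that the counts in $C$ are equal on both sides of every channel, which uses both completeness and condition (1) for $C$.

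\textbf{MSC conditions for $N$.} Given this correspondence, each condition transfers directly.
\begin{enumerate}
\item Every receive of $N$ maps under $\sigma$ to a receive of $C\circ N$, which is matched there. Its partner is a shifted $N$-send, since cross pairs do not exist, and hence the receive is matched in $\lhd_N$.
\item The letters at $(A,i)$ in $N$ and at $\sigma(A,i)$ in $C\circ N$ are identical. Payload matching is therefore inherited pair by pair.
\item The local successor edges of $N$ map under $\sigma$ to local successor edges of $C\circ N$, and FIFO edges map to FIFO edges. A cycle in the relation generating the order of $N$ would thus map injectively to a cycle in $C\circ N$, contradicting acyclicity there.
\end{enumerate}

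\textbf{Completeness.} If $C\circ N$ is complete, every send of $N$ maps to a send of $C\circ N$ that is matched there. By the same no-cross-pairs argument it is matched in $\lhd_N$, so $N$ is complete. No earlier lemma is needed beyond the definitions, although the channel-shift correspondence established here is presumably the same one underlying Lemma~\ref{lem:concat-msc} and could be factored out and shared.
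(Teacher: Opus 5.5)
Your proposal is correct and follows essentially the same route as the paper. Both proofs observe that each channel of $C$ carries equally many sends and receives, so suffix events are FIFO-matched only among themselves at indices shifted by that count. Both then inherit payload matching and acyclicity by restriction, and both reuse the same channel-count argument for completeness. Your note that equal counts need both completeness and condition (1) of $C$ states explicitly what the paper's one-line claim leaves implicit.
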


With these ingredients in place, the semantics can be defined
compositionally. The base cases are canonical one-step MSCs for the atomic
workflow constructs. Sequential composition uses~concatenation of MSCs, and
owned conditionals and loops contribute explicit choice events on the deciding
lifeline.

\begin{definition}[Canonical MSCs \leanformalized{https://github.com/zippergen-io/zippergen-lean/blob/isola-camera-ready/isola/MSCAgents/CanonicalMSC.lean}]
\label{def:base-msc}
Fix $A,B\in\mathscr{L}$, choice letter $\gamma\in\{\IfT{c}{B},\IfF{c}{B},\WhT{c}{B},\WhF{c}{B}\}$ for
some condition~$c$, tuples $\vec{x},\vec{y}$, and action $f\in\mathscr{F}$. We define the following canonical MSCs:
\begin{itemize}
\item$M_{\varepsilon}$ is the MSC with no events on any lifeline;
\item $M_{\gamma}^{B}
  =(w_X)_{X\in\mathscr{L}}$, where
  $
  w_B=\gamma$ and $w_X=\varepsilon~(X\neq B)
  $;
    \item $M_{\vec{x},\vec{y}}^{A,f}
  =(w_X)_{X\in\mathscr{L}}$, where
  $w_A=\ActF{A}{\vec{y}}{f(\vec{x})}$ and
  $w_X=\varepsilon~(X\neq A)$;

  \item $M_{\vec{x},\vec{y}}^{A\to B}
  =(w_X)_{X\in\mathscr{L}}$, where
  $w_A=\mathtt{send}\ A(\vec{x})\to B$,
  $w_B=\mathtt{recv}\ B(\vec{y})\leftarrow A$, and
  $w_X=\varepsilon~(X\notin\{A,B\})$.
\end{itemize}
\end{definition}

\begin{figure}[t]
\centering
\begin{tikzpicture}[scale=0.78,thick,
  action/.style={circle, fill=black, minimum size=5pt, inner sep=0pt},
  lbl/.style={font=\scriptsize},
  msg/.style={-{Triangle[length=4pt,width=4pt]}, line width=0.8pt},
  panel/.style={draw=black!30, rounded corners=2pt}]

\begin{scope}[xshift=0cm]
  \draw[panel] (-0.35,0.6) rectangle (1.75,-1.55);
  \node[lbl] at (0.7, 0.28) {$M_{\varepsilon}$};
  \node[lbl] at (0,0.16) {$A$};
  \node[lbl] at (1.4,0.16) {$B$};
  \draw (0,0) -- (0,-1.3);
  \draw (1.4,0) -- (1.4,-1.3);
\end{scope}

\begin{scope}[xshift=3cm]
  \draw[panel] (-0.35,0.6) rectangle (1.75,-1.55);
  \node[lbl] at (0.7, 0.28) {$M_{\gamma}^{B}$};
  \node[lbl] at (0,0.16) {$A$};
  \node[lbl] at (1.4,0.16) {$B$};
  \draw (0,0) -- (0,-1.3);
  \draw (1.4,0) -- (1.4,-1.3);
  \node[action] at (1.4,-0.65) {};
  \node[lbl] at (1.1,-0.65) {$\gamma$};
\end{scope}

\begin{scope}[xshift=6cm]
  \draw[panel] (-0.35,0.6) rectangle (2.05,-1.55);
  \node[lbl] at (0.85, 0.28) {$M_{\vec{x},\vec{y}}^{A,f}$};
  \node[lbl] at (0,0.16) {$A$};
  \node[lbl] at (1.7,0.16) {$B$};
  \draw (0,0) -- (0,-1.3);
  \draw (1.7,0) -- (1.7,-1.3);
  \node[action, label={[lbl]right:\itshape $f(\vec{x})/\vec{y}$}] at (0,-0.65) {};
\end{scope}

\begin{scope}[xshift=9.3cm]
  \draw[panel] (-0.35,0.6) rectangle (2.15,-1.55);
  \node[lbl] at (0.9, 0.28) {$M_{\vec{x},\vec{y}}^{A\to B}$};
  \node[lbl] at (0,0.16) {$A$};
  \node[lbl] at (1.8,0.16) {$B$};
  \draw (0,0) -- (0,-1.3);
  \draw (1.8,0) -- (1.8,-1.3);
  \draw[msg] (0,-0.65) -- (1.8,-0.65) node[midway, lbl, above] {$\vec{x},\vec{y}$};
\end{scope}

\end{tikzpicture}
\caption{Canonical one-step MSCs used in Definition~\ref{def:base-msc}. The schematic figure shows the empty MSC, a local choice event, a local action event, and a single message exchange.}
\label{fig:base-mscs}
\end{figure}

\begin{definition}[Inductive MSC Semantics \leanformalized{https://github.com/zippergen-io/zippergen-lean/blob/isola-camera-ready/isola/MSCAgents/InductiveSemantics.lean}]
\label{def:inductive-msc}
The semantics $\sem{P}$ of a well-typed global workflow $P$ is a set of MSCs.
It is defined by structural
recursion with the rules given in Table~\ref{tab:inductive-msc}.
\end{definition}

\begin{table}[t]
\centering
\caption{Inductive MSC semantics of global workflows.}
\label{tab:inductive-msc}
$
\begin{array}{rcl}
\sem{\varepsilon}
&\df& \{M_{\varepsilon}\} \\[0.3em]
\sem{\mathtt{msg}\ A(\vec{x}) \to B(\vec{y})}
&\df& \{M_{\vec{x},\vec{y}}^{A\to B}\} \\[0.3em]
\sem{\ActF{A}{\vec{y}}{f(\vec{x})}}
&\df& \{M_{\vec{x},\vec{y}}^{A,f}\} \\[0.3em]
\sem{P_1 ; P_2}
&\df& \{M_1\circ M_2 \mid M_1\in\sem{P_1},\ M_2\in\sem{P_2}\} \\[0.3em]
\sem{\mathtt{if}\ c@B\ \mathtt{then}\ P_{\top}\ \mathtt{else}\ P_{\bot}}
&\df&
\begin{array}[t]{@{}l@{}}
\{M_{\IfT{c}{B}}^{B}\circ M \mid M\in\sem{P_{\top}}\} \\
\cup\,\{M_{\IfF{c}{B}}^{B}\circ M \mid M\in\sem{P_{\bot}}\}
\end{array}
\\[1.5em]
\sem{\mathtt{while}\ c@B\ \mathtt{do}\ P_{\mathit{body}}\ \mathtt{exit}\ P_{\mathit{exit}}}
&\df&
\bigcup_{k\ge 0}\left\{
\begin{array}{@{}l@{}}
M_1^{\top}\circ\cdots\circ M_k^{\top}\circ M_{\mathit{exit}}^{\bot}
\ \mid\\[0.2em]
M_1,\ldots,M_k\in\sem{P_{\mathit{body}}},\\
M_{\mathit{exit}}\in\sem{P_{\mathit{exit}}}
\end{array}
\right\}
\end{array}
$\\[2ex]
where
$M_i^{\top}\df M_{\WhT{c}{B}}^{B}\circ M_i$ and
$M_{\mathit{exit}}^{\bot}\df M_{\WhF{c}{B}}^{B}\circ M_{\mathit{exit}}$.
\end{table}

One observes that for every well-typed global workflow $P$ and every $M\in\sem{P}$, the MSC $M$ is complete.
This follows by structural induction on $P$ from the semantic clauses.

\section{Distributed Agent Programs}
\label{sec:distributed}

A distributed agent program assigns one local program to each lifeline.
The local program at $A\in\mathscr L$ describes the steps visible at $A$:
sends and receives on $A$'s channels, local actions, and control steps arising
from branching decisions that $A$ either owns or observes via a received control signal.
\begin{definition}[Local Syntax \leanformalized{https://github.com/zippergen-io/zippergen-lean/blob/isola-camera-ready/isola/MSCAgents/LocalSyntax.lean}]
For each lifeline $A\in\mathscr L$, let $\mathsf{LocProg}_A$ be the set of local
programs at $A$, generated by the grammar
\begin{align*}
S::={}& \varepsilon \mid \mathtt{send}\ A(\vec{x})\to B\mid
\mathtt{recv}\ A(\vec{y})\leftarrow B\mid\ActF{A}{\vec{y}}{f(\vec{x})}\mid
S_1;S_2\mid\\
&\mathtt{if}\ c@A\ \mathtt{then}\ S_{\top}\ \mathtt{else}\ S_{\bot}\mid
\mathtt{if}\ A(\vec{y})\leftarrow B\ \mathtt{then}\ S_{\top}\ \mathtt{else}\ S_{\bot}\mid
\\
&\mathtt{while}\ c@A\ \mathtt{do}\ S_{\mathit{body}}\ \mathtt{exit}\ S_{\mathit{exit}}
\mid \mathtt{while}\ A(\vec{y})\leftarrow B\ \mathtt{do}\ S_{\mathit{body}}\ \mathtt{exit}\ S_{\mathit{exit}}
\end{align*}
where $c$ ranges over source guards.
Here, each occurrence of $B$ ranges over lifelines in $\mathscr L\setminus\{A\}$.
In $\mathtt{send}\ A(\vec{x})\to B$ and $\mathtt{recv}\ A(\vec{y})\leftarrow B$,
$\vec{x}$ and $\vec{y}$ range over tuples of constants and variables.
In action statements, $\vec{y}$ must be variables (assignment targets).
For constructs $\mathtt{if}/\mathtt{while}\ A(\vec{y})\leftarrow B$, $\vec{y}$ is a
non-empty tuple of constants and variables whose first component $y_1$ is a Boolean variable.
\end{definition}

\begin{definition}[Distributed Programs \leanformalized{https://github.com/zippergen-io/zippergen-lean/blob/isola-camera-ready/isola/MSCAgents/LocalSyntax.lean}]
A distributed program is a tuple
$\mathcal D=(S_A)_{A\in\mathscr L}$ with $S_A\in\mathsf{LocProg}_A$ for each $A$.
\end{definition}

The message and action constructs are the local counterparts of the corresponding
workflow constructs, as are the owner-side conditional control-flow constructs $\mathtt{if}\ c@A$ and $\mathtt{while}\ c@A$: they still represent source-level local control owned by $A$.
The receive-guarded forms
$\mathtt{if}/\mathtt{while}\ A(\vec{y})\leftarrow B$ are different: they do not
evaluate a guard locally, but instead react to a control decision that has
already been made by $B$ and communicated explicitly.

The semantics of a local program is a language of local traces, one word over
the alphabet $\Sigma_A$ from Definition~\ref{def:alphabets} for each possible
execution of lifeline $A$. The semantics of a distributed program then
contains the tuples of
such local traces that together form an MSC. In this way, the distributed
semantics reuses the MSC notion from Section~\ref{sec:syntax}, but starts from
already projected local behavior rather than from a global workflow directly.

For $S \in \mathsf{LocProg}_A$, we define
$\ltr{A}{S}\subseteq\Sigma_A^*$, the set of local traces of $S$,
by induction. The receive-guarded control constructs
$\mathtt{if}/\mathtt{while}\ A(\vec{y})\leftarrow B$ consume control-tagged
messages from $B$ and branch on the first component ($\top$ or $\bot$).
The formal definition is given in Table~\ref{tab:local-semantics};
there, we write
$\vec{y}[\nu/y_1]$ for the tuple obtained by replacing $y_1$ by $\nu$.

To reason about partial executions and
to formalize deadlock-freeness, we order local words and MSC tuples by the
componentwise prefix relation.
For words $u,u'\in\Sigma_A^*$, write $u\preceq u'$ if
there is $v\in\Sigma_A^*$ such that $u'=uv$. We write $u\prec u'$ if
$u\preceq u'$ and $u\neq u'$.

\begin{table}[t]
\centering
\caption{Local trace semantics for $\ltr{A}{S}$.}
\label{tab:local-semantics}
$
\begin{array}{rcl}
\ltr{A}{\varepsilon} &\df& \{\varepsilon\}\\
\ltr{A}{\mathtt{send}\ A(\vec{x})\to B}
&\df& \{\mathtt{send}\ A(\vec{x})\to B\}\\
\ltr{A}{\mathtt{recv}\ A(\vec{y})\leftarrow B}
&\df& \{\mathtt{recv}\ A(\vec{y})\leftarrow B\}\\
\ltr{A}{\ActF{A}{\vec{y}}{f(\vec{x})}}
&\df& \{\ActF{A}{\vec{y}}{f(\vec{x})}\}\\
\ltr{A}{S_1;S_2} &\df& \{u_1u_2\mid u_1\in\ltr{A}{S_1},\ u_2\in\ltr{A}{S_2}\}\\[0.3em]
\ltr{A}{\mathtt{if}\ c@A\ \mathtt{then}\ S_{\top}\ \mathtt{else}\ S_{\bot}}
&\df& \left(\begin{array}{rl}
& \{\IfT{c}{A}\cdot u \mid u\in\ltr{A}{S_{\top}}\}\\
\cup &\{\IfF{c}{A}\cdot u \mid u\in\ltr{A}{S_{\bot}}\}
\end{array}\right)\\[1em]
\ltr{A}{\mathtt{if}\ A(\vec{y})\leftarrow B\ \mathtt{then}\ S_{\top}\ \mathtt{else}\ S_{\bot}}
&\df& \left(\begin{array}{rl}
& \{\mathtt{recv}\ A(\vec{y}[\top/y_1])\leftarrow B\cdot u \mid u\in\ltr{A}{S_{\top}}\}\\
\cup &\{\mathtt{recv}\ A(\vec{y}[\bot/y_1])\leftarrow B\cdot u \mid u\in\ltr{A}{S_{\bot}}\}
\end{array}\right)
\end{array}
$
$
\begin{aligned}
\ltr{A}{\mathtt{while}\ c@A\ \mathtt{do}\ S_{\mathit{body}}\ \mathtt{exit}\ S_{\mathit{exit}}}
&\df\\
&\bigcup_{k\ge 0}\left\{
\begin{array}{l}
\WhT{c}{A}\cdot u_1\cdot\\
\vdots\\
\WhT{c}{A}\cdot u_k\cdot\\
\WhF{c}{A}\cdot v
\end{array}
\;\middle|\;
\begin{array}{l}
u_1,\ldots,u_k\in\ltr{A}{S_{\mathit{body}}},\\
v\in\ltr{A}{S_{\mathit{exit}}}
\end{array}
\right\}
\end{aligned}
$
$
\begin{aligned}
\ltr{A}{\mathtt{while}\ A(\vec{y})\leftarrow B\ \mathtt{do}\ S_{\mathit{body}}\ \mathtt{exit}\ S_{\mathit{exit}}}
&\df\\
&\hspace*{-6.8em}\bigcup_{k\ge 0}\left\{
\begin{array}{l}
\mathtt{recv}\ A(\vec{y}[\top/y_1])\leftarrow B\cdot u_1\cdot\\
\vdots\\
\mathtt{recv}\ A(\vec{y}[\top/y_1])\leftarrow B\cdot u_k\cdot\\
\mathtt{recv}\ A(\vec{y}[\bot/y_1])\leftarrow B\cdot v
\end{array}
\;\middle|\;
\begin{array}{l}
u_1,\ldots,u_k\in\ltr{A}{S_{\mathit{body}}},\\
v\in\ltr{A}{S_{\mathit{exit}}}
\end{array}
\right\}
\end{aligned}
$
\end{table}

A language \(L\subseteq\Sigma_A^*\) is called \emph{prefix-free} if no two
distinct words in \(L\) are prefix-comparable.

\begin{lemma}[Complete Local Traces Are Prefix-Free \leanproof{https://github.com/zippergen-io/zippergen-lean/blob/isola-camera-ready/isola/MSCAgents/LocalSemantics.lean\#L1118}]
\label{lem:local-prefix-free}
For every local program $S\in\mathsf{LocProg}_A$, the language
\(\ltr{A}{S}\) is prefix-free.
\end{lemma}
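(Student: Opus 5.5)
The proof is by structural induction on $S$. We use one auxiliary observation throughout.

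\emph{Cancellation.} Let $L\subseteq\Sigma_A^*$ be prefix-free and $u,u'\in L$. If $uv$ and $u'v'$ are prefix-comparable for some words $v,v'$, then $u=u'$, and $v$ and $v'$ are prefix-comparable. Indeed, $u$ and $u'$ are both prefixes of the longer of $uv$ and $u'v'$. Two prefixes of the same word are prefix-comparable, so prefix-freeness of $L$ gives $u=u'$. Cancelling $u$ then leaves $v$ and $v'$ comparable.

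\emph{Base cases.} For $\varepsilon$, $\mathtt{send}$, $\mathtt{recv}$ and $\mathtt{act}$, the language $\ltr{A}{S}$ is a singleton. A singleton is trivially prefix-free.

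\emph{Sequential composition $S_1;S_2$.} Suppose $u_1u_2$ and $u_1'u_2'$ are prefix-comparable, with $u_1,u_1'\in\ltr{A}{S_1}$ and $u_2,u_2'\in\ltr{A}{S_2}$. Cancellation with the induction hypothesis for $S_1$ gives $u_1=u_1'$ and $u_2,u_2'$ comparable. The induction hypothesis for $S_2$ then gives $u_2=u_2'$, so the two words are equal.

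\emph{Owner-side conditional $\mathtt{if}\ c@A$.} Every word starts with exactly one choice letter. If two comparable words have different first letters, one is $\IfT{c}{A}$ and the other $\IfF{c}{A}$; these letters are distinct, which contradicts comparability. So both words lie in the same branch, and after stripping the common first letter the induction hypothesis for $S_\top$ or $S_\bot$ applies.

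\emph{Receive-guarded conditional $\mathtt{if}\ A(\vec y)\leftarrow B$.} The argument is the same. The only point to check is that $\mathtt{recv}\ A(\vec y[\top/y_1])\leftarrow B$ and $\mathtt{recv}\ A(\vec y[\bot/y_1])\leftarrow B$ are distinct letters. This holds because their first payload components are the distinct constants $\top$ and $\bot$.

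\emph{Loops.} The two while constructs are the main obstacle, because the number of iterations is unbounded and may differ between the two words. I treat both forms uniformly. Write $t$ for the "continue" letter ($\WhT{c}{A}$, or the receive with $\top$) and $e$ for the "exit" letter ($\WhF{c}{A}$, or the receive with $\bot$); by the remark above, $t\neq e$. Consider two comparable words
\[ w=t u_1\cdots t u_k\, e v \quad\text{and}\quad w'=t u'_1\cdots t u'_{k'}\, e v'. \]
I prove $w=w'$ by induction on $\min(k,k')$.

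If $k=k'=0$, strip $e$ and apply the induction hypothesis for $S_{\mathit{exit}}$.

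If exactly one of $k,k'$ is $0$, the first letters are $e$ and $t$, which differ. This contradicts comparability.

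If $k,k'\ge 1$, strip the common first letter $t$. Cancellation with the induction hypothesis for $S_{\mathit{body}}$ gives $u_1=u'_1$, and the remaining suffixes are still comparable. These suffixes are loop words with $k-1$ and $k'-1$ iterations, so the inner induction gives equality.

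Finally, the inner induction should be stated for arbitrary loop words in which $k$ and $k'$ may differ. This generality is exactly what forces the mixed case "one side exits, the other continues" to be ruled out by the distinct first letters $t\neq e$.
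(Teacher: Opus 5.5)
Your proof is correct and follows essentially the same route as the paper. Both use structural induction with singleton base cases, the ``two prefixes of one word are comparable'' argument for sequential composition, distinct initial decision letters for both forms of conditional, and a left-to-right comparison of iterations for the loops; your explicit cancellation lemma and inner induction on $\min(k,k')$ just make precise what the paper phrases informally as comparing the common iterations from left to right.
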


Now that we have defined the local trace languages, the semantics of a distributed
program is obtained by considering the tuples with one local trace for each lifeline, and keeping
exactly those tuples that satisfy the MSC conditions from
Definition~\ref{def:msc}. We define two semantics for distributed programs: one for complete executions and one for partial executions, where each local trace may be a prefix of a complete one.

Given a distributed program
$\mathcal D=(S_A)_{A\in\mathscr L}$, its complete MSC semantics is
\[
\sem{\mathcal D}\df
\left\{M=(w_A)_{A\in\mathscr{L}}\ \middle|\
\begin{array}{l}
w_A\in\ltr{A}{S_A}\ \text{for all }A,\\
M\ \text{is a complete MSC}
\end{array}
\right\}.
\]
For $S \in \mathsf{LocProg}_A$,
we consider the prefix closure $\lpref{A}{S}\subseteq\Sigma_A^*$, defined as
$\lpref{A}{S}\df\{u\in\Sigma_A^*\mid \exists v\in\ltr{A}{S}\colon u\preceq v\}$.
The distributed prefix semantics is
\[
\sempref{\mathcal D}\df
\left\{M=(w_A)_{A\in\mathscr{L}}\ \middle|\
\begin{array}{l}
w_A\in\lpref{A}{S_A}\ \text{for all }A,\\
M\ \text{is an MSC}
\end{array}
\right\}.
\]
Clearly, $\sem{\mathcal D}\subseteq\sempref{\mathcal D}$.
In particular, $\sem{\mathcal D}$ contains only complete MSCs (hence MSCs) by
Definition~\ref{def:msc}.
Note that MSC-ness is defined directly on tuples of local words, independently of any global workflow.

For tuples $M=(w_A)_{A\in\mathscr L}$ and $M'=(w'_A)_{A\in\mathscr L}$, define
$M\preceq M'$ if, for all $A\in\mathscr L$, we have $w_A\preceq w'_A$.

\begin{definition}[Deadlock-Freeness \leanformalized{https://github.com/zippergen-io/zippergen-lean/blob/isola-camera-ready/isola/MSCAgents/DeadlockFreeness.lean}]
\label{def:deadlock-prefix}
A distributed program $\mathcal D$ is \emph{deadlock-free} if,
for all $M\in\sempref{\mathcal D}$, there is $M'\in\sem{\mathcal D}$ such that $M\preceq M'$.
\end{definition}

For arbitrary distributed programs, such a prefix-extension property can be
misleading: a prefix may not determine which branch choices have already been
made, so an extension might complete a different run.  Our projected programs
avoid this ambiguity.  Branch decisions are recorded locally, as owner choice
events or tagged control receives, and complete local trace languages are
prefix-free (Lemma~\ref{lem:local-prefix-free}).  Thus a prefix fixes all branch
choices that have occurred. Choices not yet visible have not yet been made.
Therefore Definition~\ref{def:deadlock-prefix} expresses the intended no-stuck
property for projected programs.

\section{From Global to Distributed Programs}
\label{sec:implementation}

This section explains how a global workflow is turned into a distributed
program. We first define the syntax-directed projection itself, and then state
its correctness with respect to the MSC semantics.

\subsection{Projection}
\label{sec:projection}

We now pass from global workflows to executable local programs.
The projection is syntax-directed: each workflow construct is translated
locally for every lifeline, and the resulting local programs together form the
distributed program that will be executed. Two ingredients are needed.
First, for owned control constructs, the deciding lifeline must know which
other lifelines may participate in the continuation. Second, the projected
implementation uses explicit control broadcasts that are not present in the
global semantics, so we compare projected executions to global ones only after
erasing these auxiliary events.

To determine which lifelines may need to observe a control decision, we use a
simple structural participation analysis.
\begin{definition}[Structural Participation Sets \leanformalized{https://github.com/zippergen-io/zippergen-lean/blob/isola-camera-ready/isola/MSCAgents/ParticipationSets.lean}]
\label{def:participation-sets}
For a global workflow $P$, the participation set $\LS{P}$ is defined
inductively as follows:
\begin{align*}
\LS{\mathtt{msg}\ A(\vec{x})\to B(\vec{y})} &= \{A,B\},\qquad
\LS{\varepsilon} = \varnothing,\\
\LS{\ActF{A}{\vec{y}}{f(\vec{x})}} &= \{A\},\qquad
\LS{P_1;P_2} = \LS{P_1}\cup\LS{P_2},\\
\LS{\mathtt{if}\ c@B\ \mathtt{then}\ P_{\top}\ \mathtt{else}\ P_{\bot}}
&= \{B\}\cup\LS{P_{\top}}\cup\LS{P_{\bot}},\\
\LS{\mathtt{while}\ c@B\ \mathtt{do}\ P_{\mathit{body}}\ \mathtt{exit}\ P_{\mathit{exit}}}
&= \{B\}\cup\LS{P_{\mathit{body}}}\cup\LS{P_{\mathit{exit}}}.
\end{align*}
\end{definition}

Using these sets, projection turns each owned decision into an explicit
broadcast from the decider to the lifelines that may continue afterwards.
Fix a total order $\sqsubset$ on lifelines. For $R\subseteq\mathscr{L} \setminus \{B\}$, we write
$
\prod\nolimits_{A\in R}^{\sqsubset}
\mathtt{send}\ B(\nu,\kappa_{\mathsf{ctrl}}^P)\to A \in \mathsf{LocProg}_B
$
for the sequence of decision sends in $\sqsubset$-order, where
$\nu\in\{\top,\bot\}$ is the decision value and $P$ is the construct being projected;
if $R=\emptyset$, the product is the empty program~$\varepsilon$.
Each control-flow construct $P$ uses a dedicated control tag $\kappa_{\mathsf{ctrl}}^P$
(indexed by that construct). Control tags have a dedicated type that is unavailable
to user payloads: no source-level variable or constant has this type. Consequently, an
ordinary user receive can neither declare a variable of the tag type nor supply the
reserved tag as a constant, so by payload matching (Definition~\ref{def:payload-matching})
a user receive can never match a control send, and conversely.
Two constructs that share a tag must be syntactically equal and therefore cannot be nested
inside each other.
Thus, control messages of nested constructs always carry distinct tags and
are syntactically distinguishable.

These control broadcasts are implementation details: they are needed to
realize owned control in distributed code, but they should not appear in the
global workflow semantics. We therefore compare projected executions to global
executions only after erasing them.
For each lifeline $A$, let
$\mathsf{erase}_A:\Sigma_A^*\to\Sigma_A^*$ delete all
decision-broadcast send/receive letters introduced by projection for control
constructs (the $\mathtt{send}\ B(\top/\bot,\kappa_{\mathsf{ctrl}}^P)\to A$ and
$\mathtt{recv}\ A(\top/\bot,\kappa_{\mathsf{ctrl}}^P)\leftarrow B$ events, for any control construct $P$,
used only for control decisions).
For an MSC $M=(w_A)_A$, define
$
\mathsf{erase}(M)\df(\mathsf{erase}_A(w_A))_{A\in\mathscr{L}}.
$

Because the control-tag type is unavailable to user payloads, each matched pair of a
send and its receive consists either of two control events or of two user events, so
erasure removes complete control pairs without disturbing user communication.

\begin{lemma}[Erasure Preserves MSC and Completeness \leanproof{https://github.com/zippergen-io/zippergen-lean/blob/isola-camera-ready/isola/MSCAgents/Erasure.lean\#L1585}]
\label{lem:erase-preserves}
If $M$ is an MSC, then $\mathsf{erase}(M)$ is an MSC.
If $M$ is a complete MSC, then $\mathsf{erase}(M)$ is complete.
\end{lemma}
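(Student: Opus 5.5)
The plan is to reduce everything to a channel-by-channel analysis of how erasure changes the FIFO matching. Fix an MSC $M=(w_A)_A$ and write $M'=\mathsf{erase}(M)$. Erasure keeps exactly the non-control letters of each $w_A$, in their original order. This gives, for each lifeline $A$, an injective and order-preserving map $\iota$ from $E_{M'}$ into $E_M$, sending each surviving event to its original position.

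\textbf{Key claim.} I first show that $\iota$ carries $\lhd_{M'}$ into $\lhd_M$. Fix a channel $A\to B$. The send sequence $\mathsf{snd}_{A\to B}(M)$ interleaves control sends and user sends, and likewise for $\mathsf{rcv}_{B\leftarrow A}(M)$. In $M$, the $i$-th send is matched to the $i$-th receive. By condition~(2) of Definition~\ref{def:msc} together with the control-tag typing argument, a control send can only be payload-matched to a control receive, and a user send only to a user receive. Hence the matched prefix pairs up control with control and user with user.

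I then argue by induction on $i$. Up to $\min\{m,n\}$, the $k$-th user send in $M$ is matched to the $k$-th user receive in $M$. So after erasure, the $k$-th send of $M'$ on this channel is matched to the $k$-th receive of $M'$, and these are exactly the $\iota$-preimages of a pair in $\lhd_M$. Beyond the matched prefix, one side has no further events on this channel, so no spurious pairs arise.

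\textbf{Main obstacle.} The delicate point is a user receive $r$ in $M$ that is matched, with index $i$ in the receive sequence. Since it is matched, the $i$-th send exists, and by typing that send is a user send. So the number of user events among the first $i$ sends equals the number among the first $i$ receives, which makes the user-indices agree. This bookkeeping is where the type separation between control and user payloads is essential, and I expect it to be the crux of the argument.

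\textbf{Verifying the MSC conditions for $M'$.}
\begin{enumerate}
\item Every receive of $M'$ is a user receive that is matched in $M$. By the key claim it is matched in $M'$ with the same partner.
\item Payload matching is inherited, since the matched pairs and their letters are unchanged.
\item For acyclicity, every generating edge of $M'$ maps under $\iota$ into the transitive closure of the generating relation of $M$. A message edge maps to a $\lhd_M$ edge. A local successor edge $(A,j)\to(A,j+1)$ maps to a chain of local edges skipping the erased letters. So any cycle in $M'$ would yield a cycle in $M$, a contradiction.
\end{enumerate}

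\textbf{Completeness.} If $M$ is complete, every user send of $M'$ is matched in $M$, necessarily to a user receive, and by the key claim it is matched in $M'$. Hence $\mathsf{erase}(M)$ is complete.
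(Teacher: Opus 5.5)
Your proposal is correct and takes essentially the same route as the paper. Type separation makes every matched pair homogeneous (control--control or user--user), so erasure removes whole control pairs and leaves the FIFO matching of surviving user events unchanged; acyclicity then transfers because each local-successor edge in $\mathsf{erase}(M)$ corresponds to a nonempty local-order path in $M$. Your explicit per-channel user-index counting simply spells out the step the paper only asserts, namely that ``the FIFO matching of the surviving user events is preserved.''
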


With these conventions fixed, the projection rules themselves are
straightforward. Atomic statements are projected pointwise to the relevant
endpoints or owner lifeline.
Sequential composition is projected componentwise.
The only interesting case is owned control: the owner keeps its source-level
\texttt{if}/\texttt{while}, emits explicit control messages, and each recipient
branches by receiving a reserved control token from the owner.
For each lifeline $A\in\mathscr L$, we define
$\pi_A\colon\Work\to\mathsf{LocProg}_A$.
The projection rules are given in
Tables~\ref{tab:projection-base},
\ref{tab:projection-if}, and
\ref{tab:projection-while}, where we let
$\mathcal{R}_{\mathit{if}} \df (\LS{P_{\top}}\cup \LS{P_{\bot}})\setminus\{B\}$
and
$\mathcal{R}_{\mathit{while}} \df (\LS{P_{\mathit{body}}}\cup \LS{P_{\mathit{exit}}})\setminus\{B\}$.
For recipient-side control constructs, $(z,\kappa_{\mathsf{ctrl}}^P)$
is the received payload, where $z$ is a Boolean variable not occurring in $P$ and
$\kappa_{\mathsf{ctrl}}^P$ is the control tag of the enclosing construct $P$.

For a global workflow $P$, we write
$\mathcal D_P \df (\proj{A}{P})_{A\in\mathscr L}$
for its projected distributed program.

Let \(|P|\) be the number of syntactic nodes of \(P\) and \(n = |\mathscr L|\).
Each global construct contributes at most one local node per lifeline, giving at most \(n|P|\) nodes overall.
Moreover, each control construct adds at most two owner-side broadcast sequences, each of length at most \(n-1\).
Hence the total size of all projections is at most \(n|P| + 2(n-1)|P| = O(n|P|)\), and so each individual projection \(\proj{X}{P}\) is also \(O(n|P|)\).

\begin{table}[t]
\centering
\caption{Projection rules for atomic and sequential constructs.}
\label{tab:projection-base}
\small
$
\begin{array}{rcl}
\proj{A}{\varepsilon}
&\df& \varepsilon
\qquad
\proj{A}{P_1;P_2}
\df \proj{A}{P_1};\proj{A}{P_2}
\\[0.5ex]
\proj{A}{\mathtt{msg}\ X(\vec{x})\to Y(\vec{y})}
&\df& \begin{cases}
\mathtt{send}\ X(\vec{x})\to Y & \text{if }A=X\\
\mathtt{recv}\ Y(\vec{y})\leftarrow X & \text{if }A=Y\\
\varepsilon & \text{otherwise}
\end{cases}\\[5ex]
\proj{A}{\ActF{X}{\vec{y}}{f(\vec{x})}}
&\df& \begin{cases}
\ActF{X}{\vec{y}}{f(\vec{x})} & \text{if }A=X\\
\varepsilon & \text{otherwise}
\end{cases}
\end{array}
$
\end{table}

\begin{table}[t]
\centering
\caption{Projection rule for $P=\mathtt{if}\ c@B\ \mathtt{then}\ P_{\top}\ \mathtt{else}\ P_{\bot}$.}
\label{tab:projection-if}
$
\proj{A}{\mathtt{if}\ c@B\ \mathtt{then}\ P_{\top}\ \mathtt{else}\ P_{\bot}}
\df
\begin{cases}
\left(
\begin{array}{@{}l@{}}
\mathtt{if}\ c@B\ \mathtt{then}\\
\hspace*{1.5em}\prod\nolimits_{C\in\mathcal{R}_{\mathit{if}}}^{\sqsubset}
\mathtt{send}\ B(\top,\kappa_{\mathsf{ctrl}}^P)\to C;\\
\hspace*{1.5em}\proj{B}{P_{\top}}\\
\mathtt{else}\\
\hspace*{1.5em}\prod\nolimits_{C\in\mathcal{R}_{\mathit{if}}}^{\sqsubset}
\mathtt{send}\ B(\bot,\kappa_{\mathsf{ctrl}}^P)\to C;\\
\hspace*{1.5em}\proj{B}{P_{\bot}}
\end{array}
\right)
& \text{if }A=B\\[10ex]
\left(
\begin{array}{@{}l@{}}
\mathtt{if}\ A(z,\kappa_{\mathsf{ctrl}}^P)\leftarrow B\ \mathtt{then}\\
\hspace*{1.5em}\proj{A}{P_{\top}}\\
\mathtt{else}\\
\hspace*{1.5em}\proj{A}{P_{\bot}}
\end{array}
\right)
& \text{if }A\in\mathcal{R}_{\mathit{if}}\\[6ex]
\varepsilon & \text{otherwise}
\end{cases}
$
\end{table}

\begin{table}[t]
\centering
\caption{Projection rule for $P=\mathtt{while}\ c@B\ \mathtt{do}\ P_{\mathit{body}}\ \mathtt{exit}\ P_{\mathit{exit}}$.}
\label{tab:projection-while}
$
\begin{gathered}
\llap{$\proj{A}{\mathtt{while}\ c@B\ \mathtt{do}\ P_{\mathit{body}}\ \mathtt{exit}\ P_{\mathit{exit}}}
\df$}\\
\begin{cases}
\left(
\begin{array}{@{}l@{}}
\mathtt{while}\ c@B\ \mathtt{do}\\
\hspace*{1.5em}\prod\nolimits_{C\in\mathcal{R}_{\mathit{while}}}^{\sqsubset}
\mathtt{send}\ B(\top,\kappa_{\mathsf{ctrl}}^P)\to C;\\
\hspace*{1.5em}\proj{B}{P_{\mathit{body}}}\\
\mathtt{exit}\\
\hspace*{1.5em}\prod\nolimits_{C\in\mathcal{R}_{\mathit{while}}}^{\sqsubset}
\mathtt{send}\ B(\bot,\kappa_{\mathsf{ctrl}}^P)\to C;\\
\hspace*{1.5em}\proj{B}{P_{\mathit{exit}}}
\end{array}
\right)
& \text{if }A=B\\[10ex]
\left(
\begin{array}{@{}l@{}}
\mathtt{while}\ A(z,\kappa_{\mathsf{ctrl}}^P)\leftarrow B\ \mathtt{do}\\
\hspace*{1.5em}\proj{A}{P_{\mathit{body}}}\\
\mathtt{exit}\\
\hspace*{1.5em}\proj{A}{P_{\mathit{exit}}}
\end{array}
\right)
& \text{if }A\in\mathcal{R}_{\mathit{while}}\\[6ex]
\varepsilon & \text{otherwise}
\end{cases}
\end{gathered}
$
\end{table}

Two design choices for local control keep the formal layer compact.
First, we keep dedicated recipient-side constructs
\(\mathtt{if}/\mathtt{while}\ A(z,\kappa_{\mathsf{ctrl}}^P)\leftarrow B\) rather
than rewriting them into plain \(\mathtt{recv}\) followed by ordinary
\(\mathtt{if}/\mathtt{while}\): this preserves an explicit receive-to-branch
coupling without introducing guard evaluation machinery.
Second, control broadcasts use constant values \(\top,\bot\) tagged with the
per-construct tag \(\kappa_{\mathsf{ctrl}}^P\). Indexing the tag by the construct $P$
ensures that control messages of distinct nested constructs are syntactically distinguishable,
which simplifies the decision-prefix attribution argument in the proof
of Lemma~\ref{lem:uniform-zipper}; the machine-checked proof does not
require this distinctness.
The tag also cleanly separates control traffic from user messages and identifies
exactly the events removed by the erasure map. In particular, payload
matching ensures that a receive expecting \((\top,\kappa_{\mathsf{ctrl}}^P)\)
cannot match a send carrying \((\bot,\kappa_{\mathsf{ctrl}}^P)\), and conversely.

\subsection{Correctness}

We can now state the main correctness result of the projection. The theorem establishes that global and projected         
  executions agree up to control messages: every global execution is realized by a distributed one after inserting the
  necessary control broadcasts, and every complete distributed execution yields a global one once control messages are       
  removed.

\begin{theorem}[Inductive Correctness \leanproof{https://github.com/zippergen-io/zippergen-lean/blob/isola-camera-ready/isola/MSCAgents/Correctness.lean\#L2055}]
\label{thm:correctness}
Let $P$ be a well-typed global workflow. Then the following hold:
\begin{enumerate}
  \item For every $M\in\sem{P}$, there exists
  $\enriched{M}\in\sem{\mathcal{D}_P}$ such that $\mathsf{erase}(\enriched{M})=M$.
  \item For every $\enriched{M}\in\sem{\mathcal{D}_P}$,
  $\mathsf{erase}(\enriched{M})\in\sem{P}$.
\end{enumerate}
\end{theorem}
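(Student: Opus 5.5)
Both parts go by structural induction on $P$, and the work is in composing MSCs.

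\emph{Part (1).} The atomic cases are immediate. For $\varepsilon$, messages and actions, the projection yields exactly the one-step canonical local words, so $\enriched{M}=M$ works. For $P_1;P_2$, take $M=M_1\circ M_2$ and apply the induction hypothesis to get $\enriched{M}_1,\enriched{M}_2$. Set $\enriched{M}=\enriched{M}_1\circ\enriched{M}_2$. This is a complete MSC by Lemma~\ref{lem:concat-msc}, since $\enriched{M}_1$ is complete. Its components lie in $\ltr{A}{\proj{A}{P_1};\proj{A}{P_2}}$, and erasure commutes with concatenation. For $\mathtt{if}\ c@B$ with, say, the true branch, prepend a complete MSC $K_\top$. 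In $K_\top$, lifeline $B$ carries $\IfT{c}{B}$ followed by the control sends to each $C\in\mathcal R_{\mathit{if}}$ in $\sqsubset$-order, and each such $C$ carries the single matching receive $\mathtt{recv}\ C(\top,\kappa^P_{\mathsf{ctrl}})\leftarrow B$. $K_\top$ is complete and acyclic, and its payloads match. Then $K_\top\circ\enriched{M}_\top$ works by Lemma~\ref{lem:concat-msc}. One point needs care: a lifeline $A\notin\mathcal R_{\mathit{if}}\cup\{B\}$ has projection $\varepsilon$, so we need a side lemma that $M\in\sem{Q}$ has empty components outside $\LS{Q}$. This holds by a trivial induction. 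The while case iterates the same construction, once per unrolling.

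\emph{Part (2).} I would prove a generalized statement: for every $\enriched{M}\in\sem{\mathcal D_P}$, the erasure lies in $\sem{P}$. The difficulty is decomposition. Given a complete MSC $\enriched{M}=(w_A)_A$ with $w_A\in\ltr{A}{\proj{A}{P_1};\proj{A}{P_2}}$, each $w_A$ splits as $u_Av_A$, and by Lemma~\ref{lem:local-prefix-free} the split is unique. We must show that $N_1=(u_A)_A$ is a \emph{complete MSC} on its own. Then Lemma~\ref{lem:strip-complete-prefix} makes $N_2=(v_A)_A$ a complete MSC, and the induction hypothesis applies to both parts. This ``zipper'' lemma is the main obstacle, since FIFO matchings could in principle cross the cut. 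I would prove a channel-counting invariant by a separate induction on $P$: for every $W\in\ltr{}{}$-tuple drawn from the projections of $P$ with $w_A\in\ltr{A}{\proj{A}{P}}$, if the tuple is a complete MSC then for each channel the number of sends equals the number of receives, with payload-compatible kinds (user versus control with matching tag and value). The subtle part is that the local tuples $(u_A)$ must come from a \emph{consistent} set of branch choices. I would show that the control receives on each recipient are FIFO-matched to the owner's sends. Payload matching then forces $\top$ to match $\top$, so every recipient's branch (and loop iteration count) agrees with the owner's. Distinct tags per construct make this attribution unambiguous, with the first receive on channel $B\to C$ belonging to the outermost pending construct. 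With consistency established, each prefix tuple is the projection trace of the same global choice sequence. Counting then shows sends and receives on every channel balance within $N_1$, so no match crosses the cut and $N_1$ is complete. Acyclicity is inherited from $\enriched{M}$.

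For the $\mathtt{if}$ and $\mathtt{while}$ cases, the consistency argument first strips the initial complete control block $K_\nu$ using Lemma~\ref{lem:strip-complete-prefix}. Lifelines outside $\mathcal R\cup\{B\}$ have empty words. Then apply the induction hypothesis to the branch. For loops, argue by induction on the owner's iteration count $k$, peeling one $K_\top\circ$body block at a time with the zipper lemma. Finally, Lemma~\ref{lem:erase-preserves} together with the fact that erasure commutes with $\circ$ reassembles $\mathsf{erase}(\enriched{M})$ into the required shape of $\sem{P}$.
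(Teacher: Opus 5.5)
\paragraph{Verdict: genuine gap in the zipper step.} Part (1) and the outer structure of part (2) coincide with the paper's proof. In (2) you split $w_A=u_Av_A$, show that $(u_A)_A$ is a complete MSC, and strip it with Lemma~\ref{lem:strip-complete-prefix}. You handle if/while by stripping a complete decision block and inducting on the owner's iteration count, which is the paper's Lemma~\ref{lem:control-decompose}. (In part (1), the side fact you actually need is $\ltr{A}{\proj{A}{Q}}=\{\varepsilon\}$ for $A\notin\LS{Q}$. That is a statement about projections, not about $\sem{Q}$.)

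The gap is the zipper step, on which both your composition case and your loop peeling rest. Your invariant assumes the tuple is already a complete MSC, and then balanced counts hold by definition. So it cannot be applied to $N_1=(u_A)_A$, which is not yet known to be an MSC at all. The consistency argument meant to fill this is not a proof as sketched. The $j$-th control receive on $B\to C$ in $u_C$ is matched in $\enriched{M}$ to the $j$-th send on $B\to C$ in $u_Bv_B$. Concluding that this send lies in $u_B$ and belongs to the same construct instance presupposes that all earlier traffic on $B\to C$ is already aligned. That alignment in turn depends on branch choices of other owners, learned over other channels. Tags do not resolve this: all iterations of a loop share one tag, and syntactically equal constructs share a tag. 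So in $P';P'$ nothing distinguishes a control send of the first copy (in $u_B$) from one of the second (in $v_B$). That is exactly the cut-crossing you must rule out.

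\paragraph{Repair.} Put the ambient MSC into the hypothesis and peel inductively. The statement is: if $u_A\in\ltr{A}{\proj{A}{P}}$ for all $A$ and $(u_Av_A)_A$ is an MSC, then $(u_A)_A$ is a complete MSC. Prove it by induction on $P$.
\begin{itemize}
\item For if/while, each recipient's control receive is the first receive on $B\to C$ in $u_Cv_C$. It is therefore matched to the first send on $B\to C$ in $u_Bv_B$, which is the owner's control send. Payload matching forces equal values, so $K_\nu$ is a componentwise prefix of $(u_Av_A)_A$. Strip it with Lemma~\ref{lem:strip-complete-prefix} before recursing; for while, add a secondary induction on $|u_B|$.
\item For $Q_1;Q_2$, split $u_A=x_Ay_A$ and apply the induction hypothesis for $Q_1$ with suffix $(y_Av_A)_A$. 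Then strip $(x_A)_A$, apply the hypothesis for $Q_2$, and reassemble with Lemma~\ref{lem:concat-msc}.
\end{itemize}
Stripping is what makes each later control message the first one on its channel, replacing your global tag-based attribution. This is the complete-trace special case of the paper's Zipper Lemma~\ref{lem:uniform-zipper}, used there via Corollary~\ref{cor:U-complete}. The paper proves the more general prefix version because it also gives deadlock-freedom.
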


In the machine-checked statements of both items, source-level control-payload
separation appears as an additional hypothesis: source-level message and action
payloads never inhabit the dedicated control-tag type. Apart from this hypothesis,
the theorem assumes only well-typedness. The deadlock-freedom corollary
(Corollary~\ref{cor:deadlock-free} below) does not require this additional
source-level separation hypothesis.

Our semantics ranges only over finite, complete MSCs and their finite prefixes.
Accordingly, Theorem~\ref{thm:correctness} is a partial-correctness statement
about terminating workflow executions: whenever a global or projected run
reaches a complete finite execution, the erasure-based correspondence holds.
In particular, every \texttt{while} loop is assumed to terminate,
i.e., the owner's condition eventually becomes false.
The theorem is not a separate claim that arbitrary subsymbolic actions
terminate, nor does it address infinite reactive behavior. This restriction is
appropriate for the protocol workflows considered here.
Extending the framework
to reactive settings with infinite executions would require a corresponding
semantic extension.

Both items are proved by structural induction on $P$. Completeness (item~1) is
straightforward: each case follows directly by induction. Soundness (item~2) is harder in the sequential composition case:
given a distributed execution of $P_1;P_2$, one must show that the
$P_1$- and $P_2$-parts of each local trace together form valid MSCs. The Zipper
lemma resolves exactly this difficulty. Deadlock-freeness
(Corollary~\ref{cor:deadlock-free}) then follows directly: applying the Zipper
lemma with empty suffix extends any prefix execution to a complete one.

\begin{definition}[Zipper postcondition \leanformalized{https://github.com/zippergen-io/zippergen-lean/blob/isola-camera-ready/isola/MSCAgents/ZipperPost.lean\#L120}]
\label{def:zip-post}
Let \(P\) be a well-typed global workflow, and let
\(U=(u_X)_{X\in\mathscr L}\), \(\bar U=(\bar u_X)_{X\in\mathscr L}\), and
\(V=(v_X)_{X\in\mathscr L}\) be tuples of local words.
We write
$
\mathsf{ZipPost}_{P}(U,\bar U,V)
$
if the following hold:
\begin{enumerate}
\item for every \(X\in\mathscr L\),
$u_X\preceq \bar u_X$,
$\bar u_X\in\ltr{X}{\proj{X}{P}}$,
and
$(v_X\neq\varepsilon ~\Longrightarrow~ \bar u_X=u_X)$;
\item \(\bar U\) is a complete MSC;
\item \((u_Xv_X)_{X\in\mathscr L}\) is an MSC;
\item \((\bar u_Xv_X)_{X\in\mathscr L}\) is an MSC.
\end{enumerate}
\end{definition}

The key result is that the Zipper postcondition is always satisfiable.

\begin{lemma}[Zipper lemma \leanproof{https://github.com/zippergen-io/zippergen-lean/blob/isola-camera-ready/isola/MSCAgents/ZipperLemma.lean\#L6859}]
\label{lem:uniform-zipper}
Let $P$ be a well-typed global workflow.
Let $U=(u_X)_{X\in\mathscr L}$ and $V=(v_X)_{X\in\mathscr L}$ be tuples of local
words such that:
\begin{itemize}
\item for every lifeline $X\in\mathscr L$,
$u_X\in\lpref{X}{\proj{X}{P}}$;

\item for every lifeline $X\in\mathscr L$,
$
v_X\neq\varepsilon ~\Longrightarrow~ u_X\in\ltr{X}{\proj{X}{P}}
$.
\end{itemize}
Define $M=(u_Xv_X)_{X\in\mathscr L}$.
If $M$ is an MSC, then there exists a tuple
$\bar U=(\bar u_X)_{X\in\mathscr L}$ such that
\[
\mathsf{ZipPost}_{P}(U,\bar U,V).
\]
\end{lemma}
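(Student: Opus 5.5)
We prove the Zipper lemma by structural induction on $P$, with $U$ and $V$ universally quantified in the induction hypothesis. The intuition: each $u_X$ is a (possibly partial) execution of $\proj{X}{P}$, and $V$ is a continuation that only lifelines which have already finished $P$ may have started. We must complete every unfinished $u_X$ to a full trace $\bar u_X$ such that the completed tuple $\bar U$ is a complete MSC and remains compatible with $V$. In each case $\bar U$ is built by appending events to the unfinished components only; lifelines with $v_X\neq\varepsilon$ are already complete and keep $\bar u_X=u_X$, as item (1) of $\mathsf{ZipPost}_P$ demands.

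\emph{Atomic cases.} For $\varepsilon$ and actions the construction is immediate. For $\mathtt{msg}\ A(\vec x)\to B(\vec y)$, the only nontrivial situation is that $u_A$ is the send while $u_B=\varepsilon$. Since $v_B=\varepsilon$, we append the receive. Two facts make this safe: FIFO order on channel $A\to B$ together with $v_B=\varepsilon$ ensures that the new receive is matched to exactly this send, and well-typedness gives $\mathsf{match}(\vec x,\vec y)$. Acyclicity is preserved because the appended event is maximal on its lifeline. The case $u_A=\varepsilon$, $u_B=\text{recv}$ cannot occur, since $M$ is an MSC and every receive must be matched.

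\emph{Sequential composition $P_1;P_2$.} By prefix-freeness (Lemma~\ref{lem:local-prefix-free}), each $u_X$ splits uniquely. Either $u_X$ is a prefix of a trace of $\proj{X}{P_1}$, or $u_X=a_Xb_X$ with $a_X\in\ltr{X}{\proj{X}{P_1}}$ and $b_X$ a nonempty prefix of a trace of $\proj{X}{P_2}$. First apply the induction hypothesis for $P_1$ with prefix tuple $(a_X)$ and suffix tuple $(b_Xv_X)$. Lifelines whose $P_1$-part is unfinished have $b_X=v_X=\varepsilon$, so the hypothesis of the lemma is satisfied. This yields $\bar A$, a complete MSC in $\ltr{}{\proj{}{P_1}}$, such that $(\bar a_Xb_Xv_X)$ is an MSC. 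By Lemma~\ref{lem:strip-complete-prefix}, $(b_Xv_X)$ is then an MSC. Next apply the induction hypothesis for $P_2$ to $(b_X)$ and $(v_X)$, obtaining $\bar B$. Set $\bar u_X=\bar a_X\bar b_X$. Completeness of $\bar U$ and the MSC conditions of items (3)–(4) follow from Lemma~\ref{lem:concat-msc}, taking care in item (3) that lifelines untouched on the right keep their original words.

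\emph{Owned control: the main obstacle.} For $P=\mathtt{if}\ c@B\ldots$, if $u_B$ is empty we first complete $B$ by choosing, say, the $\top$ branch. This is permitted because no other lifeline can have received a control message yet, and every nonempty recipient prefix starts with a control receive matched to a send of $B$. If $u_B$ is nonempty, the choice letter fixes the branch $\nu$. We extend $B$'s broadcast to all of $\mathcal R_{\mathit{if}}$ and apply the induction hypothesis to $P_\nu$, after stripping the choice event and control events. The delicate point is \emph{attribution of decision prefixes}. We must show that each recipient's control receive is matched to $B$'s send for this very construct and carries the same $\nu$. This relies on distinct per-construct tags, payload matching, and FIFO order, together with the fact that no earlier unmatched sends on the channel $B\to C$ exist, as established by the sequential case. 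Non-participants of the branch, being in $\mathcal R_{\mathit{if}}$, get only the control receive.

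The while loop is handled by an inner induction on the number of iterations already begun by $B$, reducing each iteration to the $\mathtt{if}$ argument with body and exit. The difficulty is that recipients may lag behind $B$ by several iterations. We therefore prove a strengthened invariant: the iteration counts of the recipients never exceed $B$'s, and a lagging recipient $C$ sees control receives in FIFO order that match $B$'s decision sequence. This is where we expect the bulk of the bookkeeping to lie.
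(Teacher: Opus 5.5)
\textbf{Gap in the while case.} Your atomic, sequential and if cases follow the paper's proof essentially step for step. The paper only packages your ``strip the choice letter and control events'' step as a separate decision-prefix stripping lemma, built on the same FIFO argument you sketch. The gap is the while case. You leave it as an outline, and the outline names the wrong mechanism.

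An invariant bounding how far recipients lag behind $B$ does not by itself produce $\bar U$, nor show that $(\bar u_Xv_X)_X$ is an MSC. ``Reducing each iteration to the if argument'' also fails as stated. After a $\top$-decision the remainder of each $u_X$ lies in $\lpref{X}{\proj{X}{Q;P}}$. Since $Q;P$ contains $P$ itself, the structural induction hypothesis does not cover it.

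\textbf{What closes the case.} Peel off one iteration at a time and replay your sequential-composition argument on $Q;P$.
\begin{enumerate}
\item Strip the first decision block $D^{\nu_1}$ as in your if case. This gives an MSC $(r_Xv_X)_X$.
\item If $\nu_1=\bot$, apply the structural IH to $E$.
\item If $\nu_1=\top$, split $r_X=b_Xs_X$ boundary-aligned. Apply the structural IH to $Q$ with prefix $(b_X)_X$ and suffix $(s_Xv_X)_X$. Conclude via Lemma~\ref{lem:strip-complete-prefix} that $(s_Xv_X)_X$ is an MSC.
\item Then apply the inner IH to $P$ with prefix $(s_X)_X$ and suffix $V$, and set $\bar u_X=d^\top_X\bar b_X\bar s_X$. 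This is legitimate because $s_B$ contains exactly one fewer started decision block than $u_B$. That count is well defined by prefix-freeness (Lemma~\ref{lem:local-prefix-free}).
\end{enumerate}
The side conditions $s_Xv_X\neq\varepsilon\Rightarrow b_X\in\ltr{X}{\proj{X}{Q}}$ and $v_X\neq\varepsilon\Rightarrow s_X\in\ltr{X}{\proj{X}{P}}$ follow from boundary alignment and prefix-freeness, exactly as in your sequential case.

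\textbf{Lagging recipients need no extra invariant.} A recipient still inside the first iteration, or not yet started, simply has $s_Xv_X=\varepsilon$, which the Zipper form allows. Control sends of $B$ for later iterations that it has not yet received are unmatched sends, which an MSC permits. Attribution is only ever needed for the first block of the current tuple. That tuple starts at the loop, so there is no earlier traffic on $B\to C$, and your if-case FIFO argument applies directly.

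\textbf{Base case.} When $B$ has begun no iteration, you must choose $\bot$ and apply the structural IH to $E$ with all-empty prefixes. Choosing ``say, $\top$'', as in your if case, would recreate the same situation, and the inner induction would never bottom out.
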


Intuitively, the lemma addresses the key difficulty of asynchronous sequential
composition: in an execution of $P_1;P_2$, different lifelines may cross the
$P_1/P_2$-boundary at different times, leaving some still inside $P_1$ while
others have already begun $P_2$. The lemma guarantees that $P_1$ can always be
completed consistently with whatever $P_2$-work has started. This is the
situation illustrated by \texttt{reviewed\_execution}.

Three auxiliary consequences of \(\mathsf{ZipPost}_P(U,\bar U,V)\) will be used
below. First, the prefix tuple \(U\) is itself an MSC. Second, if each
component \(u_X\) is already a complete local trace of \(\proj{X}{P}\), then
\(U\) is in fact a complete MSC. Third, \(V\) is an MSC: conditions~2 and~4 together with
Lemma~\ref{lem:strip-complete-prefix} give this directly, so it need not be
imposed separately in the definition.

\begin{corollary}[Deadlock-freeness \leanproof{https://github.com/zippergen-io/zippergen-lean/blob/isola-camera-ready/isola/MSCAgents/DeadlockFreeness.lean\#L47}]
\label{cor:deadlock-free}
For every well-typed global workflow $P$, the distributed program
\(\mathcal D_P\) is deadlock-free in the sense of
Definition~\ref{def:deadlock-prefix}.
\end{corollary}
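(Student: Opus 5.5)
The plan is to derive the corollary directly from the Zipper lemma (Lemma~\ref{lem:uniform-zipper}), instantiated with the empty suffix. Fix a well-typed $P$ and an arbitrary $M=(w_X)_{X\in\mathscr L}\in\sempref{\mathcal D_P}$. By definition of the prefix semantics, $w_X\in\lpref{X}{\proj{X}{P}}$ for every $X$, and $M$ is an MSC. I would set $U\df M$ and $V\df(\varepsilon)_{X\in\mathscr L}$.

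The first hypothesis of the Zipper lemma is exactly $w_X\in\lpref{X}{\proj{X}{P}}$. The second hypothesis, $v_X\neq\varepsilon\Rightarrow u_X\in\ltr{X}{\proj{X}{P}}$, holds vacuously because every $v_X$ is empty. Finally, $(u_Xv_X)_X=(w_X)_X=M$ is an MSC by assumption. The lemma therefore yields a tuple $\bar U=(\bar u_X)_X$ with $\mathsf{ZipPost}_P(U,\bar U,V)$.

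Next I would read off the required extension from the postcondition. Condition~1 of Definition~\ref{def:zip-post} gives $w_X\preceq\bar u_X$ and $\bar u_X\in\ltr{X}{\proj{X}{P}}$ for all $X$, so $M\preceq\bar U$ componentwise. Condition~2 states that $\bar U$ is a complete MSC. Together these mean $\bar U\in\sem{\mathcal D_P}$ by the definition of the complete distributed semantics. Hence $M':=\bar U$ witnesses Definition~\ref{def:deadlock-prefix}.

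No step is a genuine obstacle, since all the difficulty is absorbed by the Zipper lemma. The only points to check are that the empty-suffix instance meets the lemma's hypotheses, which was done above, and that the corollary does not need the control-payload separation hypothesis. The second point holds because the argument never invokes Theorem~\ref{thm:correctness} or erasure, only the Zipper lemma and the definitions of $\sem{\mathcal D_P}$ and $\sempref{\mathcal D_P}$.
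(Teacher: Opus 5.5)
Your proposal is correct and follows the paper's proof: instantiate the Zipper lemma with $U=M$ and the empty suffix $V=(\varepsilon)_X$, observe that the side condition holds vacuously, and read off $M\preceq\bar U\in\sem{\mathcal D_P}$ from Conditions~1 and~2 of $\mathsf{ZipPost}_P$. Your remark that the argument uses neither erasure nor the control-payload separation hypothesis also matches the paper's own comment on this corollary.
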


Corollary~\ref{cor:deadlock-free} is a semantic prefix-extension property:
every prefix MSC of the projected distributed program can be extended to a
complete MSC. This is the no-stuck property formalized in
Definition~\ref{def:deadlock-prefix}. It is not, by itself, a claim about
fairness or lock-freedom of every possible runtime scheduler.

\subsection{The ZipperGen System}
\label{sec:zippergen}

ZipperGen, available at \url{https://zippergen.io}, is a Python implementation of the framework described in this paper, whose internal structure follows the formal development step by step.
Workflows are written as ordinary Python functions and are decorated with
\texttt{@workflow}. The decorator rewrites the function body into an immutable
internal representation (IR) whose nodes mirror the formal grammar of Section~\ref{sec:syntax}: typed
inputs become owned initial variables, message and action statements become
message and action nodes, and Python \texttt{if}/\texttt{while} statements with
conditions of the form \texttt{expr~@~Lifeline} become owned control nodes.
LLM, pure, effect, and human actions are declared separately with \texttt{@llm},
\texttt{@pure}, \texttt{@effect}, and \texttt{@human}, attaching prompts,
implementations, effects, or human-input specs while the workflow itself remains a
pure coordination object.

Execution follows the projected semantics closely. ZipperGen projects the global
workflow, spawns one thread per lifeline, and connects lifelines by FIFO queues,
one per directed channel. Control broadcasts use per-construct tags $\kappa_{\mathsf{ctrl}}^P$,
so they remain disjoint from user traffic and from each other across nested constructs. The
LLM backend is pluggable, with a mock backend for testing without API access.

For deployment, ZipperGen runs workflows through a persistent SQLite store that
records messages, control decisions, action results, human tasks, trace events,
and the final result. Restarting with the same store replays the committed history
and continues from it. Human actions become durable tasks, completed through
command-line commands or notification adapters such as Telegram, and inspected with
\texttt{status}, \texttt{tasks}, and \texttt{trace}. The implementation includes a
local-approval example that waits for approval and replays without re-asking.

\paragraph{Runtime Planning.}
\label{sec:planner}

The framework as presented treats workflows as static specifications.
ZipperGen also supports a \texttt{@planner} decorator for cases where
the coordination structure is not known in advance.
A planner action is declared like any other action and can be used directly inside a \texttt{@workflow}:

\begin{lstlisting}[float=!tb,language=Python,caption={A planner action in actual Python syntax.}]
@planner(
    description="A workflow planner for professional writing tasks.",
    lifelines=["Worker1", "Worker2"],
    allow=["llm", "if"],
    instructions="Use Worker1 to draft, Worker2 to assess quality; "
                  "route back to Worker1 for revision if needed.",
)
def write_document(request: str, job_desc: str, cv_sketch: str) -> str: ...
\end{lstlisting}

At runtime, ZipperGen builds a hidden system prompt from all of the following:
the \texttt{description}, an auto-generated summary of the available workers, DSL rules,
and any \texttt{allow} extensions.
An optional \texttt{actions} parameter supplies a pre-defined vocabulary.
Since \texttt{@llm}, \texttt{@pure}, \texttt{@effect}, and \texttt{@human}
actions present the same interface of typed inputs and a typed output, any of
them may appear in the vocabulary.
The planner receives their signatures and may call them directly in the generated workflow.
The \texttt{allow} parameter controls what else the generated workflow may use:
\texttt{"pure"} and \texttt{"llm"} permit the LLM to define new actions beyond
the supplied vocabulary, while \texttt{"if"} and \texttt{"while"} enable
conditional branching and loops.
By default only linear workflows over the supplied action vocabulary are permitted.
The optional \texttt{instructions} parameter lets the user guide how workers are
assigned to tasks.
Without it the runtime encourages the planner to use as many workers as
reasonable.

The designated LLM also receives the actual values of all input variables, so it
sees what it is working with rather than just variable names, and generates a
complete workflow in the ZipperGen DSL.
ZipperGen validates the generated workflow, projects it, and executes it.
Worker lifelines are declared by name in \texttt{@planner} and created implicitly
for the generated workflow's isolated scope, with their own threads and channels.
Worker names must be distinct from the calling lifeline.
Under these conditions the same projection and deadlock-freedom guarantees apply as for handwritten workflows.

Runtime generation does introduce uncertainty: the generated workflow is
LLM-produced and may not behave as intended.
Three properties of the framework help manage this.
First, each generated workflow has its own lifeline scope: it cannot access the
outer workflow's lifelines or local variables, and communicates back to the outer
workflow only through its declared return value. Declared effect actions may
nevertheless affect external state.
Second, structural invariants are checked before execution.
In particular, the generated workflow must end with \texttt{return}~\textit{var}~\texttt{@}~\textit{Lifeline},
where \textit{var} is in scope on \textit{Lifeline} at that point.
Third, generated workflows are surfaced in the runtime trace and logs,
so the produced coordination code can be inspected after validation and
execution.

\paragraph{Lightweight planner experiment.}
We use the arithmetic planner example as a small check of runtime planning.
The planner receives an arithmetic expression and must generate a ZipperGen
workflow that distributes the computation over three calculator lifelines and
calls deterministic calculator actions. The expression family is
\((2 - 4) * (2 + 3) + (3 / (3 - x))\), with \(x\in\{2,3\}\).
For \(x=3\), the final division has denominator \(0\). The expected global
result is then defined to be \(0\), so the planner has to generate a conditional
workflow rather than simply call division. We call a generated workflow \emph{accepted} if it parses and passes all static
checks (typing and dataflow validation). An attempt is retried whenever parsing or
static validation fails, and the retry prompt includes the validation error and the
rejected workflow. Once a workflow is accepted, it is executed without
further regeneration. For each model, we run each expression five times and allow
up to 8 attempts per run. We report first-attempt acceptance, acceptance within
the retry budget, the capped mean number of attempts, and whether the accepted
workflows execute locally and return the same numeric result as a direct evaluator
with the same division-by-zero convention.
The zero-shot prompt gives the workflow syntax, the three calculator lifelines,
the typed action vocabulary, and the convention for division by zero, but not a
target workflow or worked example. Unaccepted runs count as 8 attempts in the
capped mean. This is not a benchmark of arithmetic ability, but a check of whether
models can produce workflows that enter ZipperGen's validation, projection, and
execution pipeline.

Correct workflows for \(x=3\)
compute the denominator, test it with \texttt{is\_zero}, and return \(0\) for
the whole expression. For the incorrect accepted workflows, inspecting the
generated code showed that they applied the zero convention only to the division
subexpression, yielding \(-10\) instead of \(0\) for \(x=3\). Both local
Qwen models and GPT-4o mini failed to produce an accepted workflow within
the retry budget: their generated
workflows were rejected by the syntax and dataflow checks.
The failures separate coordination from task semantics: invalid workflows are
rejected before execution, while executable workflows may still implement the
wrong arithmetic convention. The latter affects \emph{Correct}, but not the
projection guarantees.
The expression set, planner prompt, retry policy, and evaluation script are included
with the implementation. The numbers reflect this particular setup, not a general
ranking of the models.

\begin{table}[!t]
\centering
\caption{Small runtime-planning check using the arithmetic planner.
Each model is run 5 times on each expression, with up to 8 attempts per run.
All accepted workflows are projected and executed locally.}
\label{tab:planner-eval}
\small
\begingroup
\newcommand{\plannerhead}[1]{\hspace{0.4em}#1\hspace{0.4em}}
\begin{tabular*}{\textwidth}{@{\extracolsep{\fill}}*{7}{c}@{}}
\toprule
\plannerhead{Model} & \plannerhead{Cases} & \plannerhead{\shortstack{Accepted\\first}} &
\plannerhead{\shortstack{Accepted\\within 8}} & \plannerhead{\shortstack{Capped\\tries}} &
\plannerhead{Exec.} & \plannerhead{Correct} \\
\midrule
\texttt{qwen2.5:7b} (local) & 10 & 0 & 0 & 8.0 & 0 & 0 \\
\texttt{qwen2.5:14b} (local) & 10 & 0 & 0 & 8.0 & 0 & 0 \\
GPT-4o mini & 10 & 0 & 0 & 8.0 & 0 & 0 \\
GPT-4.1 mini & 10 & 0 & 10 & 4.1 & 10 & 9 \\
GPT-4o & 10 & 7 & 8 & 3.1 & 8 & 7 \\
GPT-4.1 & 10 & 10 & 10 & 1.0 & 10 & 6 \\
\bottomrule
\end{tabular*}
\endgroup
\end{table}

\section{Example: Medical Diagnosis Consensus}
\label{sec:examples}

In this section, we illustrate the language with a consensus workflow for medical diagnosis (Listing~\ref{lst:consensus}). Two independent LLMs analyze patient notes to determine whether a diagnosis (e.g., sepsis) applies. Each makes an independent assessment, then they exchange verdicts and reasoning,
iterating until consensus or a round limit is reached, at which point the final
verdict is returned to the user.
\texttt{LLM1} owns the loop and broadcasts the continuation decision to \texttt{LLM2} at each iteration. Deadlock-freedom (Corollary~\ref{cor:deadlock-free}) guarantees that every partial execution can be extended to a complete one.

\begin{lstlisting}[caption={Consensus workflow (lifeline and variable declarations omitted).},label={lst:consensus}]
workflow diagnosis_consensus(notes: str @ User, diagnosis: str @ User) -> str {
    // Distribute notes to both LLMs
    msg User(notes, diagnosis) -> LLM1(notes, diagnosis)
    msg User(notes, diagnosis) -> LLM2(notes, diagnosis)

    // Independent initial assessments
    act LLM1 : (verdict, reason) = assess(notes, diagnosis)
    act LLM2 : (verdict, reason) = assess(notes, diagnosis)
    msg LLM2(verdict) -> LLM1(other_verdict)
    act LLM1 : agreed = check_agreement(verdict, other_verdict)

    // Consensus loop: exchange, reconsider, check (LLM1 owns loop)
    while (not agreed and trials < max_rounds)@LLM1 {
        msg LLM1(verdict, reason) ->
            LLM2(other_verdict, other_reason)
        msg LLM2(verdict, reason) ->
            LLM1(other_verdict, other_reason)
        // Reconsider given other's reasoning
        act LLM1 : (verdict, reason) = reconsider(notes, diagnosis,
                                            verdict, reason,
                                            other_verdict, other_reason)
        act LLM2 : (verdict, reason) = reconsider(notes, diagnosis,
                                            verdict, reason,
                                            other_verdict, other_reason)

        // LLM2 sends updated verdict to LLM1 for agreement check
        msg LLM2(verdict) -> LLM1(other_verdict)

        // Check agreement and advance trial counter
        act LLM1 : agreed = check_agreement(verdict, other_verdict)
        act LLM1 : trials = inc_trials(trials)
    } exit { epsilon }

    // Final result computed locally, then sent unconditionally
    act LLM1 : result = choose_result(verdict, agreed)
    msg LLM1(result) -> User(result)
    return result @ User
}
\end{lstlisting}

The LLM action definition of \texttt{assess} is shown in Listing~\ref{lst:consensus-actions}.
The action \texttt{reconsider} is analogous.
The remaining actions \texttt{check\_agreement},
\texttt{inc\_trials}, and
\texttt{choose\_result} are simple (non-LLM) functions.
\(\mathtt{LLM1}\) owns the loop.
Both LLMs appear in the body, so \(\mathtt{LLM2}\) receives a control broadcast at each iteration to learn
whether execution continues or exits.
The initial assessments are structurally independent: both LLMs act on the
same notes without seeing each other's reasoning before the first verdict
exchange. This independence is visible in the global workflow and requires
no additional annotation or runtime enforcement.

\begin{lstlisting}[caption={Representative action definitions for the consensus workflow.},label={lst:consensus-actions}]
llm assess(notes: str, diag: str) -> (verdict: str, reason: str) {
    system: "You are a medical expert. Analyze the notes and
              determine if the diagnosis applies.
              Return verdict (yes/no/unknown) and your reasoning."
    user: "Notes: {{notes}}\nDiagnosis: {{diag}}"
    parse: json //{"verdict": "yes"/"no"/"unknown", "reason": "..."}
}
\end{lstlisting}

Additional executable examples, including a planning-based parallel arithmetic evaluator, are included with ZipperGen.

\paragraph{Verifiable properties.}
The workflow is deadlock-free by construction. Assuming \texttt{trials} starts at $0$
and \texttt{inc\_trials} increments it by one, at most \texttt{max\_rounds}
reconsideration rounds run, a property of the action implementations and their
dataflow, not of the abstract coordination semantics, and not a termination guarantee
for individual LLM calls. The final action \texttt{choose\_result} returns
the agreed verdict or the conservative fallback \texttt{unknown}.

\section{Conclusion}
\label{sec:conclusion}

We presented a domain-specific language for multi-agent LLM coordination
based on message sequence charts, together with a syntax-directed projection that
derives correct local programs from a global workflow specification.
The projection is always defined, requires no realizability check, and guarantees
deadlock-freedom by construction.
The framework further extends to runtime-generated workflows: a planning LLM
produces a workflow that is accepted by the parser and static validator, which is then projected and executed
with the same correctness guarantees as a statically written program.

Future work includes a formal treatment of stale or non-returning LLM calls,
where timeout mechanisms and their semantic consequences will be essential.
A second direction is the integration of coregions~\cite{itu-msc}, which relax
the total order on receptions within a lifeline and allow a more flexible
treatment of concurrency.
Finally, the precise operational semantics of the framework makes it a natural
setting for runtime verification: monitors can be derived from the global workflow
and used to enforce safety properties at the boundary of the non-verifiable LLM
components, providing trust guarantees that static analysis alone cannot offer.
Recent work on trace-based assurance for agentic systems~\cite{PaduraruBS26}
illustrates the kind of runtime contracts and failure classes that such
monitors could target.


\bibliographystyle{splncs04}
\bibliography{lit}


\clearpage
\appendix
\renewcommand{\theHsection}{appendix.\Alph{section}}
\section{Proofs}
\label{app:proofs}

\subsection{Proof of Lemma~\ref{lem:concat-msc} (Concatenation with a Complete Prefix)}
For the MSC part, consider \(M_1=(u_A)_A\) complete and \(M_2=(v_A)_A\) an MSC.
Every receive event of \(M_1\circ M_2\) lies either in the prefix \(M_1\) or
in the suffix \(M_2\). Prefix receives are matched inside \(M_1\) because
\(M_1\) is complete. Suffix receives are matched inside \(M_2\) because \(M_2\)
is an MSC. Hence FIFO matching is defined on \(M_1\circ M_2\), and label
compatibility is inherited componentwise from \(M_1\) and \(M_2\).

For acyclicity, note that all cross-component causality edges go from events of
\(M_1\) to events of \(M_2\): local successor edges cross the concatenation
boundary only in that direction, and message-matching edges cannot point from
\(M_2\) back into \(M_1\) because every send in \(M_1\) is already matched
inside the complete prefix \(M_1\). Thus a cycle in \(M_1\circ M_2\) would
restrict to a cycle in \(M_1\) or \(M_2\), impossible since both are MSCs.
Therefore \(M_1\circ M_2\) is an MSC.

If \(M_2\) is complete as well, then every send in the concatenation is matched
inside its own factor, so \(M_1\circ M_2\) is complete.
\qed

\subsection{Proof of Lemma~\ref{lem:strip-complete-prefix} (Stripping a Complete Prefix)}
For the MSC part, every receive event occurring in the suffix \(N\) is matched
in \(C\circ N\). On each channel, completeness of \(C\) means that \(C\) contains
the same number of sends and receives. Therefore every suffix receive has a
channel index strictly beyond those of the prefix receives and is matched to the
corresponding suffix send. Hence all suffix receives are matched by suffix sends,
so FIFO matching is defined on \(N\). Label compatibility is
inherited from \(C\circ N\). Acyclicity is preserved because the causality
relation of \(N\) is the restriction of the causality relation of \(C\circ N\)
to suffix events.

If \(C\circ N\) is complete, then every suffix send is matched in \(C\circ N\).
By the same channel-count argument, every suffix send has a channel index
strictly beyond those of the prefix sends and is matched to the corresponding
suffix receive. Therefore every suffix send is matched inside \(N\), so
\(N\) is complete.
\qed

\subsection{Proof of Lemma~\ref{lem:local-prefix-free}}
By structural induction on $S$.
The atomic cases are singletons and therefore prefix-free.

For sequential composition, let
\[
uv,\ u'v' \in \ltr{A}{S_1;S_2}
\]
with \(u,u'\in\ltr{A}{S_1}\) and \(v,v'\in\ltr{A}{S_2}\), and assume
\(uv\preceq u'v'\).
Then \(u\) and \(u'\) are both prefixes of the word \(u'v'\), hence they
are comparable by \(\preceq\). By IH on \(S_1\), the language
\(\ltr{A}{S_1}\) is prefix-free, so \(u=u'\). It follows that
\(v\preceq v'\), and IH on \(S_2\) gives \(v=v'\). Thus
\(uv=u'v'\).

For each control construct, we distinguish two projection cases: the owner case, where the
true and false branches begin with distinct choice letters (namely \(\IfT{c}{A}\)
versus \(\IfF{c}{A}\)), and the recipient case, where they begin with
distinct receive events (\(\mathtt{recv}\ A(\vec{y}[\top/y_1])\leftarrow B\)
versus \(\mathtt{recv}\ A(\vec{y}[\bot/y_1])\leftarrow B\)).
In both cases the argument is the same: two traces starting with different
initial letters are incomparable by \(\preceq\), and if they begin with the
same letter, prefix-freeness follows from IH on the selected branch.

For the two while-cases, every complete trace has the form
\[
d^\top u_1 \cdots d^\top u_k d^\bot v
\]
where \(d^\top\) and \(d^\bot\) are the corresponding top/bot decision letters,
\(u_1,\ldots,u_k\in\ltr{A}{S_{\mathit{body}}}\), and
\(v\in\ltr{A}{S_{\mathit{exit}}}\).
Compare the common iterations from left to right. If two corresponding body
traces differ, the induction hypothesis on \(S_{\mathit{body}}\) makes them
prefix-incomparable, so the complete traces are prefix-incomparable. Otherwise
all common body traces agree. If the iteration counts differ, the next letters
are then a \(\bot\)-decision and a \(\top\)-decision, so again neither word is a
prefix of the other. If the iteration counts agree, prefix-freeness follows from
the induction hypothesis on \(S_{\mathit{exit}}\) for the exit traces. Hence the
complete traces are prefix-free.
\qed

\subsection{Proof of Lemma~\ref{lem:erase-preserves} (Erasure Preserves MSC Structure)}
The map \(\mathsf{erase}\) deletes only control-broadcast send/receive events
introduced by projection and leaves all other local letters unchanged.
Therefore the relative order of the surviving events is preserved.

If \(M\) is an MSC, every receive event that survives erasure was already a
non-control receive in \(M\), and its matching send also survives erasure.
Conversely, deleting matched control send/receive pairs cannot create unmatched
receives among the remaining events. Label compatibility is inherited from
\(M\). For acyclicity, note that deleting an event makes its surviving neighbors
adjacent, so \(\mathsf{erase}(M)\) is not literally a subrelation of \(M\); rather,
every local-successor edge between surviving events corresponds to a nonempty
local-order path in \(M\), and matched control send/receive pairs are removed in
lockstep while the FIFO matching of the surviving user events is preserved. Hence
every causal cycle in \(\mathsf{erase}(M)\) would induce a causal cycle in \(M\),
which is impossible.
Therefore \(\mathsf{erase}(M)\) is an MSC.

If \(M\) is complete, then every surviving send event in \(\mathsf{erase}(M)\)
was already matched in \(M\) by a surviving receive event, since only control
broadcast pairs are deleted. Thus every send in \(\mathsf{erase}(M)\) is still
matched, so we have that \(\mathsf{erase}(M)\) is complete.
\qed

\subsection{Boundary-Aligned Sequential Factorization}

\begin{lemma}[Boundary-aligned sequential factorization \leanproof{https://github.com/zippergen-io/zippergen-lean/blob/isola-camera-ready/isola/MSCAgents/LocalSemantics.lean\#L307}]
\label{lem:boundary-aligned}
Let \(A\in\mathscr L\), let \(S_1,S_2 \in \mathsf{LocProg}_A\) be local programs at lifeline \(A\),
and let \(u\in\lpref{A}{S_1;S_2}\). Then there exist words
\(u_1,u_2\in\Sigma_A^*\) such that
\[
u=u_1u_2,\qquad
u_1\in\lpref{A}{S_1},\qquad
u_2\in\lpref{A}{S_2},
\]
and, moreover,
\[
u_2\neq\varepsilon ~\Longrightarrow~ u_1\in\ltr{A}{S_1}.
\]
\end{lemma}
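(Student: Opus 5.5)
The plan is to unfold the definitions directly; no induction on $S_1$ or $S_2$ is needed beyond what Lemma~\ref{lem:local-prefix-free} already provides. Since $u\in\lpref{A}{S_1;S_2}$, there is a complete trace $w\in\ltr{A}{S_1;S_2}$ with $u\preceq w$. By the sequential clause of Table~\ref{tab:local-semantics}, $w=w_1w_2$ with $w_1\in\ltr{A}{S_1}$ and $w_2\in\ltr{A}{S_2}$. We then split according to the relative lengths of $u$ and $w_1$.

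\emph{Case $|u|\le|w_1|$.} Then $u\preceq w_1$, and we take $u_1\df u$ and $u_2\df\varepsilon$. We have $u_1\in\lpref{A}{S_1}$ because it is a prefix of $w_1$. Also $\varepsilon\in\lpref{A}{S_2}$ provided $\ltr{A}{S_2}$ is nonempty, which holds since $w_2$ exists. The implication holds vacuously because $u_2=\varepsilon$.

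\emph{Case $|u|>|w_1|$.} Then $w_1\prec u$, so $u=w_1u_2$ for some nonempty $u_2$. Cancelling $w_1$ in $u\preceq w_1w_2$ gives $u_2\preceq w_2$, hence $u_2\in\lpref{A}{S_2}$. Take $u_1\df w_1$. This is a complete trace of $S_1$, hence also in $\lpref{A}{S_1}$, and the required implication holds.

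Everything here is elementary word combinatorics. The only point needing care is the boundary choice when $|u|=|w_1|$. Assigning it to the first case keeps $u_2=\varepsilon$, while assigning it to the second would also work, since $u_1=w_1$ is complete. Prefix-freeness (Lemma~\ref{lem:local-prefix-free}) is not needed for existence. It would only matter for uniqueness of the factorization, which the statement does not claim. The main "obstacle" is therefore just bookkeeping with prefix cancellation on words, which I expect to be routine.
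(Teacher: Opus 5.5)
Your proposal is correct and follows essentially the same route as the paper's proof. You pick a witness $w=w_1w_2$ with $w_1\in\ltr{A}{S_1}$ and $w_2\in\ltr{A}{S_2}$, then split on whether $u$ stops within $w_1$ or crosses into $w_2$. You also make explicit that $\varepsilon\in\lpref{A}{S_2}$ because $w_2$ witnesses that $\ltr{A}{S_2}$ is nonempty, a step the paper leaves implicit.
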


\begin{proof}[Proof of Lemma~\ref{lem:boundary-aligned}]
Since \(u\in\lpref{A}{S_1;S_2}\), there exists \(w\in\ltr{A}{S_1;S_2}\) with
\(u\preceq w\). By the definition of the sequential composition semantics,
choose \(x\in\ltr{A}{S_1}\) and \(y\in\ltr{A}{S_2}\) such that \(w=xy\).
Since \(u\) is a prefix of the concatenation \(xy\), there are two cases.

If \(u\preceq x\), set \(u_1\df u\) and \(u_2\df\varepsilon\). Then
\(u_1\in\lpref{A}{S_1}\) and \(u_2\in\lpref{A}{S_2}\).

Otherwise, \(u\) crosses the \(S_1/S_2\)-boundary, so \(u=xu_2\) for some
prefix \(u_2\preceq y\). Set \(u_1\df x\). Then
\(u_1\in\ltr{A}{S_1}\) and \(u_2\in\lpref{A}{S_2}\). This yields the
required factorization.
\qed
\end{proof}

\subsection{Proof of Lemma~\ref{lem:uniform-zipper} (Zipper lemma)}
We proceed by structural induction on $P$.
Throughout the proof, $M=(u_Xv_X)_X$ is the MSC from the hypotheses.
Condition~3 of $\mathsf{ZipPost}$ (that $(u_Xv_X)_X$ is an MSC) is therefore
always satisfied by $M$ itself and requires no further argument.

\paragraph{Case $P=\varepsilon$.}
Then $u_X=\varepsilon$ for every $X$ (because
$u_X\in\lpref{X}{\proj{X}{\varepsilon}}=\{\varepsilon\}$).
Set $\bar u_X\df\varepsilon$. Hence
$\bar U=(\bar u_X)_X$ is a complete MSC and
$\bar M=(\bar u_Xv_X)_X=(v_X)_X=M$ is an MSC.
Thus \(\mathsf{ZipPost}_{\varepsilon}(U,\bar U,V)\) holds.

\paragraph{Cases $P=\mathtt{msg}\dots$, $P=\mathtt{act}\dots$.}
Assume $M=(u_Xv_X)_X$ satisfies the hypotheses.

For each $X$, define
\[
\bar u_X\df
\begin{cases}
u_X & \text{if } v_X\neq\varepsilon,\\[2mm]
u_Xt_X & \text{if } v_X=\varepsilon,
\end{cases}
\]
where $t_X$ is chosen such that
$u_Xt_X\in\ltr{X}{\proj{X}{P}}$.
Such a word exists because $u_X\in\lpref{X}{\proj{X}{P}}$.
Let $\bar U\df(\bar u_X)_X$ and
$\bar M\df(\bar u_Xv_X)_X$.

First, $\bar U$ is a complete MSC.
Indeed, for each $X$ we have
$\bar u_X\in\ltr{X}{\proj{X}{P}}$ by construction. Hence, for an action
statement, $\bar U$ is the corresponding canonical one-event MSC; for a
message statement, it is the canonical MSC consisting of the matching send
and receive of that statement. In both cases $\bar U$ is a complete MSC.

We show that $\bar M$ is an MSC.
First, $\lhd_{\bar M}$ is defined: pairs already matched in $M$
remain matched because FIFO matching is defined by channel indices and we
only append events at the end of local words.
Any new receives appear only in some $t_B$ (hence $v_B=\varepsilon$).
Fix such a receive on a channel $A\to B$.
Since $\bar u_A\in\ltr{A}{\proj{A}{P}}$ and
$\bar u_B\in\ltr{B}{\proj{B}{P}}$,
the projection of $P$ ensures that the corresponding send
appears in $\bar u_A$, hence the receive is matched.

Second, the causality relation of $\bar M$ (local order together with
FIFO constraints) is acyclic.
Here $P$ is a single action or message statement, so only its own events are
appended. If $P$ is an action statement, the only possible new event is an
isolated local action on a lifeline whose suffix is empty; it lies on no message
edge and cannot close a cycle. If $P=\mathtt{msg}\ A(\vec{x})\to B(\vec{y})$, the
only possible new events are its canonical send on $A$ and receive on $B$. If the
receive is newly appended, then $v_B=\varepsilon$, so this receive is the final
event on $B$ and has no outgoing causal edge. The case in which the receive is
already present but the send is newly appended is excluded because $M$ would then
contain an unmatched receive. Thus every new matched message edge ends in a
terminal receive and cannot lie on a causal cycle.
 Thus $\bar M$ is an MSC.
 
 Thus \(\mathsf{ZipPost}_{P}(U,\bar U,V)\) holds.

\paragraph{Auxiliary notation (decision prefixes).}
Let $P$ be a control construct (if or while) with decider $B$, and fix a branch value $\nu\in\{\top,\bot\}$.
Let $\mathcal R$ be the recipient set of that construct.
Write $\gamma_B^\nu$ for the \emph{choice letter} that $B$ emits when it decides~$\nu$:
this is $\IfT{c}{B}$ (resp.\ $\IfF{c}{B}$) for an if-construct,
and $\WhT{c}{B}$ (resp.\ $\WhF{c}{B}$) for a while-construct.
Define the \emph{decision-prefix word} $d_X^\nu$ on each lifeline $X$ by
\[
d_X^\nu \df
\begin{cases}
\gamma_B^\nu\cdot\prod_{A\in\mathcal R}^{\sqsubset} \mathtt{send}\ B(\nu,\kappa_{\mathsf{ctrl}}^P)\to A
& \text{if } X=B,\\[0.3em]
\mathtt{recv}\ X(\nu,\kappa_{\mathsf{ctrl}}^P)\leftarrow B
& \text{if } X\in \mathcal R,\\[0.3em]
\varepsilon
& \text{if } X\notin \mathcal R\cup\{B\}.
\end{cases}
\]
Note that the owner's component now starts with the choice letter $\gamma_B^\nu$,
which records the branch taken before any control message is sent.
In the if-case, this ensures the decomposition
$\proj{B}{P}=d_B^\nu;\proj{B}{Q_\nu}$ holds on the decider lifeline,
where $Q_\nu$ denotes the branch subprogram of $P$ corresponding to $\nu$.
Let $D^\nu \df (d_X^\nu)_{X\in\mathscr L}$.

\begin{lemma}[Decision-Prefix Stripping \leanproof{https://github.com/zippergen-io/zippergen-lean/blob/isola-camera-ready/isola/MSCAgents/BroadcastMSC.lean\#L321}]
\label{lem:strip-decision-prefix}
Let \(P\) be a projected if- or while-construct with decision-prefix tuple
\(D^\nu=(d_X^\nu)_X\) for some \(\nu\in\{\top,\bot\}\).
Let \(M=(w_X)_X\) be an MSC such that, for every lifeline \(X\),
\[
w_X\prec d_X^\nu
\qquad\text{or}\qquad
w_X=d_X^\nu s_X
\]
for some word \(s_X\).
Define
\[
r_X \df
\begin{cases}
\varepsilon & \text{if } w_X\prec d_X^\nu,\\
s_X & \text{if } w_X=d_X^\nu s_X,
\end{cases}
\qquad
M' \df (r_X)_X.
\]
Then \(M'\) is an MSC.
\end{lemma}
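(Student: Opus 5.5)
The plan is to reduce to Lemma~\ref{lem:strip-complete-prefix} wherever the decision prefix is fully present, and otherwise to argue directly on the FIFO structure. Let \(F\) be the set of lifelines \(X\) with \(w_X=d_X^\nu s_X\) (full prefix present), and \(T\) the set with \(w_X\prec d_X^\nu\) (truncated). Observe first that \(M'\) is obtained from \(M\) by deleting, on each lifeline, an initial segment of events: all of \(w_X\) for \(X\in T\), and exactly the letters of \(d_X^\nu\) for \(X\in F\). Since \(M'\) is a suffix-restriction of \(M\), acyclicity of the causal order will be inherited once we know that the FIFO matching of \(M'\) is the restriction of that of \(M\). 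Likewise payload compatibility is inherited from \(M\). So the whole argument reduces to showing that every receive of \(M'\) is matched to a send of \(M'\), and that the matching indices agree with those in \(M\).

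The key step is a channel-count analysis. The only events in the deleted prefixes that touch channels are the control sends \(B\to A\) for \(A\in\mathcal R\) on lifeline \(B\), and the single control receive \(A\leftarrow B\) on each \(A\in\mathcal R\); choice letters touch no channel. Hence on every channel other than \(B\to A\) with \(A\in\mathcal R\), nothing is deleted, and the matching is unchanged. Fix \(A\in\mathcal R\) and the channel \(B\to A\). We distinguish cases. If \(A\in F\) and \(B\in F\), exactly one send (the first on the channel) and one receive (the first) are deleted, so the \(i\)th remaining receive is matched in \(M'\) to the \(i\)th remaining send, which is precisely its partner \((i{+}1)\)th in \(M\). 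If \(A\in T\) and \(w_A\) is truncated, then \(w_A=\varepsilon\) (as \(d_A^\nu\) has length one), so \(A\) contributes no events to \(M'\) and nothing needs matching on the receiving side; extra sends in \(M'\) are harmless since MSCs need not be complete. If \(A\in F\) but \(B\in T\), then \(A\) has a receive on \(B\to A\) as its first event while \(B\) has not (fully) emitted its control send to \(A\); here I argue that the send to \(A\) must already be in \(w_B\) since \(M\) is an MSC and receives are matched, and the sends in \(d_B^\nu\) precede everything else \(B\) does on the channel, so \(B\) has no further sends on \(B\to A\) and \(A\)'s first receive is the only receive in \(w_A\) on that channel: indeed any later receive on that channel would need a later send in \(w_B\), impossible as \(w_B\prec d_B^\nu\). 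Then in \(M'\) there are no sends from \(B\) and no receives by \(A\) on this channel, which is consistent.

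The main obstacle is exactly this mixed case \(A\in F\), \(B\in T\) (and symmetric bookkeeping that the first event on \(B\to A\) in \(w_B\) really is the control send, i.e.\ that \(B\)'s prefix is positioned first on its word); the observation that \(B\in T\) leaves no room for additional sends on \(B\to A\) closes it. A subtler point is that we never need to know the payload of the deleted receive: FIFO indexing alone fixes the correspondence, so the tag distinctness is not needed here, matching the remark that the machine-checked proof does not use it. Having established that \(\lhd_{M'}\) is the restriction of \(\lhd_M\) to surviving events, conditions (1)--(3) of Definition~\ref{def:msc} for \(M'\) follow: matching and payload agreement directly, and acyclicity because local-successor edges of \(M'\) are local-successor edges of \(M\) (we delete only prefixes, so no adjacency is created), hence any cycle in \(M'\) is a cycle in \(M\).
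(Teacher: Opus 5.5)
Your proof is correct and takes essentially the paper's route. Both arguments do a case analysis on each control channel $B\to A$, showing that stripping removes either a matched initial control send/receive pair or a control send with no receive left in $M'$. FIFO indices therefore shift uniformly, matching and payload agreement are inherited from $M$, and any causal cycle in $M'$ would be one in $M$.

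Your opening plan to reduce to Lemma~\ref{lem:strip-complete-prefix} is never used and can be dropped. Your remark that deleting only initial segments creates no new local adjacencies is a slightly sharper form of the paper's acyclicity step.
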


\begin{proof}[Proof of Lemma~\ref{lem:strip-decision-prefix}]
The deleted events on each lifeline form an initial segment of the
decision-prefix block \(d_X^\nu\).

We check that no unmatched receive is created.
Consider a control channel \(B\to A\) (owner to recipient).
The owner's send \(\mathtt{send}\ B(\nu,\kappa_{\mathsf{ctrl}}^P)\to A\)
appears in \(d_B^\nu\), and the recipient's matching receive
\(\mathtt{recv}\ A(\nu,\kappa_{\mathsf{ctrl}}^P)\leftarrow B\) is the sole
event of \(d_A^\nu\).
If the send is present in \(w_B\), it lies in \(d_B^\nu\) and is absent
from \(r_B\).
The receive is either also present in \(w_A\) (so \(w_A=d_A^\nu t_A\) and it
is likewise absent from \(r_A\)), or absent from \(w_A\) entirely.
In both sub-cases no unmatched receive appears in \(M'\).
If the send is absent (\(w_B\prec d_B^\nu\)), then \(M\) being an MSC ensures
the receive is absent from \(w_A\) as well (otherwise it would be unmatched
in \(M\)), so nothing changes for \(A\).
In all cases no unmatched receive is introduced.
Since \(d_X^\nu\) consists entirely of control events (the choice letter
and control sends/receives), non-control events appear only in the suffix
\(s_X\) and are fully preserved in \(M'\), hence no unmatched receive or
send is introduced on non-control channels either.
On each control channel, stripping removes either a matched initial
control send/receive pair, or a control send whose receive has not yet
occurred. In the latter case, the recipient has not passed its initial
control receive and therefore has no later events on that channel.
Consequently, the FIFO indices of surviving sends and receives are shifted
equally, and every surviving receive remains matched to the same surviving
send as in \(M\). Label compatibility is therefore preserved.

For acyclicity, every local-successor edge in \(M'\) corresponds to a
nonempty local-order path in \(M\), while every surviving message edge is
also a message edge of \(M\). Hence every causal cycle in \(M'\) would induce
a causal cycle in \(M\), contradicting that \(M\) is an MSC.
Hence \(M'\) is an MSC.
\qed
\end{proof}

\paragraph{If-construct case.}
Let
\[
P=\mathtt{if}\ c@B\ \mathtt{then}\ Q_{\top}\ \mathtt{else}\ Q_{\bot}.
\]
Inspect the decision-prefix events of this if-construct that occur in the
prefix tuple $U=(u_X)_X$. If any such event is present, the owner's choice
letter $\gamma_B^\nu$ and FIFO matching of the corresponding initial control
messages determine a unique branch value $\nu\in\{\top,\bot\}$, consistently
across all participating lifelines.
If no such event occurs in $U$, both values are compatible and we choose
one arbitrarily. In this situation, no participating lifeline can already have
entered the continuation of either branch: by the local trace semantics of the
projected if-construct, every such local trace starts with the corresponding
decision-prefix word \(d_X^\nu\).
Unfolding the projection and local prefix semantics with this value yields, for
each $X$:
\[
u_X\prec d_X^\nu
\qquad\text{or}\qquad
u_X=d_X^\nu u'_X\ \text{with }\ u'_X\in\lpref{X}{\proj{X}{Q_\nu}}.
\]
These are exactly the two prefix forms of the sequential decomposition
$\proj{X}{P}=d_X^\nu;\proj{X}{Q_\nu}$.

For each $X$, define the remainder
\[
r_X \df
\begin{cases}
\varepsilon & \text{if } u_X\prec d_X^\nu,\\
u'_X & \text{if } u_X=d_X^\nu u'_X\ \text{for some }u'_X\in\lpref{X}{\proj{X}{Q_\nu}}.
\end{cases}
\]
Moreover, for each $X$, if $v_X\neq\varepsilon$, then
$u_X\in\ltr{X}{\proj{X}{P}}$ by hypothesis.
Therefore the first alternative $u_X\prec d_X^\nu$ is impossible, hence
$u_X=d_X^\nu u'_X$ and $u'_X\in\ltr{X}{\proj{X}{Q_\nu}}$.
Apply Lemma~\ref{lem:strip-decision-prefix} to \(M=(u_Xv_X)_X\) with
\(w_X\df u_Xv_X\).
The two conditions of the lemma are satisfied: \(M\) is an MSC by hypothesis,
and each \(w_X\) either satisfies \(w_X\prec d_X^\nu\) (which forces
\(v_X=\varepsilon\) by the Zipper side condition, so \(w_X=u_X\prec d_X^\nu\))
or satisfies \(w_X=d_X^\nu(u'_Xv_X)\) (when \(u_X=d_X^\nu u'_X\)).
The lemma yields that
\[
M'\df(r_Xv_X)_X
\]
is an MSC.
Since $\proj{X}{P}=d_X^\nu;\proj{X}{Q_\nu}$, we have
$r_X\in\lpref{X}{\proj{X}{Q_\nu}}$ for every $X$;
and if $v_X\neq\varepsilon$ then $r_X=u'_X\in\ltr{X}{\proj{X}{Q_\nu}}$.
Apply IH to \(Q_\nu\) with respect to the factorization
$M'=(r_Xv_X)_X$.
We obtain a completion $\bar U_\nu\df(\bar r_X)_X$ such that
\[
\mathsf{ZipPost}_{Q_\nu}((r_X)_X,\bar U_\nu,V).
\]
Define
\[
\bar u_X\df d_X^\nu \bar r_X.
\]
We verify the four conditions of $\mathsf{ZipPost}_{P}(U,\bar U,V)$.
\emph{Condition~1.}
Since $r_X\preceq \bar r_X$ by Condition~1 of $\mathsf{ZipPost}_{Q_\nu}$, we have
$u_X\preceq \bar u_X$ (either $u_X\prec d_X^\nu\preceq\bar u_X$, or
$u_X=d_X^\nu u'_X=d_X^\nu r_X\preceq d_X^\nu\bar r_X=\bar u_X$).
Also $\bar u_X=d_X^\nu\bar r_X\in\ltr{X}{\proj{X}{P}}$ since
$\bar r_X\in\ltr{X}{\proj{X}{Q_\nu}}$.
If $v_X\neq\varepsilon$, then $\bar r_X=r_X=u'_X$ by Condition~1 of
$\mathsf{ZipPost}_{Q_\nu}$, so $\bar u_X=d_X^\nu u'_X=u_X$.
\emph{Condition~2.}
$\bar U_\nu=(\bar r_X)_X$ is a complete MSC by Condition~2 of
$\mathsf{ZipPost}_{Q_\nu}$, and $D^\nu$ is a complete MSC by construction,
so
\[
\bar U=D^\nu\circ \bar U_\nu
\]
is a complete MSC by Lemma~\ref{lem:concat-msc}.
\emph{Condition~3.}
$(u_Xv_X)_X=M$ is an MSC by hypothesis.
\emph{Condition~4.}
\[
(\bar u_Xv_X)_X
=
D^\nu \circ (\bar r_Xv_X)_X
\]
is an MSC by Lemma~\ref{lem:concat-msc}, since $D^\nu$ is a complete MSC and
$(\bar r_Xv_X)_X$ is an MSC by Condition~4 of $\mathsf{ZipPost}_{Q_\nu}$.
Thus \(\mathsf{ZipPost}_{P}(U,\bar U,V)\) holds.

\paragraph{While-construct case.}
Let
\[
P\df\mathtt{while}\ c@B\ \mathtt{do}\ Q\ \mathtt{exit}\ E.
\]
We argue by a secondary induction on the number \(m\ge 0\) of
already-started decision blocks of this while-construct visible in the prefix
tuple \(U=(u_X)_X\), equivalently in the unique decomposition of the owner's
prefix \(u_B\).
The control sends and receives of these blocks carry the tag
\(\kappa_{\mathsf{ctrl}}^P\), distinct from the tags of all constructs nested
inside \(Q\) or \(E\). The owner's choice letters carry no tag, and a while-construct
nested inside \(Q\) may share the same condition and owner. The count \(m\) is
nevertheless well defined, because the decomposition of the owner's prefix
\(u_B\in\lpref{B}{\proj{B}{P}}\) into decision blocks, completed body traces,
and a partial remainder is unique: a completed body trace can never be a
proper prefix of another word of \(\ltr{B}{\proj{B}{Q}}\) by prefix-freeness
(Lemma~\ref{lem:local-prefix-free}), so the boundaries between iterations are
determined. Thus \(m\) is the number of unfoldings of this while-construct
visible in that decomposition. The associated control sends and receives may
still be only partially present.
(In the Lean formalization, the secondary induction is instead performed on
the length of the owner's prefix \(u_B\), which sidesteps this counting
argument.)
The owner's decision sequence is determined by this decomposition; by FIFO
matching, any visible control sends and receives are consistent with it.
The resulting sequence of decisions has the form
\[
\nu_1\ldots \nu_m \in \{\top\}^* \cup \{\top\}^*\bot .
\]

\smallskip
\emph{Base case \(m=0\).}
Then no decision-prefix event (choice letter, control send, or control receive)
of this while-construct has yet started in the tuple.
In particular, for every lifeline \(X\) that participates in this
while-construct, one has \(u_X=\varepsilon\).
Equivalently, no participating lifeline has yet entered either the loop body or
the exit continuation, because every participating local trace of the projected
while-program begins with the first decision-prefix block.
Moreover, if \(v_X\neq\varepsilon\), then by the zipper side condition
\(u_X\in\ltr{X}{\proj{X}{P}}\) holds.
Thus, such an \(X\) cannot participate in the
while-construct, because every complete local trace of a participating
projection of \(P\) begins with a decision-prefix event (a choice letter on
the owner, a control receive on each recipient). Since here \(u_X=\varepsilon\),
the only way \(u_X\in\ltr{X}{\proj{X}{P}}\) can hold is that
\(\proj{X}{P}=\varepsilon\), i.e., \(X\) does not participate in the
while-construct at all. In particular, \(X\notin\LS{E}\), so
\(\proj{X}{E}=\varepsilon\).

We choose the first decision value to be \(\bot\).
Let
\[
r_X\df\varepsilon
\qquad\text{for all }X\in\mathscr L,
\]
and consider the tuple
\[
M'\df(r_Xv_X)_X=(v_X)_X.
\]
For every \(X\), one has
\[
r_X=\varepsilon\in\lpref{X}{\proj{X}{E}},
\]
and if \(v_X\neq\varepsilon\), then \(X\) does not participate in the
while-construct, hence also
\[
r_X=\varepsilon\in\ltr{X}{\proj{X}{E}}.
\]
Therefore the structural induction hypothesis applies to \(E\) and
\(M'=(r_Xv_X)_X\). We obtain a tuple
\[
\bar R=(\bar r_X)_X
\]
such that
\[
\mathsf{ZipPost}_{E}((r_X)_X,\bar R,V).
\]
Now define
\[
\bar u_X\df d_X^\bot\,\bar r_X
\qquad\text{for all }X\in\mathscr L,
\]
and let \(\bar U\df(\bar u_X)_X\).
Since \(u_X=\varepsilon\preceq \bar u_X\), Condition~1a holds.
If \(v_X\neq\varepsilon\), then \(X\) does not participate in the
while-construct (shown above), so \(d_X^\bot=\varepsilon\) and
\(\bar u_X=\varepsilon=u_X\), and Condition~1c holds.
\(\bar u_X\in\ltr{X}{\proj{X}{P}}\) for every \(X\) (Condition~1b), because a
\(\bot\)-decision followed by an \(E\)-trace is exactly a complete local trace
of the projected while-program.
Furthermore,
\[
\bar U=D^\bot\circ \bar R
\]
is a complete MSC (Condition~2), and
\[
\bar M=(\bar u_Xv_X)_X
=
D^\bot\circ (\bar r_Xv_X)_X
\]
is an MSC by Lemma~\ref{lem:concat-msc}, since \(D^\bot\) is a complete MSC
(Condition~4).
Hence \(\mathsf{ZipPost}_{P}(U,\bar U,V)\) holds.
This yields the required completion in the base case.

\smallskip
\emph{Induction step \(m>0\).}
Let \(\nu\df\nu_1\in\{\top,\bot\}\) be the first visible decision value.
By unfolding the definitions of projection for \(P\) and of
\(\lpref{X}{\cdot}\), each local prefix \(u_X\in\lpref{X}{\proj{X}{P}}\) is
of one of the following forms:
\begin{itemize}
\item $u_X\prec d_X^\nu$,
\item
$u_X=d_X^\top u'_X$ with $u'_X\in\lpref{X}{\proj{X}{Q;P}}$,
\item
$u_X=d_X^\bot u'_X$ with $u'_X\in\lpref{X}{\proj{X}{E}}$.
\end{itemize}
Since \(\nu\) is the first visible decision value forced by the decision-prefix
events already present in \(M\), only the branch corresponding to this \(\nu\)
can occur.

Define the remainder
\[
r_X \df
\begin{cases}
\varepsilon & \text{if } u_X\prec d_X^\nu,\\
u'_X & \text{if } u_X=d_X^\nu u'_X.
\end{cases}
\]
Moreover, if \(v_X\neq\varepsilon\), then by hypothesis
\(u_X\in\ltr{X}{\proj{X}{P}}\), so the alternative \(u_X\prec d_X^\nu\) is
impossible. Thus whenever \(v_X\neq\varepsilon\), the first decision block has
already been fully consumed on \(X\).
Apply Lemma~\ref{lem:strip-decision-prefix} to \(M=(u_Xv_X)_X\) with
\(w_X\df u_Xv_X\): since \(v_X\neq\varepsilon\) implies \(u_X=d_X^\nu u'_X\),
each \(w_X\) satisfies the prefix condition.
The lemma yields that
\[
M'\df(r_Xv_X)_X
\]
is again an MSC.

We now distinguish the two possible values of \(\nu\).

\smallskip
\emph{Subcase \(\nu=\bot\).}
Then for every \(X\),
\[
r_X\in\lpref{X}{\proj{X}{E}},
\]
and whenever \(v_X\neq\varepsilon\), since \(u_X\in\ltr{X}{\proj{X}{P}}\)
and \(u_X=d_X^\bot u'_X\), every complete local trace of \(\proj{X}{P}\)
starting with \(d_X^\bot\) continues in \(\ltr{X}{\proj{X}{E}}\), hence
\[
r_X=u'_X\in\ltr{X}{\proj{X}{E}}.
\]
Hence the structural induction hypothesis applies to \(E\) and
\(M'=(r_Xv_X)_X\). We obtain a completion
\[
\bar R=(\bar r_X)_X
\]
such that
\[
\mathsf{ZipPost}_{E}((r_X)_X,\bar R,V).
\]
Define
\[
\bar u_X\df d_X^\bot\,\bar r_X,
\]
and let \(\bar U\df(\bar u_X)_X\).
For every \(X\), \(u_X\preceq \bar u_X\): if \(u_X=d_X^\bot r_X\) then
\(u_X=d_X^\bot r_X\preceq d_X^\bot \bar r_X=\bar u_X\) by Condition~1 of
\(\mathsf{ZipPost}_{E}\): if \(u_X\prec d_X^\bot\) then \(u_X\preceq d_X^\bot\preceq \bar u_X\).
If \(v_X\neq\varepsilon\), then \(u_X=d_X^\bot r_X\) (shown above) and
\(\bar r_X=r_X\) by Condition~1 of \(\mathsf{ZipPost}_{E}\), hence
\(\bar u_X=d_X^\bot \bar r_X=d_X^\bot r_X=u_X\).
Therefore Condition~1 of \(\mathsf{ZipPost}_{P}(U,\bar U,V)\) holds.
\(\bar u_X\in\ltr{X}{\proj{X}{P}}\) for every \(X\) (Condition~1b), and
\[
\bar U=D^\bot\circ \bar R
\]
is a complete MSC (Condition~2), while
\[
\bar M=(\bar u_Xv_X)_X
=
D^\bot\circ (\bar r_Xv_X)_X
\]
is an MSC by Lemma~\ref{lem:concat-msc}, since \(D^\bot\) is a complete MSC
(Condition~4).
Hence \(\mathsf{ZipPost}_{P}(U,\bar U,V)\) holds.

\smallskip
\emph{Subcase \(\nu=\top\).}
Then for every \(X\),
\[
r_X\in\lpref{X}{\proj{X}{Q;P}},
\]
and whenever \(v_X\neq\varepsilon\), since \(u_X\in\ltr{X}{\proj{X}{P}}\)
and \(u_X=d_X^\top u'_X\), every complete local trace of \(\proj{X}{P}\)
starting with \(d_X^\top\) continues in \(\ltr{X}{\proj{X}{Q;P}}\), hence
\[
r_X=u'_X\in\ltr{X}{\proj{X}{Q;P}}.
\]
By the prefix semantics of sequential composition, factor
\[
r_X=b_Xs_X
\]
with
\[
b_X\in\lpref{X}{\proj{X}{Q}}
\qquad\text{and}\qquad
s_X\in\lpref{X}{\proj{X}{P}}.
\]
By Lemma~\ref{lem:boundary-aligned}, we may choose this factorization
boundary-aligned, i.e.,
\[
s_X\neq\varepsilon ~\Longrightarrow~ b_X\in\ltr{X}{\proj{X}{Q}}.
\]
The remaining case needed for the IH side condition is \(s_X=\varepsilon\)
and \(v_X\neq\varepsilon\): then \(r_X=b_X\in\lpref{X}{\proj{X}{Q}}\), and
the condition established above (\(v_X\neq\varepsilon\Longrightarrow r_X\in\ltr{X}{\proj{X}{Q;P}}\))
gives \(r_X\in\ltr{X}{\proj{X}{Q;P}}\).
Since \(b_X=r_X\in\ltr{X}{\proj{X}{Q;P}}\), it factors as \(b_X=c_Xd_X\)
with \(c_X\in\ltr{X}{\proj{X}{Q}}\) and \(d_X\in\ltr{X}{\proj{X}{P}}\).
Let \(e_X\in\ltr{X}{\proj{X}{Q}}\) be a complete trace with
\(b_X\preceq e_X\) (which exists because \(b_X\in\lpref{X}{\proj{X}{Q}}\)).
Then \(c_X\preceq b_X\preceq e_X\) with \(c_X,e_X\in\ltr{X}{\proj{X}{Q}}\);
by prefix-freeness of \(\ltr{X}{\proj{X}{Q}}\), this forces \(c_X=e_X\),
hence \(d_X=\varepsilon\) and \(b_X=c_X\in\ltr{X}{\proj{X}{Q}}\).
Hence in all cases,
\[
s_Xv_X\neq\varepsilon ~\Longrightarrow~ b_X\in\ltr{X}{\proj{X}{Q}}.
\]

Now consider
\[
N\df(b_X(s_Xv_X))_X.
\]
Since \(N=(r_Xv_X)_X=M'\) is an MSC and
\[
b_X\in\lpref{X}{\proj{X}{Q}},
\]
with the required side condition established above, the
structural induction hypothesis applies to \(Q\) and \(N\).
Hence there exists a completion
\[
\bar U_Q=(\bar b_X)_X
\]
such that
\[
\mathsf{ZipPost}_{Q}((b_X)_X,\bar U_Q,(s_Xv_X)_X).
\]
In particular, \(M''\df(s_Xv_X)_X\) is an MSC.
Furthermore, if \(v_X\neq\varepsilon\), then \(s_Xv_X\neq\varepsilon\), hence
\(\bar b_X=b_X\in\ltr{X}{\proj{X}{Q}}\). Since
\[
b_Xs_X=r_X\in\ltr{X}{\proj{X}{Q;P}},
\]
factor \(b_Xs_X=c_Xd_X\) with \(c_X\in\ltr{X}{\proj{X}{Q}}\) and
\(d_X\in\ltr{X}{\proj{X}{P}}\).
Since \(b_X,c_X\in\ltr{X}{\proj{X}{Q}}\) are comparable by the prefix order
(as \(b_X\preceq c_Xd_X\)), prefix-freeness of \(\ltr{X}{\proj{X}{Q}}\) forces
\(c_X=b_X\), hence
\[
s_X=d_X\in\ltr{X}{\proj{X}{P}}.
\]
Thus \(v_X\neq\varepsilon\Longrightarrow s_X\in\ltr{X}{\proj{X}{P}}\),
i.e., the zipper side condition holds for \(M''=(s_Xv_X)_X\) with respect to \(P\).

It remains to justify the decrease of the secondary induction measure. Recall
that \(m\) counts the started owner-side decision blocks of this
while-construct in the owner's prefix \(u_B\), where \(\nu_1=\top\) in the
present subcase. The
first transformation,
\[
M \;\longrightarrow\; M'=(r_Xv_X)_X,
\]
removes the first owner choice \(\gamma_B^\top\), together with any visible
prefix of its decision block, by Lemma~\ref{lem:strip-decision-prefix}. The second
transformation,
\[
M' \;\longrightarrow\; M''=(s_Xv_X)_X,
\]
removes the \(Q\)-prefix \(b_X\) from each lifeline via the factorization
\(r_X=b_Xs_X\), where \(b_X\in\lpref{X}{\proj{X}{Q}}\) and
\(s_X\in\lpref{X}{\proj{X}{P}}\). Here \(Q\) is the loop body executed after
the first \(\top\)-decision, whereas \(P\) is the recursive continuation of
the same while-program after that body iteration.

Removing the first decision block and the following \(Q\)-prefix \(b_B\)
from \(u_B\) leaves the recursive prefix \(s_B\), whose unique outer-loop
decomposition contains exactly \(m-1\) started decision blocks; the suffix
\(v_B\) plays no role in this count.

The secondary induction hypothesis therefore applies to the instance with
prefix tuple \((s_X)_X\) and suffix tuple \(V=(v_X)_X\). Hence there
exists a completion
\[
\bar S=(\bar s_X)_X
\]
such that
\[
\mathsf{ZipPost}_{P}((s_X)_X,\bar S,V).
\]
Define
\[
\bar u_X\df d_X^\top\,\bar b_X\,\bar s_X,
\]
and let \(\bar U\df(\bar u_X)_X\).
For every \(X\), if \(u_X\prec d_X^\top\), then
\(u_X\preceq d_X^\top\preceq d_X^\top\bar b_X\bar s_X=\bar u_X\).
Otherwise \(u_X=d_X^\top r_X=d_X^\top b_Xs_X\), and Condition~1 of the two
postconditions \(\mathsf{ZipPost}_{Q}((b_X)_X,\bar U_Q,(s_Xv_X)_X)\) and
\(\mathsf{ZipPost}_{P}((s_X)_X,\bar S,V)\) gives \(b_X\preceq\bar b_X\) and
\(s_X\preceq\bar s_X\), hence again
\(u_X\preceq d_X^\top\bar b_X\bar s_X=\bar u_X\).
If \(v_X\neq\varepsilon\), the first case is impossible, and the two
postconditions give \(\bar b_X=b_X\) and \(\bar s_X=s_X\), so \(\bar u_X=u_X\).
Therefore Condition~1 of
\(\mathsf{ZipPost}_{P}(U,\bar U,V)\) holds.
Moreover, \(\bar u_X\in\ltr{X}{\proj{X}{P}}\) for every \(X\), because a
\(\top\)-decision followed by a \(Q\)-trace and then a \(P\)-trace is exactly a
complete local trace of the projected while-program.
Since \(\bar U_Q\) is a complete MSC and
\((\bar s_Xv_X)_X\) is an MSC (from \(\mathsf{ZipPost}_{P}((s_X)_X,\bar S,V)\)),
Lemma~\ref{lem:concat-msc} yields that
\((\bar b_X\bar s_Xv_X)_X=\bar U_Q\circ(\bar s_Xv_X)_X\) is an MSC.
Moreover,
\[
\bar U=D^\top\circ \bar U_Q\circ \bar S
\]
is a complete MSC, and
\[
\bar M=(\bar u_Xv_X)_X
=
D^\top\circ (\bar b_X\bar s_Xv_X)_X
\]
is an MSC by another application of Lemma~\ref{lem:concat-msc},
since \(D^\top\) is a complete MSC.

In both subcases, we have constructed a tuple \(\bar U=(\bar u_X)_X\) such
that the condition \(\mathsf{ZipPost}_{P}(U,\bar U,V)\) holds.
This concludes the while case.

\medskip
\paragraph{Composition case.}
Let $P=Q_1; Q_2$.
Since we have \(u_X\in\lpref{X}{\proj{X}{Q_1;Q_2}}\), apply
Lemma~\ref{lem:boundary-aligned} to obtain a factorization
\[
u_X=u_X^{(1)}u_X^{(2)},
\qquad
u_X^{(1)}\in\lpref{X}{\proj{X}{Q_1}},
\qquad
u_X^{(2)}\in\lpref{X}{\proj{X}{Q_2}},
\]
with the additional guarantee that
\[
u_X^{(2)}\neq\varepsilon ~\Longrightarrow~ u_X^{(1)}\in\ltr{X}{\proj{X}{Q_1}}.
\]
We now show that the two IH side conditions hold for this factorization.

\emph{First side condition} (\(u_X^{(2)}v_X\neq\varepsilon \Longrightarrow
u_X^{(1)}\in\ltr{X}{\proj{X}{Q_1}}\)).
If \(u_X^{(2)}\neq\varepsilon\), this follows directly from the guarantee of
Lemma~\ref{lem:boundary-aligned}.
If \(u_X^{(2)}=\varepsilon\) and \(v_X\neq\varepsilon\), then
\(u_X=u_X^{(1)}\) and the zipper hypothesis gives
\(u_X^{(1)}\in\ltr{X}{\proj{X}{Q_1;Q_2}}\).
Hence there exist \(w_1\in\ltr{X}{\proj{X}{Q_1}}\) and
\(w_2\in\ltr{X}{\proj{X}{Q_2}}\) with \(u_X^{(1)}=w_1w_2\).
Since \(u_X^{(1)}\in\lpref{X}{\proj{X}{Q_1}}\), there exists
\(t\in\ltr{X}{\proj{X}{Q_1}}\) with \(u_X^{(1)}\preceq t\).
Then \(w_1\preceq u_X^{(1)}\preceq t\) with \(w_1,t\in\ltr{X}{\proj{X}{Q_1}}\).
By prefix-freeness of \(\ltr{X}{\proj{X}{Q_1}}\), \(w_1=t\), so
\(u_X^{(1)}\preceq w_1\).
Combined with \(w_1\preceq u_X^{(1)}\), we get \(u_X^{(1)}=w_1\in\ltr{X}{\proj{X}{Q_1}}\)
and \(w_2=\varepsilon\).

\emph{Second side condition} (\(v_X\neq\varepsilon \Longrightarrow
u_X^{(2)}\in\ltr{X}{\proj{X}{Q_2}}\)).
Assume \(v_X\neq\varepsilon\).
Since \(v_X\neq\varepsilon\) implies \(u_X^{(2)}v_X\neq\varepsilon\), the
first side condition gives \(u_X^{(1)}\in\ltr{X}{\proj{X}{Q_1}}\).
The zipper hypothesis gives
\(u_X=u_X^{(1)}u_X^{(2)}\in\ltr{X}{\proj{X}{Q_1;Q_2}}
=\ltr{X}{\proj{X}{Q_1}}\cdot\ltr{X}{\proj{X}{Q_2}}\),
so there exist \(w_1'\in\ltr{X}{\proj{X}{Q_1}}\) and
\(w_2'\in\ltr{X}{\proj{X}{Q_2}}\) with \(u_X^{(1)}u_X^{(2)}=w_1'w_2'\).
From \(u_X^{(1)}u_X^{(2)}=w_1'w_2'\), one of \(w_1'\preceq u_X^{(1)}\) or
\(u_X^{(1)}\preceq w_1'\) holds (possibly both, when \(w_1'=u_X^{(1)}\)).
Since both \(u_X^{(1)}\) and \(w_1'\) lie in \(\ltr{X}{\proj{X}{Q_1}}\),
prefix-freeness forces \(w_1'=u_X^{(1)}\), hence
\(u_X^{(2)}=w_2'\in\ltr{X}{\proj{X}{Q_2}}\).

\smallskip

Write $U^{(1)}=(u_X^{(1)})_X$ and $W=(u_X^{(2)}v_X)_X$.
Apply IH to \(Q_1\) and
$M=(u_X^{(1)}(u_X^{(2)}v_X))_X$.
This yields \(\mathsf{ZipPost}_{Q_1}(U^{(1)},\bar U^{(1)},W)\).
In particular,
$W$ is an MSC by conditions~2 and~4 of \(\mathsf{ZipPost}_{Q_1}\) and
Lemma~\ref{lem:strip-complete-prefix}.
For the second IH application, assume $v_X\neq\varepsilon$. Then
$u_X^{(2)}\in\ltr{X}{\proj{X}{Q_2}}$ by the second side condition established above.
So the side condition for IH on \(Q_2\) holds.
Apply IH to \(Q_2\) and $W$.
This yields \(\mathsf{ZipPost}_{Q_2}(U^{(2)},\bar U^{(2)},V)\).
Setting
\[
\bar u_X\df \bar u_X^{(1)}\bar u_X^{(2)}
\]
gives the required completion for $P$.
Let \(\bar U\df(\bar u_X)_X\).
Completeness of $\bar U$ follows from completeness of the factors and
Lemma~\ref{lem:concat-msc}.
Moreover, since \(\bar U^{(1)}\) is a complete MSC and
\((\bar u_X^{(2)}v_X)_X\) is an MSC, Lemma~\ref{lem:concat-msc} yields that
\((\bar u_Xv_X)_X\) is an MSC. Therefore for every \(X\), \(u_X\preceq\bar u_X\):
the second IH gives \(u_X^{(2)}\preceq\bar u_X^{(2)}\) unconditionally;
when \(u_X^{(2)}v_X\neq\varepsilon\), the first IH additionally gives
\(\bar u_X^{(1)}=u_X^{(1)}\), so
\(u_X=u_X^{(1)}u_X^{(2)}\preceq u_X^{(1)}\bar u_X^{(2)}=\bar u_X^{(1)}\bar u_X^{(2)}=\bar u_X\);
when \(u_X^{(2)}=\varepsilon\) and \(v_X=\varepsilon\), the first IH gives
\(u_X=u_X^{(1)}\preceq\bar u_X^{(1)}\preceq\bar u_X\).
Moreover, if \(v_X\neq\varepsilon\), then
\(\bar u_X^{(2)}=u_X^{(2)}\) (second IH) and \(u_X^{(2)}v_X\neq\varepsilon\),
so \(\bar u_X^{(1)}=u_X^{(1)}\) (first IH), hence \(\bar u_X=u_X\). Thus
Condition~1 of \(\mathsf{ZipPost}_{Q_1;Q_2}(U,\bar U,V)\) holds, and so does
\(\mathsf{ZipPost}_{Q_1;Q_2}(U,\bar U,V)\).
\qed

\subsection{Auxiliary Consequences of \textsf{ZipPost}}

\begin{lemma}[\leanproof{https://github.com/zippergen-io/zippergen-lean/blob/isola-camera-ready/isola/MSCAgents/ZipperPost.lean\#L262}]
\label{lem:U-is-MSC}
If \(\mathsf{ZipPost}_{P}(U,\bar U,V)\) holds, then
\(U\) is an MSC.
\end{lemma}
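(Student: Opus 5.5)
The plan is to show that \(U\) is a componentwise prefix of the MSC \(M=(u_Xv_X)_X\) (condition~3 of \(\mathsf{ZipPost}_P\)), and to check the three MSC conditions of Definition~\ref{def:msc} for \(U\) directly. The event set \(E_U\) sits inside \(E_M\) as a down-closed set on each lifeline. Because FIFO matching is computed by channel indices, the \(i\)th receive on a channel \(Y\to X\) in \(u_X\) is also the \(i\)th receive in \(u_Xv_X\). Likewise, the \(i\)th send in \(u_Y\) is the \(i\)th send in \(u_Yv_Y\). Hence every pair of \(\lhd_U\) is a pair of \(\lhd_M\). This yields payload matching (condition~2) immediately from \(M\). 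It also yields acyclicity (condition~3): local-successor edges of \(U\) are local-successor edges of \(M\), so every causal cycle of \(U\) would be a cycle of \(M\).

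The only real work is condition~1: every receive in \(U\) must be matched \emph{inside} \(U\). The danger is that in \(M\) such a receive is matched to a send in the suffix \(v_Y\) rather than in \(u_Y\). Fix a receive that is the \(i\)th receive on channel \(Y\to X\) in \(u_X\). Since \(M\) is an MSC, \(u_Yv_Y\) contains at least \(i\) sends on \(Y\to X\). It suffices to show that \(u_Y\) already contains at least \(i\) of them. I split on \(v_Y\).

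If \(v_Y=\varepsilon\), then \(u_Yv_Y=u_Y\) and we are done.

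If \(v_Y\neq\varepsilon\), condition~1 of \(\mathsf{ZipPost}_P\) gives \(\bar u_Y=u_Y\). Since \(u_X\preceq\bar u_X\), the word \(\bar u_X\) contains at least \(i\) receives on \(Y\to X\). By condition~2, \(\bar U\) is a complete MSC, so on each channel every receive is matched to a send and every send to a receive. In particular, the number of sends on \(Y\to X\) in \(\bar u_Y=u_Y\) equals the number of receives in \(\bar u_X\), which is at least \(i\). So the \(i\)th send lies in \(u_Y\), and the receive is matched within \(U\).

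The main obstacle, and really the whole content, is this second case. The completed tuple \(\bar U\) is used as a witness that the sender lifeline already produced enough sends. The key point is the side condition \(v_Y\neq\varepsilon\Rightarrow \bar u_Y=u_Y\), which prevents \(\bar U\) from having added new sends on \(Y\) beyond \(u_Y\). Conditions~1 and~2 of \(\mathsf{ZipPost}_P\) supply exactly what is needed here, and condition~4 is not required.
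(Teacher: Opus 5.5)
Your proof is correct and follows essentially the same argument as the paper: payload matching and acyclicity are inherited from $M=(u_Xv_X)_X$ by restriction, and receive-matching comes from completeness of $\bar U$ together with the side condition $v_Y\neq\varepsilon\Rightarrow\bar u_Y=u_Y$. The only difference is presentation: you split directly on whether $v_Y$ is empty, whereas the paper argues by contradiction (an excess receive forces $\bar u_Y\neq u_Y$, hence $v_Y=\varepsilon$, hence an unmatched receive in $M$).
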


\begin{proof}[Proof of Lemma~\ref{lem:U-is-MSC}]
Acyclicity and label compatibility are inherited from
\((u_Xv_X)_X\) (an MSC by Condition~3 of \(\mathsf{ZipPost}_P(U,\bar U,V)\))
by restriction to prefix events.

For the receive-matching condition, suppose that on some channel \(A\to B\),
the prefix \(u_B\) contains more receives than \(u_A\) contains sends. Since
\(\bar U\) is a complete MSC (Condition~2), the completion \(\bar u_A\) must
contain an additional send on that channel, hence \(\bar u_A\neq u_A\). By
Condition~1, this implies \(v_A=\varepsilon\), so the full word for \(A\) in
\((u_Xv_X)_X\) is just \(u_A\). But then \((u_Xv_X)_X\) has an unmatched
receive on channel \(A\to B\), contradicting Condition~3. Hence \(U\) has no
unmatched receives and is therefore an MSC.
\qed
\end{proof}

\begin{lemma}[\leanproof{https://github.com/zippergen-io/zippergen-lean/blob/isola-camera-ready/isola/MSCAgents/Correctness.lean\#L2514}]
\label{lem:complete-locals-complete-msc}
Let \(P\) be a well-typed global workflow, and let
\(M=(u_X)_{X\in\mathscr L}\) be a tuple such that
\[
u_X\in\ltr{X}{\proj{X}{P}}
\qquad\text{for every }X\in\mathscr L.
\]
If \(M\) is an MSC, then \(M\) is a complete MSC.
\end{lemma}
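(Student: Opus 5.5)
The plan is to reduce the statement to the Zipper lemma with an empty suffix. Take $U\df M=(u_X)_X$ and $V\df(\varepsilon)_X$. The hypotheses of Lemma~\ref{lem:uniform-zipper} are then immediate. Each $u_X\in\ltr{X}{\proj{X}{P}}\subseteq\lpref{X}{\proj{X}{P}}$. The side condition $v_X\neq\varepsilon\Rightarrow u_X\in\ltr{X}{\proj{X}{P}}$ holds vacuously, and in any case trivially. Finally, $(u_Xv_X)_X=M$ is an MSC by assumption. The Zipper lemma therefore yields a tuple $\bar U=(\bar u_X)_X$ with $\mathsf{ZipPost}_P(U,\bar U,V)$.

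Next we show $\bar U=M$. Condition~1 of $\mathsf{ZipPost}$ gives $u_X\preceq\bar u_X$ and $\bar u_X\in\ltr{X}{\proj{X}{P}}$ for every $X$. We have two words $u_X,\bar u_X$ in the same complete local trace language, and they are prefix-comparable. Lemma~\ref{lem:local-prefix-free} says this language is prefix-free, so $u_X=\bar u_X$ for every $X$. Hence $M=\bar U$, and Condition~2 of $\mathsf{ZipPost}$ states that $\bar U$ is a complete MSC. This proves the claim.

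The only point needing care is the use of prefix-freeness. Condition~1 of $\mathsf{ZipPost}$ does not itself force $\bar u_X=u_X$ when $v_X=\varepsilon$, since the completion may in principle extend $u_X$. What rules out a proper extension is exactly Lemma~\ref{lem:local-prefix-free} applied to $\proj{X}{P}$. Beyond that, the argument is a direct instantiation of Lemma~\ref{lem:uniform-zipper}, so I expect no real obstacle.

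As a fallback, one could avoid the Zipper lemma and argue by structural induction on $P$, matching sends and receives channel by channel. In the sequential case this would run into exactly the boundary-alignment difficulty that the Zipper lemma already resolves, so the reduction above is the preferred route.
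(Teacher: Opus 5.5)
Your proposal is correct and follows the paper's proof essentially step for step. You instantiate Lemma~\ref{lem:uniform-zipper} with $V=(\varepsilon)_X$, use prefix-freeness (Lemma~\ref{lem:local-prefix-free}) to force $\bar u_X=u_X$, and read off completeness from Condition~2 of $\mathsf{ZipPost}_P$.
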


\begin{proof}[Proof of Lemma~\ref{lem:complete-locals-complete-msc}]
Apply Lemma~\ref{lem:uniform-zipper} with \(U=(u_X)_X\) and
\(V=(\varepsilon)_X\). Since each \(u_X\in\ltr{X}{\proj{X}{P}}\), one has in
particular \(u_X\in\lpref{X}{\proj{X}{P}}\), and because every \(v_X\) is
\(\varepsilon\), the zipper side condition is vacuous. As \(M=(u_X)_X\) is an
MSC by hypothesis, Lemma~\ref{lem:uniform-zipper} yields a tuple
\(\bar U=(\bar u_X)_X\) such that \(\mathsf{ZipPost}_{P}(U,\bar U,V)\) holds.

For every lifeline \(X\), Condition~1 of \(\mathsf{ZipPost}_P(U,\bar U,V)\) gives
\[
u_X\preceq \bar u_X
\qquad\text{and}\qquad
\bar u_X\in\ltr{X}{\proj{X}{P}}.
\]
Since complete local trace languages are prefix-free by
Lemma~\ref{lem:local-prefix-free}, it follows that \(u_X=\bar u_X\) for every
\(X\). Hence \(U=\bar U\). Now Condition~2 of
\(\mathsf{ZipPost}_P(U,\bar U,V)\) states that \(\bar U\) is a complete MSC.
Therefore we have that \(M=U\) is a complete MSC as well.
\qed
\end{proof}

\begin{corollary}[\leanproof{https://github.com/zippergen-io/zippergen-lean/blob/isola-camera-ready/isola/MSCAgents/ZipperPost.lean\#L342}]
\label{cor:U-complete}
If \(\mathsf{ZipPost}_{P}(U,\bar U,V)\) holds and
\(u_X\in\ltr{X}{\proj{X}{P}}\) for every lifeline \(X\), then \(U\) is a
complete MSC.
\end{corollary}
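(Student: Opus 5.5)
This is essentially a combination of the two preceding lemmas. From \(\mathsf{ZipPost}_{P}(U,\bar U,V)\), Lemma~\ref{lem:U-is-MSC} already gives that \(U\) is an MSC. So it only remains to show completeness. For that I plan to use prefix-freeness directly rather than re-invoking the Zipper lemma.

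Condition~1 of \(\mathsf{ZipPost}_P(U,\bar U,V)\) gives \(u_X\preceq\bar u_X\) and \(\bar u_X\in\ltr{X}{\proj{X}{P}}\) for every lifeline \(X\). By hypothesis we also have \(u_X\in\ltr{X}{\proj{X}{P}}\). Thus \(u_X\) and \(\bar u_X\) are two prefix-comparable words of the same complete local trace language. Lemma~\ref{lem:local-prefix-free} then forces \(u_X=\bar u_X\). This holds for every \(X\), so \(U=\bar U\), and Condition~2 states that \(\bar U\) is a complete MSC. Hence \(U\) is a complete MSC.

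Alternatively, once Lemma~\ref{lem:U-is-MSC} shows that \(U\) is an MSC, Lemma~\ref{lem:complete-locals-complete-msc} applies verbatim: every \(u_X\) is a complete local trace of \(\proj{X}{P}\). That lemma then yields completeness at once.

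There is no real obstacle. The only point to check is that prefix-freeness is applied within the single language \(\ltr{X}{\proj{X}{P}}\), which Condition~1 guarantees. Note also that the direct route does not need Lemma~\ref{lem:U-is-MSC} at all, since the MSC property of \(U\) follows from \(U=\bar U\).
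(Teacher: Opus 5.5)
Your proof is correct, and your ``alternative'' route is exactly the paper's proof: Lemma~\ref{lem:U-is-MSC} for MSC-ness, then Lemma~\ref{lem:complete-locals-complete-msc}. Your main route is genuinely different and shorter. The paper's Lemma~\ref{lem:complete-locals-complete-msc} builds a fresh completion by calling the Zipper lemma (Lemma~\ref{lem:uniform-zipper}) with empty suffixes, and only then applies prefix-freeness. Because the Zipper lemma requires its input tuple to be an MSC, the paper must first invoke Lemma~\ref{lem:U-is-MSC}. You instead apply prefix-freeness directly to the completion \(\bar U\) that the \(\mathsf{ZipPost}\) hypothesis already supplies. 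Your argument therefore uses only Conditions~1 and~2 together with Lemma~\ref{lem:local-prefix-free}, and it makes both Lemma~\ref{lem:U-is-MSC} and the Zipper lemma unnecessary here. As you note, MSC-ness of \(U\) then comes for free from \(U=\bar U\). The paper's factoring does buy a standalone fact, Lemma~\ref{lem:complete-locals-complete-msc}: any MSC whose components are complete local traces of the projection is complete, with no \(\mathsf{ZipPost}\) witness needed. For this corollary, though, your direct argument is the more economical one.
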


\begin{proof}[Proof of Corollary~\ref{cor:U-complete}]
By Lemma~\ref{lem:U-is-MSC}, the tuple \(U=(u_X)_X\) is an MSC. Since
\(u_X\in\ltr{X}{\proj{X}{P}}\) for every lifeline \(X\), Lemma~\ref{lem:complete-locals-complete-msc}
implies that \(U\) is a complete MSC.
\qed
\end{proof}

\subsection{Control Decomposition}

We use the decision-prefix notation \(D^\nu=(d_X^\nu)_X\) introduced in the
proof of Lemma~\ref{lem:uniform-zipper}.

\begin{lemma}[Control Decomposition \leanproof{https://github.com/zippergen-io/zippergen-lean/blob/isola-camera-ready/isola/MSCAgents/Correctness.lean\#L443}]
\label{lem:control-decompose}
Let \(P\) be a well-typed global workflow.

\begin{enumerate}
\item If
\[
P=\mathtt{if}\ c@B\ \mathtt{then}\ Q_\top\ \mathtt{else}\ Q_\bot
\]
and \(\hat M\in\sem{\mathcal D_P}\), then there exist a unique
\(\nu\in\{\top,\bot\}\) and an MSC \(\hat M_\nu\in\sem{\mathcal D_{Q_\nu}}\)
such that
\[
\hat M = D^\nu \circ \hat M_\nu.
\]

\item If
\[
P=\mathtt{while}\ c@B\ \mathtt{do}\ Q\ \mathtt{exit}\ E
\]
and \(\hat M\in\sem{\mathcal D_P}\), then there exist \(k\ge 0\),
\(\hat M_1,\ldots,\hat M_k\in\sem{\mathcal D_Q}\), and
\(\hat M_{\mathit{ex}}\in\sem{\mathcal D_E}\) such that
\[
\hat M =
\bigl(D^\top\circ \hat M_1\bigr)\circ\cdots\circ
\bigl(D^\top\circ \hat M_k\bigr)\circ
\bigl(D^\bot\circ \hat M_{\mathit{ex}}\bigr).
\]
\end{enumerate}
\end{lemma}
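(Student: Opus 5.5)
The plan is to unfold the local trace semantics of the projected programs lifeline by lifeline, fix the branch value from the owner's trace, and then use Lemma~\ref{lem:strip-decision-prefix} and Lemma~\ref{lem:complete-locals-complete-msc} to show that the remainder is a complete MSC of the projected subprogram.

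\emph{If-case.} Write \(\hat M=(w_X)_X\) with \(w_X\in\ltr{X}{\proj{X}{P}}\). The owner's trace \(w_B\) starts with exactly one of the choice letters \(\IfT{c}{B}\) or \(\IfF{c}{B}\), and this determines \(\nu\). Uniqueness is immediate from this first letter. By the projection rule, \(w_B=d_B^\nu w'_B\) with \(w'_B\in\ltr{B}{\proj{B}{Q_\nu}}\). For a lifeline \(X\notin\mathcal R_{\mathit{if}}\cup\{B\}\) we have \(\proj{X}{P}=\varepsilon=\proj{X}{Q_\nu}\), so \(d_X^\nu=\varepsilon\) and \(w'_X=w_X\).

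For a recipient \(X\in\mathcal R_{\mathit{if}}\), the trace \(w_X\) begins with a control receive \(\mathtt{recv}\ X(\nu',\kappa_{\mathsf{ctrl}}^P)\leftarrow B\). By receive-matching in the MSC \(\hat M\), this receive is FIFO-matched to the first send on channel \(B\to X\). That first send must be the control send in \(d_B^\nu\), since it precedes every other send of \(B\) to \(X\). Payload matching on the constant \(\nu\) then forces \(\nu'=\nu\). Hence \(w_X=d_X^\nu w'_X\) with \(w'_X\in\ltr{X}{\proj{X}{Q_\nu}}\).

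Applying Lemma~\ref{lem:strip-decision-prefix} (with the second alternative on every lifeline), the tuple \(\hat M_\nu\df(w'_X)_X\) is an MSC. Every component of \(\hat M_\nu\) is a complete local trace of \(\proj{X}{Q_\nu}\), so Lemma~\ref{lem:complete-locals-complete-msc} shows it is complete. Therefore \(\hat M_\nu\in\sem{\mathcal D_{Q_\nu}}\), and \(\hat M=D^\nu\circ\hat M_\nu\) holds by construction.

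\emph{While-case.} Here I would induct on the length of the owner's word \(w_B\), which is where the paper's Lean proof puts its secondary induction. Again the first letter of \(w_B\) fixes \(\nu_1\), and the same FIFO and payload argument aligns every recipient's first control receive with it. So each \(w_X=d_X^{\nu_1}w'_X\), and Lemma~\ref{lem:strip-decision-prefix} together with Lemma~\ref{lem:complete-locals-complete-msc} make \((w'_X)_X\) a complete MSC.

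\emph{Subcase \(\nu_1=\bot\).} Each \(w'_X\in\ltr{X}{\proj{X}{E}}\), so \(k=0\) and \(\hat M_{\mathit{ex}}\df(w'_X)_X\).

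\emph{Subcase \(\nu_1=\top\).} Each \(w'_X\) factors uniquely as \(b_Xs_X\) with \(b_X\in\ltr{X}{\proj{X}{Q}}\) and \(s_X\in\ltr{X}{\proj{X}{P}}\); uniqueness follows from prefix-freeness (Lemma~\ref{lem:local-prefix-free}). I need \((b_X)_X\) to be an MSC. This comes from the Zipper lemma (Lemma~\ref{lem:uniform-zipper}) applied to \(Q\) with \(U=(b_X)_X\) and \(V=(s_X)_X\). The side condition holds because every \(b_X\) is already a complete local trace of \(\proj{X}{Q}\). Lemma~\ref{lem:U-is-MSC} then makes \(U\) an MSC, and Corollary~\ref{cor:U-complete} makes it complete, so \(\hat M_1\df(b_X)_X\in\sem{\mathcal D_Q}\). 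Lemma~\ref{lem:strip-complete-prefix} then shows that \((s_X)_X\) is a complete MSC, so it lies in \(\sem{\mathcal D_P}\). Its owner component \(s_B\) is strictly shorter than \(w_B\), so the induction hypothesis decomposes it, and prepending \(D^\top\circ\hat M_1\) gives the claimed form.

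\emph{Main obstacle.} The delicate step is the alignment argument: showing that each recipient's leading control receive really matches the decision block of the current construct, with the same value. It relies on the send being first on its channel within \(w_B\) after the choice letter, and on payload matching distinguishing \(\top\) from \(\bot\) via the constant. In the while-case this must be redone at every iteration, which is exactly why the induction runs on \(|w_B|\) after the completed prefix \(D^\top\circ\hat M_1\) has been stripped.
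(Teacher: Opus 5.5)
Your proof is correct and follows the paper's overall structure. The owner's first letter fixes $\nu$, and FIFO matching plus payload matching on the constant $\nu$ align each recipient's leading control receive. In the while case you peel one $D^\top$ block, extract the body via Lemma~\ref{lem:uniform-zipper} and Corollary~\ref{cor:U-complete}, strip it with Lemma~\ref{lem:strip-complete-prefix}, and recurse.

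Two steps differ from the paper. First, consider how you show that the remainder after the decision block is a complete MSC. The paper observes that $D^\nu$ is itself a complete MSC and applies Lemma~\ref{lem:strip-complete-prefix} to $\hat M=D^\nu\circ\hat M_\nu$, which is only a channel-counting argument. You instead use Lemma~\ref{lem:strip-decision-prefix}, which was designed for possibly partial decision blocks, and then recover completeness through Lemma~\ref{lem:complete-locals-complete-msc}, which rests on the Zipper lemma. This is sound: the Zipper lemma does not use the present lemma, so there is no circularity. It also works in the $\top$-subcase, since $Q;P$ is a legitimate well-typed workflow with $\proj{X}{Q;P}=\proj{X}{Q};\proj{X}{P}$. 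However, it makes even the if-case depend on the Zipper lemma, whereas the paper's if-case is elementary.

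Second, you induct on $|w_B|$ rather than on the number $k$ of $\top$-blocks. Your measure avoids arguing that the block decomposition of the owner's trace is unique, which the paper gets from prefix-freeness, because $|s_B|<|w_B|$ follows at once from $d_B^\top\neq\varepsilon$. The paper's measure has the advantage of matching the $k$ in the conclusion directly.

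A minor point, which the paper shares: for idle lifelines, $\proj{X}{Q_\nu}$ is in general not syntactically $\varepsilon$. Only its trace language is $\{\varepsilon\}$, and that is all you need.
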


\begin{proof}[Proof of Lemma~\ref{lem:control-decompose}]
\emph{If-case.}
By Table~\ref{tab:projection-if}, for every lifeline \(X\) the complete local
trace language of \(\proj{X}{P}\) consists of words of the form
\(d_X^\nu\cdot t_X\) with \(\nu\in\{\top,\bot\}\) and
\(t_X\in\ltr{X}{\proj{X}{Q_\nu}}\).
Hence the \(X\)-component of \(\hat M\) starts with \(d_X^{\nu_X}\)
for some \(\nu_X\).
The value \(\nu_X\) is the same for every \(X\).
By Table~\ref{tab:projection-if}, the owner \(B\)'s complete local trace
begins with its choice event \(\gamma_B^\nu\) followed by the ordered control
broadcast of some \(\nu\in\{\top,\bot\}\) to all recipients.
By label compatibility of matched send/receive events (Definition~\ref{def:msc})
and FIFO ordering on each \(B\to X\) control channel, every recipient \(X\)'s
component of \(\hat M\) also begins with \(\nu\); idle agents have \(d_X^\nu=\varepsilon\).
Call this common value \(\nu\).

Write the \(X\)-component of \(\hat M\) as \(d_X^\nu\cdot t_X\)
(with \(t_X=\varepsilon\) for idle agents where \(d_X^\nu=\varepsilon\)) and
define \(\hat M_\nu\df(t_X)_X\), so \(\hat M=D^\nu\circ\hat M_\nu\).
Since \(D^\nu\) is a complete MSC by construction and \(\hat M\) is a complete
MSC, Lemma~\ref{lem:strip-complete-prefix} yields that \(\hat M_\nu\) is a
complete MSC.
Each component \(t_X\in\ltr{X}{\proj{X}{Q_\nu}}\), so
\(\hat M_\nu\in\sem{\mathcal D_{Q_\nu}}\).
Uniqueness of \(\hat M_\nu\) follows because concatenation with the fixed prefix \(D^\nu\) is injective: if \(D^\nu\circ N = D^\nu\circ N'\) then \(N=N'\).

\smallskip
\emph{While-case.}
We argue by induction on the number \(k\ge 0\) of \(\top\)-decision blocks
in the owner \(B\)'s component of \(\hat M\).
Here \(k\) is the number of outer-loop \(\top\)-iterations in the unique
decomposition of the owner's complete local trace into decision blocks,
complete \(Q\)-traces, and the final \(\bot\)-decision block followed by an
\(E\)-trace; uniqueness follows from prefix-freeness
(Lemma~\ref{lem:local-prefix-free}).

\emph{Base case \(k=0\).}
Since \(k=0\), the owner \(B\)'s component begins with \(d_B^\bot\) (a \(\bot\)-decision block).
By label compatibility (Definition~\ref{def:msc}) and FIFO ordering on each \(B\to X\) control
channel, every recipient \(X\)'s component of \(\hat M\) also begins with \(d_X^\bot\);
idle agents have \(d_X^\bot=\varepsilon\).
The tuple \(D^\bot=(d_X^\bot)_X\) is a complete MSC because the control sends on \(B\)
are matched to the corresponding control receives on each recipient by the FIFO matching of \(\hat M\).
By Table~\ref{tab:projection-while}, every complete local trace of
\(\proj{X}{P}\) that starts with a \(\bot\)-decision has the form
\(d_X^\bot\cdot e_X\) with \(e_X\in\ltr{X}{\proj{X}{E}}\).
Write the \(X\)-component of \(\hat M\) as \(d_X^\bot\cdot e_X\)
(with \(e_X=\varepsilon\) for idle agents where \(d_X^\bot=\varepsilon\)) and
define \(\hat M_{\mathit{ex}}\df(e_X)_X\), so \(\hat M=D^\bot\circ\hat M_{\mathit{ex}}\).
Since \(D^\bot\) is a complete MSC and \(\hat M\) is complete,
Lemma~\ref{lem:strip-complete-prefix} yields that \(\hat M_{\mathit{ex}}\) is
a complete MSC with each \(e_X\in\ltr{X}{\proj{X}{E}}\), hence
\(\hat M_{\mathit{ex}}\in\sem{\mathcal D_E}\).

\emph{Inductive step \(k>0\).}
Since \(k>0\), the owner \(B\)'s component begins with \(d_B^\top\) (a \(\top\)-decision block).
By label compatibility (Definition~\ref{def:msc}) and FIFO ordering on each \(B\to X\) control
channel, every recipient \(X\)'s component of \(\hat M\) also begins with \(d_X^\top\);
idle agents have \(d_X^\top=\varepsilon\).
The tuple \(D^\top=(d_X^\top)_X\) is a complete MSC by the same matching argument.
By Table~\ref{tab:projection-while}, every complete local trace of
\(\proj{X}{P}\) that starts with a \(\top\)-decision has the form
\(d_X^\top\cdot s_X\) with \(s_X\in\ltr{X}{\proj{X}{Q;P}}\): after the
\(\top\)-decision block the body \(Q\) executes completely, followed by the
recursive continuation \(P\).
Write the \(X\)-component of \(\hat M\) as \(d_X^\top\cdot s_X\)
(with \(s_X=\varepsilon\) for idle agents where \(d_X^\top=\varepsilon\)) and
define \(N\df(s_X)_X\), so \(\hat M=D^\top\circ N\).
Since \(D^\top\) is a complete MSC and \(\hat M\) is complete,
Lemma~\ref{lem:strip-complete-prefix} yields that \(N\) is a complete MSC
with \(s_X\in\ltr{X}{\proj{X}{Q;P}}\) for every \(X\).

Since \(s_X\in\ltr{X}{\proj{X}{Q;P}}\), the semantics of sequential
composition gives a factorization \(s_X=b_Xw_X\) with
\(b_X\in\ltr{X}{\proj{X}{Q}}\) and \(w_X\in\ltr{X}{\proj{X}{P}}\).
The factorization is unique by prefix-freeness of \(\ltr{X}{\proj{X}{Q}}\)
(Lemma~\ref{lem:local-prefix-free}).

Apply Lemma~\ref{lem:uniform-zipper} to \(Q\) with \(U=(b_X)_X\) and
\(V=(w_X)_X\).
The prefix condition \(b_X\in\lpref{X}{\proj{X}{Q}}\) holds since
\(b_X\in\ltr{X}{\proj{X}{Q}}\subseteq\lpref{X}{\proj{X}{Q}}\).
The zipper side condition
\(w_X\neq\varepsilon\Longrightarrow b_X\in\ltr{X}{\proj{X}{Q}}\)
holds trivially since \(b_X\in\ltr{X}{\proj{X}{Q}}\) for every \(X\).
The tuple \((b_Xw_X)_X=N\) is an MSC by the above.
The lemma yields \(\bar U\) with \(\mathsf{ZipPost}_Q(U,\bar U,V)\).
Since \(b_X\in\ltr{X}{\proj{X}{Q}}\) for every \(X\),
Corollary~\ref{cor:U-complete} gives that \(\hat M_1\df(b_X)_X\) is a complete MSC,
hence \(\hat M_1\in\sem{\mathcal D_Q}\).

Since \(N=\hat M_1\circ W\) with \(W\df(w_X)_X\) and both \(N\) and \(\hat M_1\) complete
MSCs, Lemma~\ref{lem:strip-complete-prefix} yields that \(W\) is a complete
MSC.
For each $X$, \(w_X\in\ltr{X}{\proj{X}{P}}\), so \(W\in\sem{\mathcal D_P}\).
The owner's component in \(W\) contains exactly \(k-1\) \(\top\)-decision
blocks: the unique decomposition of \(\hat M\)'s owner component consists of
the first block \(d_B^\top\), the complete body trace
\(b_B\in\ltr{B}{\proj{B}{Q}}\), followed by the unique decomposition of
\(w_B\), which therefore contains exactly \(k-1\) outer \(\top\)-decision
blocks.
Applying the induction hypothesis to \(W\) yields
\[
W=\bigl(D^\top\circ\hat M_2\bigr)\circ\cdots\circ
\bigl(D^\top\circ\hat M_k\bigr)\circ\bigl(D^\bot\circ\hat M_{\mathit{ex}}\bigr).
\]
Prepending \(D^\top\circ\hat M_1\) gives
\[
\hat M=D^\top\circ N=D^\top\circ\hat M_1\circ W
=\bigl(D^\top\circ\hat M_1\bigr)\circ\cdots\circ\bigl(D^\top\circ\hat M_k\bigr)\circ\bigl(D^\bot\circ\hat M_{\mathit{ex}}\bigr),
\]
which is the desired factorization.
\qed
\end{proof}

\subsection{Proof of Theorem~\ref{thm:correctness}}
Fix a well-typed global workflow $P$ over lifelines $\mathscr L$.
We prove the two items.

\medskip
\noindent
\textbf{(1) Completeness of realization.}
We show: for every $M\in\sem{P}$ there exists $\hat M\in\sem{\mathcal D_P}$
with $\mathsf{erase}(\hat M)=M$.
Proceed by structural induction on $P$.

\smallskip
\emph{Base cases: $\varepsilon$, $\mathtt{msg}$, $\mathtt{act}$.}
If $P=\varepsilon$, take $\hat M=M_{\varepsilon}$.
If $P$ is $\mathtt{msg}$/$\mathtt{act}$, then
$\proj{X}{P}$ produces exactly the corresponding local letters on the
endpoints/owner and $\varepsilon$ elsewhere, so the unique global one-step MSC
(for that statement) is realized directly.
No control letters occur, and thus $\mathsf{erase}$ is the identity.

\smallskip
\emph{If-construct.}
Let $P=\mathtt{if}\ c@B\ \mathtt{then}\ Q_\top\ \mathtt{else}\ Q_\bot$ and
$M\in\sem{P}$.
Then either $M=M_{\IfT{c}{B}}^{B}\circ M_\top$ with $M_\top\in\sem{Q_\top}$
or $M=M_{\IfF{c}{B}}^{B}\circ M_\bot$ with $M_\bot\in\sem{Q_\bot}$.
Assume the $\top$-case (the other case is analogous).
By IH applied to $Q_\top$, obtain $\hat M_\top\in\sem{\mathcal D_{Q_\top}}$
with $\mathsf{erase}(\hat M_\top)=M_\top$.
Let $\hat M\df D^\top\circ \hat M_\top$, where $D^\top$ is the complete
decision block consisting of the owner-side choice event on $B$, the
projection-introduced control sends on $B$, and the matching control receives
on all recipients \(A\in(\LS{Q_\top}\cup\LS{Q_\bot})\setminus\{B\}\).
The prepended decision block is a complete MSC, and \(\hat M_\top\) is a
complete MSC by IH because it lies in \(\sem{\mathcal D_{Q_\top}}\). Hence
Lemma~\ref{lem:concat-msc} yields that \(\hat M\) is a complete MSC.
Therefore \(\hat M\in\sem{\mathcal D_P}\) by definition of
the projections \(\proj{X}{P}\)
and the local trace semantics, and erasing the control letters yields exactly
$M_{\IfT{c}{B}}^{B}\circ M_\top=M$.

\smallskip
\emph{While-construct.}
Let $P=\mathtt{while}\ c@B\ \mathtt{do}\ Q\ \mathtt{exit}\ E$
and $M\in\sem{P}$.
Then $M$ has the shape
\[
M = \left(M_{\WhT{c}{B}}^{B}\circ M_1\right)\circ\cdots\circ
    \left(M_{\WhT{c}{B}}^{B}\circ M_k\right)\circ
    \left(M_{\WhF{c}{B}}^{B}\circ M_{\mathit{ex}}\right)
\]
for some $k\ge 0$, where each $M_i\in\sem{Q}$ and
$M_{\mathit{ex}}\in\sem{E}$.
Apply IH to each $M_i$ and $M_{\mathit{ex}}$ to obtain realizations
$\hat M_i\in\sem{\mathcal D_{Q}}$ and
$\hat M_{\mathit{ex}}\in\sem{\mathcal D_{E}}$ that erase to the
corresponding global MSCs.
In each body iteration prepend, on $B$, the choice letter $\WhT{c}{B}$ followed
by the projection-introduced decision broadcasts to all $A\in\mathcal{R} = (\LS{Q}\cup \LS{E})\setminus\{B\}$.
For the exit prepend $\WhF{c}{B}$ on $B$ and the corresponding broadcasts.
(When $\mathcal{R}=\emptyset$ there are no broadcasts, but the choice letter is
still prepended on $B$.)
Concatenate the resulting MSCs.
Each prepended decision block is a complete MSC, and each realized body/exit
fragment is a complete MSC by IH. Repeated application of
Lemma~\ref{lem:concat-msc} therefore shows that the concatenation is a
complete MSC. By construction this yields $\hat M\in\sem{\mathcal D_P}$.
Erasure removes the newly inserted $\kappa_{\mathsf{ctrl}}^P$-broadcasts and,
by the induction hypotheses, maps each realized body and exit fragment to its
global counterpart, yielding exactly $M$.

\smallskip
\emph{Composition.}
Let $P=P_1; P_2$ and $M\in\sem{P}$.
Then $M=M_1\circ M_2$ with $M_1\in\sem{P_1}$ and $M_2\in\sem{P_2}$.
By IH, obtain $\hat M_1\in\sem{\mathcal D_{P_1}}$ and
$\hat M_2\in\sem{\mathcal D_{P_2}}$ such that $\mathsf{erase}(\hat M_i)=M_i$.
Let $\hat M \df \hat M_1 \circ \hat M_2$.
Since \(\hat M_1\) and \(\hat M_2\) are complete MSCs, Lemma~\ref{lem:concat-msc}
shows that \(\hat M\) is again a complete MSC. By the definition of $;$ in the
MSC semantics, $\hat M\in\sem{\mathcal D_{P_1; P_2}}=\sem{\mathcal D_P}$.
Moreover $\mathsf{erase}(\hat M)=M$.

\smallskip

This concludes~(1).

\medskip
\noindent
\textbf{(2) Soundness of realization.}
We show: for every $\hat M\in\sem{\mathcal D_P}$,
$\mathsf{erase}(\hat M)\in\sem{P}$.
Proceed by structural induction on $P$.

\smallskip
\emph{Base cases and control cases.}
 For $P=\varepsilon$ and for $P$ equal to $\mathtt{msg}$/$\mathtt{act}$,
 projection introduces no control letters and
 $\mathsf{erase}$ is the identity. The result follows immediately from the
 correspondence between projection and the canonical one-step MSCs.
 
 For the if-case, let
 \[
 P=\mathtt{if}\ c@B\ \mathtt{then}\ Q_\top\ \mathtt{else}\ Q_\bot.
 \]
 By Lemma~\ref{lem:control-decompose}, every
 \(\hat M\in\sem{\mathcal D_P}\) has the form
 \[
 \hat M = D^\nu \circ \hat M_\nu
 \]
 for a unique \(\nu\in\{\top,\bot\}\) and some
 \(\hat M_\nu\in\sem{\mathcal D_{Q_\nu}}\).
 Within \(D^\nu\), erasure removes the control broadcasts, leaving
 exactly the corresponding global choice MSC
 \(M_{\IfT{c}{B}}^{B}\) or \(M_{\IfF{c}{B}}^{B}\).
 Since \(\hat M\in\sem{\mathcal D_P}\) is a complete MSC,
 Lemma~\ref{lem:erase-preserves} gives that \(\mathsf{erase}(\hat M)\) is a
 complete MSC.
 By IH, \(\mathsf{erase}(\hat M_\nu)\in\sem{Q_\nu}\). Hence
 \[
 \mathsf{erase}(\hat M)\in\sem{P}.
 \]

 For the while-case, let
 \[
 P=\mathtt{while}\ c@B\ \mathtt{do}\ Q\ \mathtt{exit}\ E.
 \]
 By Lemma~\ref{lem:control-decompose}, every
 \(\hat M\in\sem{\mathcal D_P}\) decomposes as
 \[
 \hat M =
 \bigl(D^\top\circ \hat M_1\bigr)\circ\cdots\circ
 \bigl(D^\top\circ \hat M_k\bigr)\circ
 \bigl(D^\bot\circ \hat M_{\mathit{ex}}\bigr)
 \]
 for some \(k\ge 0\), with each
 \(\hat M_i\in\sem{\mathcal D_{Q}}\) and
 \(\hat M_{\mathit{ex}}\in\sem{\mathcal D_{E}}\).
 Within the decision blocks \(D^\top,D^\bot\), erasure removes the control
 broadcasts, leaving the
 corresponding owner choice events
 \(M_{\WhT{c}{B}}^{B}\) and \(M_{\WhF{c}{B}}^{B}\).
 By IH,
 \[
 \mathsf{erase}(\hat M_i)\in\sem{Q}
 \qquad\text{and}\qquad
 \mathsf{erase}(\hat M_{\mathit{ex}})\in\sem{E}.
 \]
 Since \(\hat M\in\sem{\mathcal D_P}\) is a complete MSC,
 Lemma~\ref{lem:erase-preserves} gives that \(\mathsf{erase}(\hat M)\) is a
 complete MSC.
 Therefore \(\mathsf{erase}(\hat M)\) has exactly the global while-shape from
 Definition~\ref{def:inductive-msc}, so \(\mathsf{erase}(\hat M)\in\sem{P}\).

 \smallskip
 \emph{Composition.}
Let $P=P_1; P_2$ and $\hat M=(w_X)_X\in\sem{\mathcal D_P}$.
Then we have $w_X\in\ltr{X}{\proj{X}{P_1};\proj{X}{P_2}}$, so factor
$w_X=u_Xv_X$ with $u_X\in\ltr{X}{\proj{X}{P_1}}$ and
$v_X\in\ltr{X}{\proj{X}{P_2}}$.
Thus the zipper side condition
$v_X\neq\varepsilon \Longrightarrow u_X\in\ltr{X}{\proj{X}{P_1}}$
holds trivially.
Since $\hat M$ is complete, it is in particular an MSC.
Thus $\lhd_{\hat M}$ is defined and the causality order is acyclic
(Definition~\ref{def:msc}), so Lemma~\ref{lem:uniform-zipper} applies.
Applying Lemma~\ref{lem:uniform-zipper} to
$\hat M=(u_Xv_X)_X$ yields that $V=(v_X)_X$ is an MSC.
Also, $U=(u_X)_X$ is a complete MSC by Corollary~\ref{cor:U-complete}
(since each $u_X\in\ltr{X}{\proj{X}{P_1}}$), hence $U\in\sem{\mathcal D_{P_1}}$.
Since $U$ and $\hat M=U\circ V$ are both complete MSCs, Lemma~\ref{lem:strip-complete-prefix} yields that $V$ is a complete MSC.
Moreover $V\in\sem{\mathcal D_{P_2}}$ since each $v_X\in\ltr{X}{\proj{X}{P_2}}$ by the sequential composition semantics.
By IH, $\mathsf{erase}(U)\in\sem{P_1}$ and $\mathsf{erase}(V)\in\sem{P_2}$.
Finally, $\mathsf{erase}(\hat M)=\mathsf{erase}(U)\circ\mathsf{erase}(V)$, hence
we have $\mathsf{erase}(\hat M)\in\sem{P_1; P_2}=\sem{P}$.

\smallskip

This concludes (2).
\qed

\subsection{Proof of Corollary~\ref{cor:deadlock-free}}
Let \(M=(u_X)_X\in\sempref{\mathcal D_P}\). By definition of prefix semantics,
\(u_X\in\lpref{X}{\proj{X}{P}}\) for every \(X\in\mathscr L\), and \(M\) is an MSC.

Apply Lemma~\ref{lem:uniform-zipper} with \(U\df(u_X)_X\) and
\(V\df(\varepsilon)_X\).
The zipper side condition is vacuous since \(v_X=\varepsilon\) for all \(X\).
Hence the lemma yields \(\bar U=(\bar u_X)_X\) with
\(\mathsf{ZipPost}_P(U,\bar U,V)\).

By Condition~1, \(u_X\preceq\bar u_X\) and
\(\bar u_X\in\ltr{X}{\proj{X}{P}}\) for every \(X\).
By Condition~2, \(\bar U\) is a complete MSC, hence \(\bar U\in\sem{\mathcal D_P}\), and
\[
M=(u_X)_X \preceq (\bar u_X)_X=\bar U.
\]
This is exactly Definition~\ref{def:deadlock-prefix}.
\qed

\end{document}